\providecommand{\algorithmname}{Algorithm}
\newcommand{\lyxaddress}[1]{
\par {\raggedright #1
\vspace{1.4em}
\noindent\par}
}
 \theoremstyle{definition}
 \newtheorem*{defn*}{\protect\definitionname}
  \theoremstyle{definition}
  \newtheorem{problem}{\protect\problemname}
  \theoremstyle{plain}
  \newtheorem{prop}{\protect\propositionname}
  \theoremstyle{plain}
  \newtheorem{cor}{\protect\corollaryname}
  \theoremstyle{remark}
  \newtheorem*{rem*}{\protect\remarkname}
  \theoremstyle{plain}
  \newtheorem{lem}{\protect\lemmaname}
  \providecommand{\definitionname}{Definition}
  \providecommand{\lemmaname}{Lemma}
  \providecommand{\problemname}{Problem}
  \providecommand{\propositionname}{Proposition}
  \providecommand{\remarkname}{Remark}
\providecommand{\corollaryname}{Corollary}
\begin{document}

\title{Perfect simulation using atomic regeneration with application to
Sequential Monte Carlo}

\author{Anthony Lee$^{*}$, Arnaud Doucet$^{\dagger}$ and Krzysztof {\L}atuszy{\'n}ski$^{*}$}

\maketitle

\lyxaddress{$^{*}$Department of Statistics, University of Warwick, UK }

\lyxaddress{$^{\dagger}$Department of Statistics, University of Oxford, UK.}
\begin{abstract}
Consider an irreducible, Harris recurrent Markov chain of transition
kernel $\Pi$ and invariant probability measure $\pi$. If $\Pi$
satisfies a minorization condition, then the split chain allows the
identification of regeneration times which may be exploited to obtain
perfect samples from $\pi$. Unfortunately, many transition kernels
associated with complex Markov chain Monte Carlo algorithms are analytically
intractable, so establishing a minorization condition and simulating
the split chain is challenging, if not impossible. For uniformly ergodic
Markov chains with intractable transition kernels, we propose two
efficient perfect simulation procedures of similar expected running
time which are instances of the multigamma coupler and an imputation
scheme. These algorithms overcome the intractability of the kernel
by introducing an artificial atom and using a Bernoulli factory. We
detail an application of these procedures when $\Pi$ is the recently
introduced iterated conditional Sequential Monte Carlo kernel. We
additionally provide results on the general applicability of the methodology,
and how Sequential Monte Carlo methods may be used to facilitate perfect
simulation and/or unbiased estimation of expectations with respect
to the stationary distribution of a non-uniformly ergodic Markov chain.
\end{abstract}
{\small{}Keywords: Artificial atom; Bernoulli factory; Markov chain
Monte Carlo; Perfect simulation; Regeneration; Sequential Monte Carlo.}{\small \par}

\section{Introduction\label{sec:Intro}}

Given a target probability measure $\pi$ on a general state space
$\left(\mathsf{X},\mathcal{B}(\mathsf{X})\right)$, the key idea of
Markov chain Monte Carlo (MCMC) is to simulate a $\pi$-irreducible,
Harris recurrent Markov chain $\mathbf{X}:=(X_{n})_{n\geq1}$ with
$\pi$-invariant transition kernel $\Pi:\mathsf{X}\times\mathcal{B}(\mathsf{X})\rightarrow[0,1]$
to generate asymptotically samples from $\pi$. In practice, the Markov
chain is only simulated for a finite number of iterations and so we
do not obtain samples exactly distributed according to $\pi$. To
address this problem, much effort has been put into the development
of perfect simulation methods over the past twenty years. In particular,
the Coupling From The Past (CFTP) procedure of \citet{propp1996exact}
has enabled the development of perfect simulation algorithms for the
Ising model and various spatial point processes models \citep[see, e.g.,][]{mollerbook2004}.
The applications of CFTP to general state spaces remain limited as,
for implementation purposes, the MCMC kernel is typically required
to satisfy strong stochastic monotonicity properties \citep[see, e.g.,][]{propp1996exact,foss1998perfect}. 

We focus here on an alternative set of techniques based on regeneration.
Henceforth, the first assumption we will make is that $\Pi$ satisfies
a one-step minorization condition; that is we have for all $x\in\mathsf{X}$
\begin{equation}
\Pi(x,{\rm d}y)\geq s(x)\nu({\rm d}y),\label{eq:minorization_general}
\end{equation}
where $s:\mathsf{X}\rightarrow[0,1]$, $\nu$ is a given (regeneration)
probability measure and $\pi\left(s\right):=\int_{\mathsf{X}}s(x)\pi\left({\rm d}x\right)>0$.
In this context, the introduction of an associated ``split chain''
\citep{nummelin1978splitting,athreya1978new} allows the identification
of regeneration times and provides a mixture representation of $\pi$
\citep{asmussen1992stationarity,hobert2004mixture}.

This mixture representation was exploited by \citet{hobert2004mixture}
to obtain approximate samples from $\pi$, whereas \citet{blanchet2007exact}
and \citet{flegal2012exact} used it to obtain exact simulation algorithms
but their procedures have an infinite expected running time. In the
more restrictive scenario where one can take $s=\epsilon$ to be a
constant function with $\epsilon>0$, implying that $\Pi$ is uniformly
ergodic, the same mixture representation can be used to derive the
multigamma coupler of \citet{murdoch1998exact}, an exact simulation
algorithm with finite expected running time. Unfortunately, all of
these techniques are fairly restrictive in the sense that they require
being able to simulate the split chain, which can be difficult for
sophisticated MCMC kernels. For example, the iterated conditional
Sequential Monte Carlo (i-cSMC) kernel is an MCMC kernel introduced
in \citet{Andrieu2010} to sample from high-dimensional target distributions.
It has been established recently by \citet{chopin:singh:2013}, \citet{andrieu2013uniform}
and \citet{lindsten_pg} that this kernel satisfies the minorization
condition (\ref{eq:minorization_general}) with $s=\epsilon>0$ where
$\epsilon$ can be known explicitly, yet one is unable to simulate
the split chain associated to the i-cSMC kernel as this kernel does
not admit a tractable expression.

The main contribution of this paper is to develop two general-purpose
procedures for perfect simulation when the Markov transition kernel
$\Pi$ admits a singleton atom $\alpha=\{a\}$, for some $a\in\mathsf{X}$,
for which $\inf_{x\in\mathsf{X}}\Pi(x,\alpha)\geq\beta>0$ with $\beta$
known. This, together with the ability to simulate the Markov chain
$\mathbf{X}$, is the only requirement for perfect simulation to be
implemented. While this assumes that $\mathbf{X}$ is uniformly ergodic,
and knowledge of $\beta$ is non-trivial in general, it is a significant
relaxation of the conditions needed by other perfect simulation algorithms
on general state spaces. The key mechanism that allows the identification
of perfect samples is the use of a Bernoulli factory \citep{keane1994bernoulli}.
We show that for specific implementations, the expected number of
Markov chain transitions required to obtain a perfect sample is in
$\mathcal{O}(\beta^{-1})$. Since many Markov transition kernels used
in statistical applications do not admit a singleton atom, we overview
relevant strategies in \citet{brockwell2005identification} to modify
$\Pi$ to introduce an artificial singleton atom and obtain an associated
invariant distribution which is a mixture of $\pi$ and a point mass
at this artificial atom. This modification can be carried out in a
way which ensures that uniform ergodicity of the original Markov kernel
is inherited by the modified Markov kernel.

While our methodology is generally applicable, we primarily focus
in this paper on perfect simulation from the path distribution of
a discrete-time Feynman--Kac model using an i-cSMC kernel. While it
can be established that standard SMC methods provide samples whose
distribution can be made arbitrary close to the path distribution
of interest by increasing the number of particles, no perfect simulation
method has hitherto been devised with guarantees of expected time
polynomial in the time horizon, $n$. Under regularity assumptions
on the Feynman--Kac model, we show that our methodology requires expected
$\mathcal{O}(n^{2})$ time to generate perfect samples, and may be
implementable in expected polynomial time under less restrictive assumptions.
The introduction of the artificial atom in this case follows from
a simple and generally applicable extension of the Feynman--Kac model
of interest. This allows us to sample from the joint posterior distribution
of latent variables in a hidden Markov model (HMM), as long as the
chosen parameter $\beta$ satisfies $\inf_{x\in\mathsf{X}}\Pi(x,\alpha)\geq\beta$
. In practice, we have found that with appropriate algorithm settings
one can take $\beta$ to be fairly large with no indication that this
assumption fails to hold. Additionally a diagnostic can be performed
to check this assumption during the course of the algorithm.

The rest of the paper is organized as follows. In Section~\ref{sec:Reg-and-PS}
we review the split chain construction, its simulation and how this
chain may be exploited to obtain perfect samples from $\pi$. In Section~\ref{sec:Reg-PS-perfectatom}
we consider the case where $\Pi$ admits a singleton atom, and show
how perfect simulation methods may be viably implemented through the
use of an appropriate Bernoulli factory. In Section~\ref{sec:artificial_atom}
we overview the general principles behind modification of $\Pi$ to
create a Markov kernel that admits an artificial singleton atom. Section~\ref{sec:icsmc_atom}
constitutes the major application, in which we combine results from
Section~\ref{sec:Reg-PS-perfectatom} with a novel variant of an
approach in Section~\ref{sec:artificial_atom} to develop perfect
simulation methodology for sampling from a Feynman--Kac law on the
path space. In Section~\ref{sec:discussion} we discuss possible
approaches for dealing with the case where the constant $\beta$ is
not known explicitly and show how the method we have introduced may
be useful as a component of a more general perfect simulation algorithm
or unbiased estimation scheme. We also reiterate the potential utility
of parallel implementation of regenerative Markov chains. Section~\ref{sec:Applications}
demonstrates our methodology on a number of applications.

\section{Regeneration and perfect simulation\label{sec:Reg-and-PS}}

The following notation will be used throughout the paper. For a metric
space $\mathsf{X}$, we denote by $\mathcal{B}(\mathsf{X})$ the Borel
$\sigma$-algebra on $\mathsf{X}$. For $\mu:\mathcal{B}(\mathsf{X})\rightarrow\mathbb{R}_{+}$
a measure, $P:\mathsf{X}\times\mathcal{B}(\mathsf{X})\rightarrow[0,1]$
a Markov kernel and $f$ a real-valued measurable function w.r.t.
$\mathcal{B}(\mathsf{X})$, we write $\mu(f):=\int_{\mathsf{X}}f(x)\mu({\rm d}x)$
and define $\mu P$ to be the measure satisfying $\mu P({\rm d}y):=\int_{\mathsf{X}}P(x,{\rm d}y)\mu({\rm d}x)$.
We set $P^{1}:=P$ and $P^{n}\left(x,{\rm d}y\right):=\int_{\mathsf{X}}P^{n-1}(x,{\rm d}z)P(z,{\rm d}y)$
for $n\geq2$. When a measure $\mu$ admits a density w.r.t.\ some
dominating measure, we will denote both the density and the measure
by $\mu$, so e.g., $\mu(x)$ is the density at $x\in\mathsf{X}$
and $\mu(A)$ the measure of the set $A\in\mathcal{B}(\mathsf{X})$.
When $\mu$ is additionally a probability measure, we will also refer
to it as a distribution. If $f:\mathsf{X}\rightarrow\mathbb{R}$ is
a function and $c\in\mathbb{R}$ a constant, we will write $f=c$
to mean $f$ is the constant function with $x\mapsto c$ for all $x\in\mathsf{X}$.
Finally, $\delta_{x}$ denotes the Dirac measure centred at $x$.

\subsection{Atoms, split chain and regeneration}

All forthcoming developments rely on the notion of an atom (see, e.g.,
\citealt[Definition~4.3]{nummelin1984general} or \citealt[Chapter~5]{meyn2009markov}),
which we now introduce. Consider an irreducible, Harris recurrent
Markov chain $\mathbf{Y}:=(Y_{n})_{n\geq1}$ of transition kernel
$P$ of invariant distribution $\chi$ on a measurable space $(E,\mathcal{E})$.
A set $\alpha\in\mathcal{E}$ is a \emph{proper atom} for $P$ if
there exists a probability measure $\mu:\mathcal{E}\rightarrow[0,1]$
such that
\[
P(y,A)=\mu(A),\quad y\in\alpha,A\in\mathcal{E},
\]
that is each time the chain enters the atom, its next state is sampled
according to the \emph{regeneration measur}e $\mu$. Since $\mathbf{Y}$
is Harris recurrent and $\chi$-irreducible, such an atom $\alpha$
is \emph{accessible} if $\chi(\alpha)>0$ and $\mathbf{Y}$ returns
infinitely often to $\alpha$ with probability $1$.

In general state spaces, it is rare for a Markov kernel to admit a
proper, accessible atom. The major contribution of \citet{nummelin1978splitting}
and \citet{athreya1978new} was to show that, even if the Markov chain
$\mathbf{X}$ with transition kernel $\Pi$ defined in Section~\ref{sec:Intro}
does not admit a proper atom, one can exploit the minorization (\ref{eq:minorization_general})
to construct a bivariate Markov chain $\tilde{\mathbf{X}}_{\nu,s}$
evolving on the extended space $\mathsf{X}\times\{0,1\}$ which admits
a proper, accessible atom. We provide here a brief summary of the
construction of \citet{nummelin1978splitting}.

Using (\ref{eq:minorization_general}), we can write $\Pi(x,{\rm d}y)$
as a mixture
\begin{equation}
\Pi(x,{\rm d}y)=s(x)\nu({\rm d}y)+[1-s(x)]R_{\nu,s}(x,{\rm d}y),\label{eq:decompose_Pi_s}
\end{equation}
where $R_{\nu,s}$ is a \emph{residual kernel}, defined for $x\in\mathsf{X}$
with $s(x)<1$ by 
\begin{equation}
R_{\nu,s}(x,{\rm d}y):=\frac{\Pi(x,{\rm d}y)-s(x)\nu({\rm d}y)}{1-s(x)}.\label{eq:residual_kernel_general}
\end{equation}
The subscript $(\nu,s)$ in $R_{\nu,s}$ emphasizes that there are
many choices of $\nu$ and $s$ in (\ref{eq:minorization_general})
for a given $\Pi$, each associated to a particular residual kernel.
The split chain is then defined as follows.
\begin{defn*}
The \emph{split chain} is a bivariate Markov chain $\tilde{\mathbf{X}}_{\nu,s}=(\tilde{X}_{n}^{(\nu,s)})_{n\geq1}=(Z_{n}^{(\nu,s)},\rho_{n}^{(\nu,s)})_{n\geq1}$
on the extended space $\mathsf{X}\times\{0,1\}$ with transition kernel
\begin{equation}
\tilde{\Pi}_{\nu,s}(x,\rho;{\rm d}y,\varrho):=\left\{ \mathbb{I}(\rho=1)\nu({\rm d}y)+\mathbb{I}(\rho=0)R_{\nu,s}(x,{\rm d}y)\right\} s(y)^{\varrho}\left[1-s(y)\right]^{1-\varrho},\label{eq:split_chain_transition_kernel}
\end{equation}
and invariant distribution $\tilde{\pi}_{\nu,s}({\rm d}x,\rho):=\pi\left({\rm d}x\right)s(x)^{\rho}\left[1-s(x)\right]^{1-\rho}$.
\end{defn*}
This construction, combined with the fact that $\mathbf{X}$ is Harris
recurrent and $\pi$-irreducible, implies key properties of $\tilde{\mathbf{X}}_{\nu,s}$:
\begin{enumerate}
\item If the law of $Z_{1}^{(\nu,s)}$ is equal to the law of $X_{1}$ then
the laws of $(Z_{n}^{(\nu,s)})_{n\geq1}$ and $(X_{n})_{n\geq1}$
are identical \citep[Theorem~1]{nummelin1978splitting}.
\item $\tilde{\mathbf{X}}_{\nu,s}$ is Harris recurrent with $\mathsf{X}\times\{1\}$
a proper, accessible atom for $\tilde{\Pi}_{\nu,s}$ with associated
\emph{regeneration measure} $\tilde{\nu}_{\nu,s}({\rm d}x,\rho):=\nu({\rm d}x)s(x)^{\rho}\left[1-s(x)\right]^{1-\rho}$
\citep[Theorem~2]{nummelin1978splitting}.
\end{enumerate}
Crucially, one can see that $\tilde{\Pi}_{\nu,s}$ admits $\mathsf{X}\times\{1\}$
as a proper, accessible atom even when the original Markov transition
kernel $\Pi$ admits no proper, accessible atom. To simplify notation
and emphasize that only the second coordinate of $\tilde{\mathbf{X}}_{\nu,s}=(Z_{n}^{(\nu,s)},\rho_{n}^{(\nu,s)})_{n\geq1}$
is dependent on $\nu$ and $s$ in (\ref{eq:minorization_general}),
we dispense with $(Z_{n}^{(\nu,s)})_{n\geq1}$ and hereafter define
$\tilde{\mathbf{X}}_{\nu,s}:=(\tilde{X}_{n}^{(\nu,s)})_{n\geq1}:=(X_{n},\rho_{n}^{(\nu,s)})_{n\geq1}$,
the Markov chain with transition kernel $\tilde{\Pi}_{\nu,s}$. In
this split chain, the variables $(\rho_{n}^{(\nu,s)})_{n\geq1}$ are
indicators of regeneration since $\tilde{X}_{n}\mid\{\rho_{n-1}^{(\nu,s)}=1\}\sim\tilde{\nu}_{\nu,s}$.

When $\rho_{n-1}^{(\nu,s)}=1$, it is customary to call $\tilde{X}_{n-1}^{(\nu,s)}$
the sample just prior to regeneration, since it is $\tilde{X}_{n}^{(\nu,s)}$
that is distributed according to $\tilde{\nu}_{\nu,s}$. Due to the
close relationship between $\mathbf{X}$ and $\tilde{\mathbf{X}}_{\nu,s}$,
$\nu$ is commonly also referred to as a regeneration measure for
$\mathbf{X}$, the construction of an associated split chain $\tilde{\mathbf{X}}_{\nu,s}$
being implicit.

We define the sequence of regeneration times $(\tau_{k}^{(\nu,s)})_{k\geq1}$
of the Markov chain $\tilde{\mathbf{X}}_{\nu,s}$ via
\[
\tau_{k}^{(\nu,s)}:=\min\{n>\tau_{k-1}^{(\nu,s)}:\rho_{n}^{(\nu,s)}=1\},
\]
with $\tau_{0}^{(\nu,s)}:=0$. In the sequel, $\tau_{\nu,s}:=\tau_{1}^{(\nu,s)}$
will be used to denote the first regeneration time where no ambiguity
can result. The introduction of this split chain was a major innovation
in the analysis of Markov chains on general state spaces, because
it allows the Markov chain $\tilde{\mathbf{X}}_{\nu,s}$ to be partitioned
into i.i.d.\ tours, where tour $i$, $i\in\mathbb{N}$, is defined
as $(\tilde{X}_{n}^{(\nu,s)}:\tau_{i-1}^{(\nu,s)}<n\leq\tau_{i}^{(\nu,s)})$.

\subsection{Simulating the split chain\label{sub:mykland_simulation}}

Although the split chain $\tilde{\mathbf{X}}_{\nu,s}$ was originally
introduced as a theoretical tool to analyze the Markov chain $\mathbf{X}$,
statistical methodology has since been developed which requires simulating
$\tilde{\mathbf{X}}_{\nu,s}$ \citep[see, e.g., ][]{mykland1995regeneration,hobert2002applicability}.
In practice, however, sampling from the split chain kernel $\tilde{\Pi}$
is not always feasible even when sampling from $\Pi$ is.

A standard approach due to \citet{mykland1995regeneration} consists
of simulating $\mathbf{X}$ and then imputing the values of $(\rho_{n}^{(\nu,s)})_{n\geq1}$
conditional upon $\mathbf{X}$. Letting $\mathbb{P}_{\nu,s}$ denote
the law of $\tilde{\mathbf{X}}_{\nu,s}$ when $\tilde{X}_{1}\sim\tilde{\nu}_{\nu,s}$,
they observe that $\rho_{n-1}^{(\nu,s)}$ depends on $\tilde{\mathbf{X}}_{\nu,s}$
only through $X_{n-1}$ and $X_{n}$ and that
\begin{equation}
\mathbb{P}_{\nu,s}\left(\rho_{n-1}^{(\nu,s)}=1\mid X_{n-1}=x_{n-1},X_{n}=x_{n}\right)=\frac{s(x_{n-1})\nu({\rm d}x_{n})}{\Pi(x_{n-1},{\rm d}x_{n})},\label{eq:mykland_trick}
\end{equation}
which can be computed in a variety of situations, but not in general.
For example, in many cases $\Pi$ can be expressed as
\[
\Pi(x,{\rm d}y)=Q_{x}({\rm d}y)a(x,y)+r(x)\delta_{x}({\rm d}y),
\]
where $a:\mathsf{X}\times\mathsf{Y}\rightarrow[0,1]$ and $r:\mathsf{X}\rightarrow[0,1]$,
and $\{Q_{x}:x\in\mathsf{X}\}$ and $\nu$ admit densities with respect
to a common dominating measure without point masses. Then $\mathbb{P}_{\nu,s}(\rho_{n-1}^{(\nu,s)}=1\mid X_{n-1}=X_{n})=0$
and for any $X_{n-1}\neq X_{n}$ the r.h.s.\ of (\ref{eq:mykland_trick})
is given by \citep[Proposition~1]{mykland1995regeneration}
\[
\frac{s(x_{n-1})\nu(x_{n})}{Q_{x_{n-1}}(x_{n})a(x_{n-1},x_{n})}.
\]
It is apparent, however, that imputing $\rho_{n-1}^{(\nu,s)}$ in
practice may be impossible when this expression cannot be computed
for distinct $x_{n-1},x_{n}\in\mathsf{X}$. We will show in Section~\ref{sec:Reg-PS-perfectatom}
how we can overcome this limitation when $\Pi$ admits a singleton
atom.

\subsection{Mixture representation of the invariant measure and perfect simulation\label{sub:mix_rep_ps}}

It was established in \citet{asmussen1992stationarity}, \citet{hobert2004mixture}
and \citet{hobert2006using} that the split chain construction provides
the following mixture representation of the target distribution $\pi$:
\begin{equation}
\pi({\rm d}x)=\sum_{n=1}^{\infty}\frac{\mathbb{P}_{\nu,s}(\tau_{\nu,s}\geq n)}{\mathbb{E}_{\nu,s}(\tau_{\nu,s})}\eta_{n}^{(\nu,s)}({\rm d}x),\label{eq:mixture_representation}
\end{equation}
where, defining $G_{s}(x):=1-s(x)$ and with the convention that $\prod_{\emptyset}=1$,
\begin{eqnarray}
\eta_{n}^{(\nu,s)}(A) & = & \mathbb{P}_{\nu,s}\left(X_{n}\in A\mid\tau_{\nu,s}\geq n\right)\nonumber \\
 & = & \frac{\int_{\mathsf{X}^{n}}\mathbb{I}\left\{ x_{n}\in A\right\} \left[\prod_{p=2}^{n}G_{s}(x_{p-1})\right]\nu({\rm d}x_{1})\prod_{p=2}^{n}R_{\nu,s}(x_{p-1},{\rm d}x_{p})}{\int_{\mathsf{X}^{n}}\left[\prod_{p=2}^{n}G_{s}(x_{p-1})\right]\nu({\rm d}x_{1})\prod_{p=2}^{n}R_{\nu,s}(x_{p-1},{\rm d}x_{p})}.\label{eq:eta_n}
\end{eqnarray}
This mixture representation (\ref{eq:mixture_representation}) implies
that if one were able to sample a $\mathbb{N}$-valued random variable
$N$ of probability mass function (p.m.f.) 
\begin{equation}
\Pr(N=n)=\frac{\mathbb{P}_{\nu,s}(\tau_{\nu,s}\geq n)}{\mathbb{E}_{\nu,s}(\tau_{\nu,s})}\label{eq:Ngeneraldist}
\end{equation}
and $\xi|\left(N=n\right)\sim\eta_{n}^{(\nu,s)}$ then unconditionally
$\xi\sim\pi$. Algorithm~\ref{alg:psim_general_regen} describes
this procedure.

\begin{algorithm}[H]
\protect\caption{A generic regenerative perfect simulation algorithm\label{alg:psim_general_regen}}

\begin{enumerate}
\item Sample $N$ from the distribution with p.m.f. (\ref{eq:Ngeneraldist}).
\item Output a sample from $\eta_{N}^{(\nu,s)}$.\end{enumerate}
\end{algorithm}

It has been realized in \citet{blanchet2007exact} and \citet{flegal2012exact}
that a Bernoulli factory can be used in some circumstances to facilitate
sampling $N$ according to (\ref{eq:Ngeneraldist}), although we note
that the recent literature on the Bernoulli factory starting with
\citet{keane1994bernoulli} was itself motivated by the regenerative
simulation work of \citet{asmussen1992stationarity}. For practical
implementation, this typically requires accurate upper bounds on $\mathbb{P}_{\nu,s}(\tau_{\nu,s}\geq n)$
as a function of $n$, and \citet{flegal2012exact} provide examples
where this is possible. Sampling from $\eta_{N}^{(\nu,s)}$, and thereby
obtaining a sample from $\pi$, is then achieved by using a simple
rejection algorithm where the proposal is $\nu R_{\nu,s}^{N-1}$.
Unfortunately, this rejection technique has an infinite expected running
time when $N$ is distributed according to (\ref{eq:Ngeneraldist})
\citep[Proposition~2]{asmussen1992stationarity,blanchet2007exact}.

As noted in \citet{hobert2004mixture}, the problem becomes much simpler
when one considers the case where (\ref{eq:minorization_general})
holds with $s=\epsilon$. The representation (\ref{eq:mixture_representation})
adapted to the minorization (\ref{eq:minorization_general}) with
$s=\epsilon$ yields 
\begin{equation}
\pi({\rm d}x)=\sum_{n=1}^{\infty}\epsilon(1-\epsilon)^{n-1}\nu R_{\nu,\epsilon}^{n-1}({\rm d}x).\label{eq:mixture_uniform_minorization}
\end{equation}
Indeed, $\tau_{\nu,\epsilon}$ is a geometric random variable with
success probability $\epsilon$ so 
\[
\frac{\mathbb{P}_{\nu,\epsilon}(\tau_{\nu,\epsilon}\geq n)}{\mathbb{E}_{\nu,\epsilon}(\tau_{\nu,\epsilon})}=\epsilon(1-\epsilon)^{n-1}=\mathbb{P}_{\nu,\epsilon}(\tau_{\nu,\epsilon}=n),
\]
and $\eta_{n}^{(\nu,\epsilon)}=\nu R_{\nu,\epsilon}^{n-1}$ because
$s(x)=\epsilon$ is independent of $x$. There are two perfect simulation
procedures that arise from alternative interpretations of (\ref{eq:mixture_uniform_minorization}),
which we now outline.

The first interpretation is that if $\tilde{X}_{1}\sim\tilde{\nu}_{\nu,\epsilon}$
then $X_{\tau_{\nu,\epsilon}}$, the first coordinate of $\tilde{\mathbf{X}}_{\nu,\epsilon}$
at time $\tau_{\nu,\epsilon}$, is a perfect sample from $\pi$. In
other words, the sample $X_{\tau_{\nu,\epsilon}}$ just prior to regeneration
is a perfect sample from $\pi$, and Algorithm~\ref{alg:psim_onestep_generic_split}
is a corresponding perfect simulation procedure using (\ref{eq:mykland_trick})
to determine when the first regeneration has occurred. The algorithm
is practical whenever $\epsilon$ is known and one can compute the
Radon--Nikodym derivative appearing in (\ref{eq:mykland_psim_onestep_generic}).

\begin{algorithm}[H]
\protect\caption{Regenerative perfect simulation algorithm when $s=\epsilon$ via imputation\label{alg:psim_onestep_generic_split}}

\begin{enumerate}
\item Sample $X_{1}\sim\nu$.
\item For $n=2,3,\ldots$

\begin{enumerate}
\item Sample $X_{n}\sim\Pi(X_{n-1},\cdot)$.
\item With probability
\begin{equation}
\epsilon\frac{{\rm d}\nu(\cdot)}{{\rm d}\Pi(X_{n-1},\cdot)}(X_{n}),\label{eq:mykland_psim_onestep_generic}
\end{equation}
stop and output $X_{n-1}$.\end{enumerate}
\end{enumerate}
\end{algorithm}

A second interpretation is that if one samples $N\sim{\rm Geometric}(\epsilon)$
followed by $\xi\sim\nu R_{\nu,\epsilon}^{N-1}$, then unconditionally
$\xi\sim\pi$, and Algorithm~\ref{alg:psim_onestep_generic} is a
procedure based upon this observation. The algorithm is practical
as soon as $\epsilon$ is known and both $\nu$ and $R_{\nu,\epsilon}$
can be sampled from. This latter scheme corresponds to the multigamma
coupler of \citet{murdoch1998exact} but without any explicit appeal
to CFTP as noticed by \citet{hobert2004mixture}.

\begin{algorithm}[H]
\protect\caption{Regenerative perfect simulation algorithm when $s=\epsilon$ via the
multigamma coupler\label{alg:psim_onestep_generic}}

\begin{enumerate}
\item Sample $N\sim{\rm Geometric}(\epsilon)$.
\item Output a sample from $\nu R_{\nu,\epsilon}^{N-1}$.\end{enumerate}
\end{algorithm}

While cast as perfect simulation algorithms, both of these procedures
essentially simulate a single tour of the split chain $\tilde{\mathbf{X}}_{\nu,\epsilon}$.
In practice, it is often the case that one cannot sample from $\nu$
or $R_{\nu,\epsilon}$, or compute the Radon-Nikodym derivative (\ref{eq:mykland_psim_onestep_generic})
so that Algorithms~\ref{alg:psim_onestep_generic_split} and~\ref{alg:psim_onestep_generic}
cannot be implemented. We now discuss a special case where such an
implementation is feasible.

\section{Regeneration and perfect simulation for Markov chains with a singleton
atom\label{sec:Reg-PS-perfectatom}}

\subsection{Markov chains with singleton atoms\label{sub:MC_singleton_atoms}}

Assume that the Markov chain $\mathbf{X}$ admits not only a proper,
accessible atom $\alpha$, but one which is additionally a \emph{singleton},
i.e., $\alpha=\{a\}$ for some distinguished $a\in\mathsf{X}$ assumed
known to the user. This special case is a central focus of this paper.
Most Markov kernels do not admit such a singleton atom but we will
show in Section~\ref{sec:artificial_atom} how such an atom can be
introduced fairly generally, following \citet{brockwell2005identification}.
When this assumption is met, $\mathbf{X}$ will visit the point $a$
infinitely often.

A natural split chain to consider in this context is obtained by taking
$s=p$ and $\nu=\delta_{a}$ in (\ref{eq:minorization_general}) where
\[
p(x):=\Pi(x,\{a\}).
\]
To simplify notation, we write $\tilde{\mathbf{X}}_{a,p}$, $\rho_{n}^{(a,p)}$
and $\mathbb{P}_{a,p}$ for $\tilde{\mathbf{X}}_{\delta_{a},p}$,
$\rho_{n}^{(\delta_{a},p)}$ and $\mathbb{P}_{\delta_{a},p}$ respectively.
The split chain $\tilde{\mathbf{X}}_{a,p}$ can be very easily simulated
in this case using the imputation method discussed in Section~\ref{sub:mykland_simulation}
as (\ref{eq:mykland_trick}) reduces to 
\[
\mathbb{P}_{a,p}\left(\rho_{n-1}^{(a,p)}=1\mid X_{n-1}=x_{n-1},X_{n}=x_{n}\right)=\mathbb{I}\left(X_{n}=a\right).
\]

However, performing perfect simulation using Algorithm~\ref{alg:psim_general_regen}
in this scenario remains challenging as all currently available implementations
have an infinite expected running time \citep{asmussen1992stationarity,blanchet2007exact,flegal2012exact}.
We show how to bypass this problem when $p(x)$ satisfies an additional
assumption.

\subsection{Perfect simulation using the split chain $\tilde{\mathbf{X}}_{a,\epsilon}$\label{sub:indirect_s_e}}

We develop here practical perfect simulation schemes under the additional
assumption that for some known constant $\beta>0$  
\[
\underline{p}:=\inf_{x\in\mathsf{X}}p(x)\geq\beta.
\]
This assumption implies that $\mathbf{X}$ is uniformly ergodic and
satisfies the one-step minorization condition $\Pi(x,\{a\})\geq\beta$.
The split chain $\tilde{\mathbf{X}}_{a,\epsilon}$ obtained by taking
$s=\epsilon$ and $\nu=\delta_{a}$ in (\ref{eq:minorization_general})
for some $\epsilon\in(0,\beta)$ is at the heart of our methodology.
Contrary to the natural split chain $\tilde{\mathbf{X}}_{a,p}$ introduced
in Section~\ref{sub:MC_singleton_atoms} which does not lead to a
currently implementable perfect simulation algorithm enjoying a finite
expected running time, the introduction of the alternative split chain
$\tilde{\mathbf{X}}_{a,\epsilon}$ is motivated by Section~\ref{sub:mix_rep_ps}
where it is shown that perfect simulation from $\pi$ is feasible
if either Algorithm~\ref{alg:psim_onestep_generic_split} or Algorithm~\ref{alg:psim_onestep_generic}
can be implemented.

We show here how one can implement both Algorithms~\ref{alg:psim_onestep_generic_split}
and~\ref{alg:psim_onestep_generic} using an appropriate Bernoulli
factory. Our algorithms exploit the fact that although the expression
of $p(x)$ is usually unknown, one can simulate a Bernoulli random
variable with success probability $p(x)$, for any required $x\in\mathsf{X}$,
by sampling $Y\sim\Pi(x,\cdot)$ and outputting $\mathbb{I}\{Y=a\}$.
We will generically refer to a $p$-coin as a coin whose flips are
independent Bernoulli random variables with success probability $p$.
Details of the solutions to the Bernoulli factory problems are presented
in Section~\ref{sub:Bernoulli-factory-algorithms}.

Algorithm~\ref{alg:psim_onestep_generic_split} involves two operations:
simulation from $\Pi$ and simulation of a Bernoulli random variable
with success probability (\ref{eq:mykland_psim_onestep_generic}).
The former is possible by assumption and the latter reduces in this
case to
\begin{equation}
\mathbb{P}_{a,\epsilon}\left(\rho_{n-1}^{(a,\epsilon)}=1\mid X_{n-1}=x_{n-1},X_{n}=x_{n}\right)=\frac{\epsilon}{p(x_{n-1})}\mathbb{I}\{x_{n}=a\},\label{eq:e_over_p_coin}
\end{equation}
which is possible as we have a solution to the following Bernoulli
factory problem.
\begin{problem}
\label{prob:bf1}Given flips of a $p$-coin with $p\geq\beta>\epsilon>0$,
for known constants $\epsilon$ and $\beta$, simulate an $\frac{\epsilon}{p}$-coin
flip.
\end{problem}
The practical implementation of Algorithm~\ref{alg:psim_onestep_generic_split}
is summarized in Algorithm~\ref{alg:Practical_imputation}.

\begin{algorithm}[H]
\protect\caption{Practical implementation of Algorithm~\ref{alg:psim_onestep_generic_split}\label{alg:Practical_imputation}}

\begin{enumerate}
\item Choose $\beta\in(0,\underline{p}]$ and $\epsilon\in(0,\beta)$. We
suggest $\epsilon=\beta/2$.
\item Set $X_{1}=a$.
\item For $n=2,3,\ldots$:

\begin{enumerate}
\item Sample $X_{n}\sim\Pi(X_{n-1},\cdot)$.
\item If $X_{n}=a$, sample $\rho_{n-1}^{(a,\epsilon)}\sim{\rm Bernoulli}(\epsilon/p(X_{n-1}))$.
Otherwise set $\rho_{n-1}^{(a,\epsilon)}=0$.\label{enu:eoverp_coin_in_algo}
\item If $\rho_{n-1}^{(a,\epsilon)}=1$, stop and output $X_{n-1}$.\end{enumerate}
\end{enumerate}
\end{algorithm}

The main computational task in Algorithm~\ref{alg:psim_onestep_generic}
is simulation from the residual kernel $R_{a,\epsilon}$, which can
be decomposed into a mixture of $R_{a,p}$ and $\nu=\delta_{a}$
\begin{eqnarray}
R_{a,\epsilon}(x,{\rm d}y) & = & \frac{\Pi(x,{\rm d}y)-\epsilon\delta_{a}({\rm d}y)}{1-\epsilon}\nonumber \\
 & = & \frac{1-p(x)}{1-\epsilon}R_{a,p}(x,{\rm d}y)+\frac{p(x)-\epsilon}{1-\epsilon}\delta_{a}({\rm d}y),\label{eq:mixture_residual_epsilon}
\end{eqnarray}
where we can simulate from $R_{a,p}(x,\cdot)$ by a rejection method
which simulates from the proposal $\Pi(x,\cdot)$ and outputs the
first sample distinct from $a$. The difficulty in simulating from
$R_{a,\epsilon}(x,\cdot)$ is in flipping a $\{1-p(x)\}/(1-\epsilon)$-coin
to select the mixture component. However, this is possible as we have
a solution to the following second Bernoulli factory problem.
\begin{problem}
\label{prob:bf2}Given flips of a $p$-coin with $p\geq\beta>\epsilon>0$,
for known constants $\epsilon$ and $\beta$, simulate a $(1-p)/(1-\epsilon)$-coin
flip.
\end{problem}
The practical implementation of Algorithm~\ref{alg:psim_onestep_generic}
is summarized in Algorithm~\ref{alg:Practical_multigamma}.

\begin{algorithm}[H]
\protect\caption{Practical implementation of Algorithm~\ref{alg:psim_onestep_generic}\label{alg:Practical_multigamma}}

\begin{enumerate}
\item Choose $\beta\in(0,\underline{p}]$ and $\epsilon\in(0,\beta)$. We
suggest $\epsilon=\beta/2$.
\item Sample $N\sim{\rm Geometric}(\epsilon)$.
\item Set $X_{1}=a$.
\item For $n=2,\ldots,N$:

\begin{enumerate}
\item Sample $Y_{n}\sim{\rm Bernoulli}\left(\{1-p(X_{n-1})\}/\{1-\epsilon\}\right)$.
\item If $Y_{n}=1$, sample $X_{n}\sim R_{a,p}(X_{n-1},\cdot)$ by rejection.
Otherwise, set $X_{n}=a$.
\end{enumerate}
\item Output $X_{N}$.\end{enumerate}
\end{algorithm}

\subsection{Bernoulli factory algorithms\label{sub:Bernoulli-factory-algorithms}}

A Bernoulli factory for a known function $f$ is an algorithm for
simulating a flip of a $f(p)$-coin when one can flip multiple times
a $p$-coin, but $p$ is unknown \citep{keane1994bernoulli}. More
specifically, letting $\mathcal{P}\subseteq[0,1]$ and $f:\mathcal{P}\rightarrow[0,1]$,
such a factory must output a flip of a $f(p)$-coin for any $p\in\mathcal{P}$
without knowledge of $p$ but is allowed to flip a $p$-coin an almost
surely finite number of times during its execution. One early example
of such a factory was presented in \citet{von1951various}, where
$f(p)=1/2$. More recent interest arises from the perfect simulation
of general regenerative processes \citep{asmussen1992stationarity}.
This inspired \citet{keane1994bernoulli}, whose major contribution
is a necessary and sufficient condition for the existence of a Bernoulli
factory for $f$: that $f$ is either constant, or continuous and
satisfies, for some $n\geq1$,
\[
\min\left\{ f(p),1-f(p)\right\} \geq\min\left\{ p,1-p\right\} ^{n},\quad\forall p\in\mathcal{P}.
\]
It can be verified from this result that Bernoulli factories exist
for the functions given in Problems~\ref{prob:bf1} and~\ref{prob:bf2}
when $p\geq\beta>\epsilon>0$ for known constants $\beta$ and $\epsilon$.

While the existence of Bernoulli factories is shown in \citet{keane1994bernoulli},
the proof is not constructive. Bernoulli factory algorithms have been
provided in a series of papers \citep{nacu2005fast,latuszynski2011simulating,thomas2011practical,flegal2012exact,huber2013nearly}.
In these, most attention is paid to the Bernoulli factory for $f$
satisfying 
\begin{equation}
cp\leq\gamma\implies f(p)=cp\label{eq:standard_bf_problem}
\end{equation}
for a given $c>0$ and $\gamma\in(0,1)$. The values that $f$ takes
when $cp>\gamma$ can be thought of as an ``extension'', particularly
when it is known that the factory will only be invoked for a $p$-coin
satisfying $cp\leq\gamma$. For example, in \citet{nacu2005fast}
and \citet{latuszynski2011simulating} the function $f(p)=\min\left\{ \gamma,cp\right\} $
is treated, whose extension is thus the constant function $\gamma$.
\citet{flegal2012exact} propose an alternative extension so that
$f$ is twice differentiable, which provides some performance guarantees
(\citealp[Proposition~10]{nacu2005fast}, see also \citealp{holtz2011new}).
In \citet{huber2013nearly}, an algorithm is presented for which the
extension is not explicitly shown, and which requires an expected
number of flips of a $p$-coin which is bounded above by $9.5c/(1-\gamma)$
whenever $cp<\gamma$ \citep[Theorem~1]{huber2013nearly}. In this
paper all our algorithms are implemented using the efficient procedure
in \citet{huber2013nearly}.

Problem~\ref{prob:bf2} can be solved by a standard solution for
$f$ satisfying (\ref{eq:standard_bf_problem}); in Appendix~\ref{sec:An-alternative-solution}
we develop an alternative solution to Problem~\ref{prob:bf2} via
a generic solution to the sign problem, which may be of independent
interest. We now show that Problem~\ref{prob:bf1} can be solved
using any solution to Problem~\ref{prob:bf2}. 
\begin{prop}
\label{prop:e_over_p_coin_standard}One can flip an $\epsilon/p$-coin
where $p>\epsilon>0$ by sampling $K\sim{\rm Geometric}(\epsilon)$
and flipping a $\left\{ (1-p)/(1-\epsilon)\right\} ^{K-1}$-coin.
\end{prop}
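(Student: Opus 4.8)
The plan is to verify that the described procedure outputs a Bernoulli random variable with the correct success probability $\epsilon/p$ by a direct computation of the acceptance probability, summing over the value of the geometric variable $K$. First I would fix the convention for $K \sim {\rm Geometric}(\epsilon)$: here it should be the number of trials until the first success, so $\Pr(K=k) = \epsilon(1-\epsilon)^{k-1}$ for $k \geq 1$. The proposed output is $1$ precisely when, having drawn $K=k$, the subsequent flip of a $\{(1-p)/(1-\epsilon)\}^{k-1}$-coin comes up heads; note that for $k=1$ this coin is deterministically $1$ (empty product), consistent with the convention $\prod_\emptyset = 1$ used earlier in the paper.

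The key computation is then
\begin{align*}
\Pr(\text{output}=1) &= \sum_{k=1}^{\infty} \Pr(K=k)\left(\frac{1-p}{1-\epsilon}\right)^{k-1} \\
&= \sum_{k=1}^{\infty} \epsilon(1-\epsilon)^{k-1}\left(\frac{1-p}{1-\epsilon}\right)^{k-1} \\
&= \epsilon \sum_{k=1}^{\infty} (1-p)^{k-1} = \frac{\epsilon}{p},
\end{align*}
where the geometric series converges because $0 < 1-p < 1$ (as $p > \epsilon > 0$ forces $p \in (0,1]$; if $p=1$ the sum is trivially $\epsilon$, still equal to $\epsilon/p$). I would also remark that the $\{(1-p)/(1-\epsilon)\}^{k-1}$-coin is realized by flipping a $(1-p)/(1-\epsilon)$-coin $k-1$ times and taking the AND, which is a valid $p$-coin operation since $(1-p)/(1-\epsilon) \in [0,1]$ whenever $p \geq \epsilon$; combined with the solution to Problem~\ref{prob:bf2} this shows the whole scheme uses only finitely many flips of the original $p$-coin almost surely.

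The main thing to be careful about — rather than a genuine obstacle — is the bookkeeping of the geometric parametrization and the edge case $k=1$, since an off-by-one in the exponent would change the answer to $\epsilon/(p(1-\epsilon)) \cdot (1-\epsilon) $-type expressions or break convergence; stating the convention explicitly resolves this. One should also note that a single $\{(1-p)/(1-\epsilon)\}^{K-1}$-coin flip is well-defined as a Bernoulli factory output because, conditionally on $K$, it requires only $K-1$ flips of a $(1-p)/(1-\epsilon)$-coin and $K$ is almost surely finite, so the composition terminates almost surely. No continuity or Keane–O'Brien-type argument is needed here: the construction is explicit and the proof is just the one displayed geometric sum.
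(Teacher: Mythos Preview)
Your proof is correct and follows essentially the same approach as the paper: both reduce to the identity $\epsilon/p=\sum_{k\geq1}\epsilon(1-\epsilon)^{k-1}\{(1-p)/(1-\epsilon)\}^{k-1}$, interpreted as an expectation with respect to $K\sim{\rm Geometric}(\epsilon)$. Your additional remarks on the geometric convention, the $k=1$ edge case, and realizing the $\{(1-p)/(1-\epsilon)\}^{K-1}$-coin via an AND of $K-1$ independent $(1-p)/(1-\epsilon)$-coin flips are accurate and mirror the paper's discussion surrounding Algorithm~\ref{alg:eoverpcoinalg}.
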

Algorithm~\ref{alg:eoverpcoinalg} is an implementation of the procedure
in Proposition~\ref{prop:e_over_p_coin_standard}, where a $\left\{ (1-p)/(1-\epsilon)\right\} ^{K-1}$-coin
flip is viewed as the product of $K-1$ independent $(1-p)/(1-\epsilon)$-coin
flips. The procedure can terminate as soon as any of these $(1-p)/(1-\epsilon)$-coin
flips is $0$. We have therefore formulated Algorithm~\ref{alg:eoverpcoinalg}
as a ``race'' between an $\epsilon$-coin and a $(p-\epsilon)/(1-\epsilon)$-coin.
One can indeed directly check that in this algorithm, the probability
that $Y_{n}=1$ conditional upon $Y_{n}+Z_{n}\geq1$ is $\epsilon/p$.
We note that an alternative solution to Problem~\ref{prob:bf1} is
to simulate $K\sim{\rm Geometric}(\epsilon)$ and then simulate a
$c_{K-1}\tilde{p}_{K-1}$-coin where $c_{k}=(1-\epsilon)^{-k}$ and
$\tilde{p}_{k}=(1-p)^{k}$, but we do not pursue this further.

\begin{algorithm}[H]
\protect\caption{Simulate an $\epsilon/p$-coin flip using a $(1-p)/(1-\epsilon)$-coin\label{alg:eoverpcoinalg}}

For $n=1,2,\ldots$:
\begin{enumerate}
\item Simulate an $\epsilon$-coin flip, $Y_{n}$. If $Y_{n}=1$, stop and
output $1$.
\item Simulate a $(p-\epsilon)/(1-\epsilon)$-coin flip, $Z_{n}$. If $Z_{n}=1$,
stop and output $0$.\end{enumerate}
\end{algorithm}

\subsection{Computational cost of algorithms~\ref{alg:Practical_imputation}
and~\ref{alg:Practical_multigamma}\label{sub:Computional-cost-of}}

A striking relationship between Algorithms~\ref{alg:Practical_imputation}
and~\ref{alg:Practical_multigamma} is that they have almost exactly
the same expected computational effort, when the $\epsilon/p$-coin
flips in Algorithm~\ref{alg:Practical_imputation} are simulated
using Algorithm~\ref{alg:eoverpcoinalg}. 
\begin{prop}
\label{lem:eoverp_cost}The expected number of $(1-p)/(1-\epsilon)$-coins
required to simulate an $\epsilon/p$-coin using Algorithm~\ref{alg:eoverpcoinalg}
is $(1-\epsilon)/p$.
\end{prop}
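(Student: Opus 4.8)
The plan is to analyze Algorithm~\ref{alg:eoverpcoinalg} directly as the ``race'' it is formulated to be, and compute the expected number of $(1-p)/(1-\epsilon)$-coins consumed. The first observation is that the algorithm proceeds in rounds $n=1,2,\ldots$, and at round $n$ a $(1-p)/(1-\epsilon)$-coin is flipped precisely when step~1 of that round produced $Y_n=0$, i.e.\ with probability $1-\epsilon$ (independently across rounds, up to the stopping event). Indeed, if the algorithm has not terminated before round $n$, then it always reaches step~1 of round $n$; it flips an $\epsilon$-coin there and, if and only if that flip is $0$, it flips the $(1-p)/(1-\epsilon)$-coin $Z_n$ in step~2. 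So the number $M$ of $(1-p)/(1-\epsilon)$-coins used equals $\sum_{n\geq1}\mathbb{I}\{\text{round $n$ reached}\}\,\mathbb{I}\{Y_n=0\}$.

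Next I would identify the distribution of the round at which termination occurs. Within a round the algorithm terminates if $Y_n=1$ (probability $\epsilon$) or if $Y_n=0$ and $Z_n=1$ (probability $(1-\epsilon)\cdot (p-\epsilon)/(1-\epsilon)=p-\epsilon$); thus each round independently terminates the algorithm with probability $\epsilon+(p-\epsilon)=p$, so the terminating round $T$ is ${\rm Geometric}(p)$ with $\mathbb{E}(T)=1/p$. Conditioning on $T=t$: for rounds $n<t$ the round was reached and did not terminate, which forces $Y_n=0$ (otherwise it would have terminated at step~1), so each such round contributes exactly one $(1-p)/(1-\epsilon)$-coin; for the terminal round $n=t$, a $(1-p)/(1-\epsilon)$-coin is flipped iff $Y_t=0$, and $\Pr(Y_t=0\mid T=t)=\Pr(Y_t=0, Z_t=1)/p=(p-\epsilon)/p$. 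Hence $\mathbb{E}(M\mid T=t)=(t-1)+(p-\epsilon)/p$, and taking expectations over $T$ gives $\mathbb{E}(M)=\mathbb{E}(T)-1+(p-\epsilon)/p=1/p-1+(p-\epsilon)/p=(1-\epsilon)/p$, as claimed.

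Alternatively, and perhaps more cleanly, I could avoid the conditioning by the renewal/optional-stopping bookkeeping: let $A_n$ be the event that round $n$ is reached (a nonincreasing sequence of events with $\Pr(A_n)=(1-p)^{n-1}$, since each completed round fails to terminate with probability $1-p$), and note $\{Y_n=0\}$ at round $n$ is independent of $A_n$ with probability $1-\epsilon$. Then $\mathbb{E}(M)=\sum_{n\geq1}\Pr(A_n)\,(1-\epsilon)=(1-\epsilon)\sum_{n\geq1}(1-p)^{n-1}=(1-\epsilon)/p$. The only subtlety, and the step I would be most careful about, is the independence claim $\Pr(Y_n=0\mid A_n)=1-\epsilon$: this needs the observation that $A_n$ is determined by $Y_1,Z_1,\ldots,Y_{n-1},Z_{n-1}$, which are independent of the fresh flip $Y_n$, so it is genuinely immediate once the round structure is spelled out. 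Everything else is a finite geometric-series computation, so I expect no real obstacle beyond stating the round decomposition precisely.
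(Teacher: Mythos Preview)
Your argument is correct. Your first approach---condition on the terminal round $T$, note that all earlier rounds necessarily consumed a $Z$-flip, and compute $\Pr(Y_T=0\mid T=t)=(p-\epsilon)/p$---is essentially the paper's proof viewed from a slightly different angle: the paper instead conditions directly on the output bit $Y_\tau$, implicitly using that $\tau$ and $Y_\tau$ are independent (so $\mathbb{E}[\tau\mid Y_\tau]=1/p$ in both cases), and writes the expected count as $\tfrac{\epsilon}{p}(\tfrac{1}{p}-1)+(1-\tfrac{\epsilon}{p})\tfrac{1}{p}$. The two decompositions are equivalent; yours makes the independence step explicit rather than silent. Your alternative indicator-sum computation $\mathbb{E}(M)=\sum_{n\geq1}\Pr(A_n)(1-\epsilon)=(1-\epsilon)/p$ is a genuinely cleaner route that sidesteps any conditioning on the output, at the cost of requiring (as you note) the small remark that all coin flips can be laid out in advance so that $A_n\in\sigma(Y_1,Z_1,\ldots,Y_{n-1},Z_{n-1})$ is independent of $Y_n$.
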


\begin{cor}
\label{prop:split_cost}The expected number of $(1-p)/(1-\epsilon)$-coins
required to simulate $X_{n}$ and $\rho_{n-1}^{(a,\epsilon)}$ in
Algorithm~\ref{alg:Practical_imputation} is $1-\epsilon$.
\end{cor}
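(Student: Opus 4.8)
The plan is to condition on $X_{n-1}$, split the count of $(1-p)/(1-\epsilon)$-coins used in the $n$th pass through the loop of Algorithm~\ref{alg:Practical_imputation} according to whether the freshly sampled $X_n$ equals $a$, and then invoke Proposition~\ref{lem:eoverp_cost} on the non-trivial branch.

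First I would record that step~(a) of the loop --- drawing $X_n\sim\Pi(X_{n-1},\cdot)$ --- consumes no $(1-p)/(1-\epsilon)$-coins, and that step~(b) consumes such coins only when $X_n=a$, in which case Algorithm~\ref{alg:eoverpcoinalg} is run to produce a ${\rm Bernoulli}(\epsilon/p(X_{n-1}))$ variate; when $X_n\neq a$ one just sets $\rho_{n-1}^{(a,\epsilon)}=0$ at no cost. Writing $C_n$ for the number of $(1-p)/(1-\epsilon)$-coins used in the $n$th pass, one therefore has $C_n=0$ on $\{X_n\neq a\}$, while on $\{X_n=a\}$ the variable $C_n$ is the $(1-p)/(1-\epsilon)$-coin count of a run of Algorithm~\ref{alg:eoverpcoinalg} with target probability $\epsilon/p(X_{n-1})$.

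Next I would argue that the $p$-coin flips required by that run of Algorithm~\ref{alg:eoverpcoinalg} come from \emph{new} draws under $\Pi(X_{n-1},\cdot)$ and are hence independent of the single draw $X_n$ produced in step~(a); therefore, conditionally on $X_{n-1}=x$ and on $\{X_n=a\}$, $C_n$ has the same law as the $(1-p)/(1-\epsilon)$-coin count of an unconditioned run of Algorithm~\ref{alg:eoverpcoinalg} targeting an $\epsilon/p(x)$-coin. Since $p(x)\geq\underline{p}\geq\beta>\epsilon$, Proposition~\ref{lem:eoverp_cost} applies and gives $\mathbb{E}[C_n\mid X_{n-1}=x,\,X_n=a]=(1-\epsilon)/p(x)$. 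Because $\Pr(X_n=a\mid X_{n-1}=x)=p(x)$, the tower property then yields
\[
\mathbb{E}\bigl[C_n\mid X_{n-1}=x\bigr]=p(x)\cdot\frac{1-\epsilon}{p(x)}+\bigl(1-p(x)\bigr)\cdot 0=1-\epsilon,
\]
which does not depend on $x$ and hence equals $\mathbb{E}[C_n]=1-\epsilon$; since $X_1=a$ is deterministic, this argument also covers the case $n=2$.

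The only point needing real care is the independence assertion: one must check that the invocation of Algorithm~\ref{alg:eoverpcoinalg} in step~(b) uses freshly simulated transitions rather than recycling the $X_n$ drawn in step~(a), so that conditioning on $\{X_n=a\}$ does not distort the distribution of the coin count. Once that is granted, the result is the one-line computation displayed above, and the statement follows.
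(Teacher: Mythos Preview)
Your proof is correct and follows essentially the same approach as the paper: condition on $X_{n-1}=x$, multiply the probability $p(x)$ that $X_n=a$ by the expected coin count $(1-\epsilon)/p(x)$ from Proposition~\ref{lem:eoverp_cost}, and observe that the result $1-\epsilon$ is free of $x$. The paper's version is a one-liner that omits your explicit discussion of the independence between the step-(a) draw and the fresh $p$-coin flips inside Algorithm~\ref{alg:eoverpcoinalg}, but the substance is identical.
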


\begin{prop}
\label{prop:equiv_1mp_over_1me_coins_and_Pi_samples}Using either
Algorithm~\ref{alg:Practical_imputation} or~\ref{alg:Practical_multigamma},
the expected number of $(1-p)/(1-\epsilon)$-coin flips required to
simulate a single tour of the split chain $\tilde{{\bf X}}_{a,\epsilon}$
is $\epsilon^{-1}-1$, and the average number of samples from $\Pi$
to additionally simulate the tour itself is $\epsilon^{-1}$.
\end{prop}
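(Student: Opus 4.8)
The plan is to treat the two algorithms separately, in each case first isolating a \emph{per-pass} cost that turns out not to depend on the chain's current state, and then averaging it over the random number of passes making up one tour. The only ingredient needed is the observation recalled in Section~\ref{sub:mix_rep_ps}: with $s=\epsilon$ constant, the regeneration indicators $(\rho_n^{(a,\epsilon)})_{n\ge1}$ are i.i.d.\ $\mathrm{Bernoulli}(\epsilon)$, so that $\tau_{\nu,\epsilon}\sim\mathrm{Geometric}(\epsilon)$, $\mathbb{E}_{a,\epsilon}(\tau_{\nu,\epsilon})=\epsilon^{-1}$, and one tour consists of exactly $\tau_{\nu,\epsilon}$ states.

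For Algorithm~\ref{alg:Practical_imputation}, started from $X_1=a$, the loop stops at the first $n$ with $\rho_{n-1}^{(a,\epsilon)}=1$, i.e.\ at $n=\tau_{\nu,\epsilon}+1$, so it executes exactly $\tau_{\nu,\epsilon}$ passes through Step~3; each performs one draw from $\Pi$ in Step~3(a), giving $\tau_{\nu,\epsilon}$ samples from $\Pi$, of mean $\epsilon^{-1}$. The $(1-p)/(1-\epsilon)$-coin flips arise only through the calls to Algorithm~\ref{alg:eoverpcoinalg}, and by Corollary~\ref{prop:split_cost} the expected number used in pass $n$ is $1-\epsilon$ conditionally on $X_{n-1}$ --- in particular a constant not depending on $X_{n-1}$. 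Since the event that pass $n$ is reached, $\{\tau_{\nu,\epsilon}\ge n-1\}$, is determined by the randomness of the earlier passes, a Wald-type identity gives that the total expected number of $(1-p)/(1-\epsilon)$-coin flips equals $(1-\epsilon)\sum_{m\ge1}\mathbb{P}_{a,\epsilon}(\tau_{\nu,\epsilon}\ge m)=(1-\epsilon)\,\mathbb{E}_{a,\epsilon}(\tau_{\nu,\epsilon})=\epsilon^{-1}-1$.

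For Algorithm~\ref{alg:Practical_multigamma}, $N\sim\mathrm{Geometric}(\epsilon)$ is drawn up front and independently of the rest, and the loop makes $N-1$ passes; each flips exactly one $(1-p)/(1-\epsilon)$-coin in Step~4(a), so the expected number of such flips is $\mathbb{E}(N)-1=\epsilon^{-1}-1$. On a pass with current state $X_{n-1}$ one has $Y_n=1$ with probability $\{1-p(X_{n-1})\}/(1-\epsilon)$, and in that event the rejection simulation of $R_{a,p}(X_{n-1},\cdot)$ in Step~4(b) draws from $\Pi(X_{n-1},\cdot)$ until the first value distinct from $a$, i.e.\ a $\mathrm{Geometric}(1-p(X_{n-1}))$ number of draws of mean $\{1-p(X_{n-1})\}^{-1}$; multiplying, the conditional expected number of $\Pi$-draws per pass is $(1-\epsilon)^{-1}$, again a constant. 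Since $N$ is independent of the chain, the total number of $\Pi$-draws then has mean $\mathbb{E}(N-1)\cdot(1-\epsilon)^{-1}=(\epsilon^{-1}-1)(1-\epsilon)^{-1}=\epsilon^{-1}$.

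The two points that warrant care are the interchange of summation and expectation for Algorithm~\ref{alg:Practical_imputation}, which I would justify by nonnegativity of the summands together with the measurability of $\{\tau_{\nu,\epsilon}\ge n-1\}$ with respect to the first $n-1$ passes and the conditional-mean statement of Corollary~\ref{prop:split_cost}, and the degenerate set $\{x:p(x)=1\}$, on which no rejection is triggered in Algorithm~\ref{alg:Practical_multigamma} and the per-pass $\Pi$-cost is $0$ rather than $(1-\epsilon)^{-1}$; under the mild nondegeneracy that $p(X_{n-1})<1$ almost surely this does not occur, and otherwise $\epsilon^{-1}$ remains a valid upper bound. I expect the first of these to be the only step needing more than a line to spell out.
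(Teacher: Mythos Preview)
Your proof is correct and follows essentially the same route as the paper's: for each algorithm you compute a per-step expected cost that is constant in the current state (using Corollary~\ref{prop:split_cost} for Algorithm~\ref{alg:Practical_imputation} and the rejection calculation for Algorithm~\ref{alg:Practical_multigamma}) and then multiply by the expected number of steps $\mathbb{E}(\tau_{\nu,\epsilon})=\epsilon^{-1}$ or $\mathbb{E}(N-1)=\epsilon^{-1}-1$. The paper's own argument is identical in structure but terser---it simply invokes Proposition~\ref{lem:eoverp_cost} and the expected tour length without spelling out the Wald-type justification or the degenerate case $p(x)=1$ that you flag.
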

Solving Problem~\ref{prob:bf2} directly using the algorithm of \citet{huber2013nearly}
involves taking $c=(1-\epsilon)^{-1}$ and $\gamma=(1-\beta)/(1-\epsilon)$
in (\ref{eq:standard_bf_problem}). We suggest that one takes $\beta\leq0.5$
and $\epsilon=\beta/2$ because of the following result, which indicates
that the expected number of $p$-coin flips required is then upper
bounded by a small constant using the default algorithm settings in
\citet{huber2013nearly}.

\begin{prop}
\label{prop:constant_huber}Let $\beta\leq0.5$, $\epsilon=\beta/2$
and $p\geq\beta$. The expected number of $p$-coin flips to produce
a $(1-p)/(1-\epsilon)$-coin flip using the algorithm of \citet{huber2013nearly}
with its default settings is bounded above by $11$.
\end{prop}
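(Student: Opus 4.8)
The plan is to recognise Problem~\ref{prob:bf2} as an instance of the standard linear Bernoulli factory \eqref{eq:standard_bf_problem} and then feed the resulting parameters into a sufficiently precise bound on the running time of the algorithm of \citet{huber2013nearly}. First, writing $q:=1-p$, a flip of a $q$-coin is produced from a single flip of the $p$-coin (output the complement), so the expected number of $p$-coin flips consumed equals the expected number of $q$-coin flips Huber's algorithm uses. A $(1-p)/(1-\epsilon)$-coin is a $cq$-coin with $c=(1-\epsilon)^{-1}$, and since $p\ge\beta$ we have $q\le 1-\beta$, hence $cq\le\gamma:=(1-\beta)/(1-\epsilon)<1$; this is exactly the setting of \eqref{eq:standard_bf_problem} with the $c$ and $\gamma$ already identified before the statement.

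Next I would substitute the prescribed parameter values. With $\beta\le 1/2$ and $\epsilon=\beta/2$ one gets $c=(1-\beta/2)^{-1}\in[1,4/3]$, $\gamma=(1-\beta)/(1-\beta/2)\in[2/3,1)$, and the key identity
\[
c-1=\frac{\epsilon}{1-\epsilon}=\frac{\beta-\epsilon}{1-\epsilon}=1-\gamma=\frac{\beta}{2-\beta}.
\]
The point of choosing $\epsilon=\beta/2$ is precisely that it forces $c-1=1-\gamma$: the excess of the slope over $1$ matches the gap of its range from $1$. This is what prevents the running time from diverging as $\beta\downarrow 0$ (where $c\to 1$ and $\gamma\to 1$ simultaneously), and it is why the crude estimate $9.5c/(1-\gamma)$ --- which here equals $19/\beta$ and is useless --- must be replaced by a finer accounting.

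Then I would invoke the explicit description of the algorithm of \citet{huber2013nearly} under its default settings, tracking its round structure to obtain a bound on the expected number of $q$-coin flips of the form $B(c,\gamma)$ with $B$ an elementary (closed-form or rapidly convergent) expression rather than the headline $\mathcal{O}(c/(1-\gamma))$ bound. Substituting $c=c(\beta)$ and $\gamma=\gamma(\beta)$, and simplifying using $c-1=1-\gamma$ and $c\le 4/3$, reduces the claim to showing that a single explicit function $g(\beta):=B(c(\beta),\gamma(\beta))$ satisfies $g(\beta)\le 11$ on $(0,1/2]$. One then checks that $g$ is monotone (or unimodal) on this interval, so it suffices to evaluate it at $\beta=1/2$ --- i.e.\ at $c=4/3$, $\gamma=2/3$ --- together with any interior critical point, where a direct numerical computation yields a value below $11$.

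The main obstacle is the third step: the headline bound from \citet{huber2013nearly} is too weak here, so one must extract from that paper (or re-derive from the algorithm's default parameterisation) a bound in which the apparent singularity as $\beta\to 0$ is cancelled by the identity $c-1=1-\gamma$, and then carry out the elementary but slightly delicate one-variable optimisation needed to certify the explicit constant $11$ over the whole admissible range of $\beta$.
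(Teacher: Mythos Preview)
Your plan is correct and coincides with the paper's approach. The paper reparametrises via $b:=1-\beta$ and $q:=1-p$, so that $C:=c=2/(1+b)$ and $Cq\le Cb=\gamma$, and then applies \citet[Theorem~3.6]{huber2013nearly} directly; the ``default settings'' there fix $\gamma_{\text{Huber}}=1/2$, $k=2.3/(\gamma_{\text{Huber}}\,\varepsilon)=4.6/\varepsilon$ with $\varepsilon=1-Cb=1-\gamma$, and $r=\exp(-2.3)/(1-\gamma_{\text{Huber}})^{2}$. Your key identity $c-1=1-\gamma$ is precisely what makes the potentially divergent term $k(C-1)$ collapse to the constant $4.6$, after which the remaining terms are bounded by an explicit function that is monotone in $b$ and hence maximised at $b=1/2$, giving the numerical value $\le 11$. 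So the ``main obstacle'' you flag is resolved simply by quoting Theorem~3.6 rather than the headline $9.5c/(1-\gamma)$ bound; no further re-derivation of the algorithm's round structure is needed.
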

As a result, when $\beta\leq0.5$ and $\epsilon=\beta/2$, one is
ensured that the expected cost of obtaining a perfect sample is bounded
above by $12/\epsilon$. Empirically we have observed that this bound
is fairly tight, but that the expected number of $p$-coin flips required
to produce a $(1-p)/(1-\epsilon)$-coin flip may even be inferior
to $7$.

\section{Introduction of an artificial singleton atom\label{sec:artificial_atom}}

\subsection{State space extension}

On general state spaces, the existence of accessible, singleton atoms
is not guaranteed. In most statistical applications, for example,
one has $\mathsf{X}=\mathbb{R}^{d}$ and $\pi$ admits a density w.r.t.\ the
Lebesgue measure, in which case no such atom for $\Pi$ can exist
since $\pi(\{a\})=0$ for any $a\in\mathsf{X}$. \citet{brockwell2005identification}
suggested in such cases to define a different transition kernel $\check{\Pi}$
on an extended state space for which such an \emph{artificial} singleton
atom exists. The atom is artificial in the sense that one defines
an extended state space $\check{\mathsf{X}}:=\mathsf{X}\cup\{a\}$
for some distinguished $a$, and seeks to design $\check{\Pi}$ such
that it is Harris recurrent and irreducible with unique invariant
probability measure $\check{\pi}:\mathcal{B}(\check{\mathsf{X}})\rightarrow[0,1]$
satisfying, for some $k\in(0,1)$,
\begin{equation}
\check{\pi}(A)=k\pi(A)+(1-k)\mathbb{I}(a\in A),\quad A\in\mathcal{B}(\check{\mathsf{X}}).\label{eq:extended_invariant}
\end{equation}
When (\ref{eq:extended_invariant}) holds, it follows that $\check{\pi}(A)=k\pi(A)$
for any $A\in\mathcal{B}(\mathsf{X})$. We denote by $\check{\mathbf{X}}:=(\check{X}_{n})_{n\geq1}$
the Markov chain with transition kernel $\check{\Pi}$. 

It is always possible to recover an irreducible, Harris recurrent
Markov chain with invariant probability measure $\pi$ from $\check{\mathbf{X}}$.
If we define $\check{\mathbf{Y}}:=(\check{X}_{n})_{n\in\mathcal{J}}$
with $\mathcal{J}:=\left\{ n:\check{X}_{n}\in\mathsf{X}\right\} $
then $\check{\mathbf{Y}}$ is the Markov chain $\check{\mathbf{X}}$
``watched'' in the set $\mathsf{X}$ and is irreducible and Harris
recurrent with invariant probability measure $\pi$ \citep[Theorem~1]{brockwell2005identification}.
One can also check that $\check{\mathbf{Y}}$ satisfies (\ref{eq:minorization_general})
with $s(x)=\check{\Pi}(x,\{a\})$ and $\nu({\rm d}y)=\check{\Pi}(a,{\rm d}y)\mathbb{I}(y\in\mathsf{X})/\check{\Pi}(a,\mathsf{X})$
\citep[Theorem~2]{brockwell2005identification}.

\subsection{Practical design of the distribution \textmd{\normalsize{}$\check{\pi}$
}and transition kernel $\check{\Pi}$\label{sub:Practical-design-of}}

We give here practical ways to define the distribution $\check{\pi}$
and kernel $\check{\Pi}$ given $\pi$ and $\Pi$. In many statistics
applications $\pi$ admits a density w.r.t.\ to a dominating measure
$\lambda$ on $\mathsf{X}$ and we can compute an unnormalized version
$\gamma(x)$ of this density. In this case, we can choose a $b>0$
and define an unnormalized version $\check{\gamma}(x)$ of the density
of $\check{\pi}$ w.r.t.\ the dominating measure $\lambda+\delta_{a}$
on $\check{\mathsf{X}}$ through 
\begin{equation}
\check{\gamma}(x):=\mathbb{I}(x\in\mathsf{X})\gamma(x)+\mathbb{I}(x=a)b.\label{eq:unnormalizedartificialtargetdensity}
\end{equation}
It follows from (\ref{eq:unnormalizedartificialtargetdensity}) that
$\check{\pi}({\rm d}x)=\check{\gamma}(x)\left\{ \lambda\left({\rm d}x\right)+\delta_{a}\left({\rm d}x\right)\right\} /\check{\gamma}(\mathsf{X})$
satisfies (\ref{eq:extended_invariant}) with $k=\{1+b/\gamma(\mathsf{X})\}^{-1}$.
In practice, it is desirable that $\check{\pi}(\{a\})$ be not too
close to either $0$ or $1$ so that visits to the artificial atom
are frequent and yet the Markov chain spends a substantial amount
of time in the set $\mathsf{X}$. This suggests that an estimate of
$\gamma(\mathsf{X})$ is necessary to be able to choose an appropriate
value of $b$.

Two strategies for designing $\check{\Pi}$ are proposed in \citet{brockwell2005identification}.
The first one is to let $\check{\Pi}$ be a Metropolis--Hastings transition
kernel with target $\check{\pi}$ and proposal distribution $\check{Q}_{x}$
where $x\in\check{\mathsf{X}}$. If $\Pi$ is itself a Metropolis--Hastings
kernel, it is possible to modify its proposal $Q_{x}$ where $x\in\mathsf{X}$
in order to design $\check{Q}_{x}$. For example, one could choose,
for some $w\in(0,1)$ and a probability measure $\mu:\mathcal{B}(\check{\mathsf{X}})\rightarrow[0,1]$,
\begin{equation}
\check{Q}_{x}\left({\rm d}y\right)=\mathbb{I}(x\in\mathsf{X})\left\{ wQ_{x}\left({\rm d}y\right)+(1-w)\delta_{a}\left({\rm d}y\right)\right\} +\mathbb{I}(x=a)\mu\left({\rm d}y\right).\label{eq:brockwell-MH}
\end{equation}

The second strategy is to define, for some $w\in(0,1)$ and transition
kernels $\Pi_{1}$ and $\Pi_{2}$, 
\[
\check{\Pi}(x,{\rm d}y):=w\Pi_{1}(x,{\rm d}y)+(1-w)\Pi_{2}(x,{\rm d}y),
\]
where $\Pi_{1}(x,{\rm d}y)=\mathbb{I}(x\in\mathsf{X})\Pi(x,{\rm d}y)+\mathbb{I}(x=a)\delta_{a}({\rm d}y)$
and $\Pi_{2}$ allows the chain to move between $\mathsf{X}$ and
$\{a\}$. One choice of $\Pi_{2}$, suggested by \citet{brockwell2005identification},
is a Metropolis--Hastings kernel with proposal
\[
Q_{x}\left({\rm d}y\right)=\mathbb{I}(x\in\mathsf{X})\delta_{a}\left({\rm d}y\right)+\mathbb{I}(x=a)\mu\left({\rm d}y\right).
\]

For both strategies, the choices of $w$ and $\mu$ can greatly affect
performance. \citet{brockwell2005identification} suggest that adaptation
of these parameters at regeneration times of $\check{\mathbf{Y}}$
can be beneficial, a procedure justified theoretically by \citet{gilks1998adaptive}.
The strategies outlined here are not exhaustive. In Section~\ref{sec:icsmc_atom}
a different approach is taken, but with essentially the same idea
of modifying $\Pi$ in an appropriate way to define $\check{\Pi}$.

\subsection{General applicability of the methodology\label{sub:General-applicability-of}}

The perfect simulation algorithms we have developed rely on the uniform
ergodicity of the Markov chain of interest. We show here that whenever
$\mathbf{X}$ is uniformly ergodic, then $\check{{\bf X}}$ is uniformly
ergodic when $\check{\Pi}$ is a modification of $\Pi$ in a specific
sense.
\begin{prop}
\label{prop:ue-inheritance}Assume that a generic Markov kernel $\check{\Pi}:\check{\mathsf{X}}\times\mathcal{B}(\check{\mathsf{X}})\rightarrow[0,1]$
satisfies $\check{\Pi}(a,\mathsf{X})>0$ and for some $w>0$,
\begin{equation}
\check{\Pi}(x,A)\geq w\Pi(x,A),\quad x\in\mathsf{X},A\in\mathcal{B}(\mathsf{X}).\label{eq:check_pi_modification_definition}
\end{equation}
Then $\mathbf{X}$ being uniformly ergodic implies that $\check{\mathbf{X}}$
is uniformly ergodic but the converse does not hold.
\end{prop}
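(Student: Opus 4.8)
The plan is to establish uniform ergodicity of $\check{\mathbf{X}}$ by verifying a one-step (or $m$-step) minorization condition on $\check{\mathsf{X}}$, using the hypothesized domination $\check{\Pi}(x,A)\geq w\Pi(x,A)$ together with the assumption $\check{\Pi}(a,\mathsf{X})>0$. Recall that $\mathbf{X}$ uniformly ergodic is equivalent, by the standard theory, to the existence of an integer $m\geq 1$, a constant $\delta>0$ and a probability measure $\phi$ on $\mathsf{X}$ such that $\Pi^m(x,\cdot)\geq\delta\phi(\cdot)$ for all $x\in\mathsf{X}$. First I would use \eqref{eq:check_pi_modification_definition} iterated $m$ times to get $\check{\Pi}^m(x,A)\geq w^m\Pi^m(x,A)\geq w^m\delta\,\phi(A)$ for all $x\in\mathsf{X}$ and $A\in\mathcal{B}(\mathsf{X})$; this handles starting points in $\mathsf{X}$ but only controls the mass placed on $\mathsf{X}$, and it does not yet cover the starting point $a$.

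To deal with the atom, I would note that from $a$ the chain reaches $\mathsf{X}$ with positive probability: by $\check{\Pi}(a,\mathsf{X})>0$ there is a set $B\in\mathcal{B}(\mathsf{X})$ and constant $c>0$ with $\check{\Pi}(a,B)\geq c$, hence (again using \eqref{eq:check_pi_modification_definition} on the subsequent $m$ steps, and the Chapman--Kolmogorov decomposition conditioning on the state after one step lying in $\mathsf{X}$) we obtain $\check{\Pi}^{m+1}(a,A)\geq c\,w^m\delta\,\phi(A)$ for $A\in\mathcal{B}(\mathsf{X})$. Combining the two bounds at the common horizon $m+1$ — for $x\in\mathsf{X}$ simply use $\check{\Pi}^{m+1}(x,\cdot)=\int\check{\Pi}(x,\mathrm{d}z)\check{\Pi}^m(z,\cdot)$ and the fact that from any $z$ one has $\check{\Pi}^m(z,A)\geq$ something uniform once $z\in\mathsf{X}$, iterating the argument to absorb a possible extra sojourn at $a$ — yields a single minorization $\check{\Pi}^{m+1}(x,\cdot)\geq\tilde\delta\,\tilde\phi(\cdot)$ valid for \emph{all} $x\in\check{\mathsf{X}}$, where $\tilde\phi$ is $\phi$ (viewed as a measure on $\check{\mathsf{X}}$ supported on $\mathsf{X}$) and $\tilde\delta=c\,w^m\delta$ (taking $c\leq 1$ w.l.o.g., or $c=\check{\Pi}(x,\mathsf{X})$'s infimum, which is positive since for $x\in\mathsf{X}$ it is at least $w$ and for $x=a$ at least the chosen $c$). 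A whole-space minorization is exactly uniform ergodicity (e.g.\ via the Doeblin condition), so $\check{\mathbf{X}}$ is uniformly ergodic.

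For the converse failing, I would exhibit a simple counterexample: take $\check{\Pi}$ that ignores the current state on $\mathsf{X}$ and resets — concretely, let $\check{\Pi}(x,\cdot)=\tfrac12\delta_a(\cdot)+\tfrac12\,\pi(\cdot)$ for $x\in\mathsf{X}$ and $\check{\Pi}(a,\cdot)=\tfrac12\delta_a(\cdot)+\tfrac12\,\pi(\cdot)$, which is trivially uniformly ergodic (indeed the whole space is small at one step), while letting $\Pi$ be any non-uniformly-ergodic $\pi$-invariant kernel on $\mathsf{X}$; then \eqref{eq:check_pi_modification_definition} would force $\tfrac12\pi(A)\geq w\,\Pi(x,A)$ for all $x$, which for a non-uniformly-ergodic $\Pi$ with, say, heavy-tailed one-step transitions cannot hold with a fixed $w>0$ — so no such $w$ exists and the implication genuinely runs one way only. (Slightly more carefully one picks $\Pi$ for which $\inf_x \Pi(x,A)=0$ for some $A$ with $\pi(A)>0$; an independence-sampler-type $\check{\Pi}$ then shows uniform ergodicity of $\check{\mathbf{X}}$ without $\check{\Pi}$ dominating $\Pi$.)

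The main obstacle I anticipate is purely bookkeeping: getting a \emph{single} minorization horizon that works uniformly over $\check{\mathsf{X}}=\mathsf{X}\cup\{a\}$, because a path started at $a$ may need one extra step to enter $\mathsf{X}$ and a path started in $\mathsf{X}$ may itself wander to $a$ before the $m$-step minorizing block applies. The clean fix is to run for $m+1$ steps and, at each step, lower-bound the probability of being in $\mathsf{X}$ with the residual $w$-mass argument, which makes the constants multiply but stay strictly positive; no delicate estimate is needed, only care that every constant ($w$, $\delta$, $c$) is bounded away from zero uniformly in the starting state.
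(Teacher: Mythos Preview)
Your forward direction is essentially the paper's argument: iterate the domination \eqref{eq:check_pi_modification_definition} along paths that stay in $\mathsf{X}$ to get $\check{\Pi}^{m}(y,A)\geq w^{m}\Pi^{m}(y,A)\geq w^{m}\delta\,\phi(A)$ for $y\in\mathsf{X}$, then spend one extra step to enter $\mathsf{X}$ from the atom. The paper writes this cleanly as $\check{\Pi}^{m+1}(x,A)\geq\int_{\mathsf{X}}\check{\Pi}(x,{\rm d}y)\check{\Pi}^{m}(y,A)$ and bounds $\check{\Pi}(x,\mathsf{X})$ below by $w$ when $x\in\mathsf{X}$ and by $\check{\Pi}(a,\mathsf{X})$ when $x=a$, giving the common constant $\min\{w,\check{\Pi}(a,\mathsf{X})\}\,w^{m}\delta$. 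Your ``absorb a possible extra sojourn at $a$'' is unnecessary: restricting the first step to land in $\mathsf{X}$ already suffices, with no iteration needed.

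Your treatment of the failed converse has a genuine gap. The proposition asserts that, \emph{under the standing hypothesis} \eqref{eq:check_pi_modification_definition}, uniform ergodicity of $\check{\mathbf{X}}$ does not force uniform ergodicity of $\mathbf{X}$. A counterexample therefore must exhibit $\Pi$ and $\check{\Pi}$ for which \eqref{eq:check_pi_modification_definition} \emph{does} hold, $\check{\mathbf{X}}$ is uniformly ergodic, yet $\mathbf{X}$ is not. Your construction takes $\check{\Pi}(x,\cdot)=\tfrac12\delta_{a}+\tfrac12\pi$ and then observes that \eqref{eq:check_pi_modification_definition} \emph{fails} for a generic non-uniformly-ergodic $\Pi$; but a pair violating the hypothesis is not a counterexample to the conditional implication. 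The paper instead uses the mixture kernel $\check{\Pi}=w\Pi_{1}+(1-w)\Pi_{2}$ from Section~\ref{sub:Practical-design-of}, which satisfies \eqref{eq:check_pi_modification_definition} by construction (since $\Pi_{1}(x,\cdot)=\Pi(x,\cdot)$ for $x\in\mathsf{X}$), and shows that if the re-entry proposal $\mu$ satisfies $\inf_{x}\mu({\rm d}x)/\pi({\rm d}x)>0$ then $\check{\Pi}(x,\{a\})$ is uniformly bounded below---hence $\check{\mathbf{X}}$ is uniformly ergodic---regardless of whether $\Pi$ is. To repair your argument you must keep $w\Pi$ as a genuine component of $\check{\Pi}$ and obtain uniform ergodicity of $\check{\mathbf{X}}$ from the \emph{other} component alone.
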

The conditions of Proposition~\ref{prop:ue-inheritance} are met
for both strategies in Section~\ref{sub:Practical-design-of}. This
implies that one can modify the transition kernels of uniformly ergodic
Markov chains on a general state space to construct transition kernels
of uniformly ergodic Markov chains with an artificial singleton atom
straightforwardly. Inspection of the proof of the converse failing
to hold also suggests that in some cases one can even obtain perfect
samples from the stationary distribution of a non-uniformly ergodic
Markov chain by suitable definition of $\check{\Pi}$, but we do not
pursue this further here.

The existence of a $\beta>0$ such that $\underline{p}\geq\beta$
is guaranteed in general for uniformly ergodic Markov chains with
a proper, accessible atom, as long as one is willing to consider a
$k$-step transition kernel. Indeed, any such chain $\check{\mathbf{X}}$
with a proper, accessible atom $\{a\}$ and transition kernel $\check{\Pi}$
satisfies, for some $m\in\mathbb{N}$ and $d>0$,
\[
\inf_{x\in\check{\mathsf{X}}}\check{\Pi}^{m}(x,\{a\})\geq d,
\]
and $\lim_{k\rightarrow\infty}\inf_{x\in\check{\mathsf{X}}}\check{\Pi}^{k}(x,\{a\})=\check{\pi}(\{a\})$.
Therefore, one can increase $\underline{p}$ by choosing a large enough
$k$ and treating the Markov kernel $\check{\Pi}^{k}$ as the Markov
kernel of interest. This presents no additional difficulties, since
we only require the ability to simulate from $\check{\Pi}^{k}$ and
identify the atom $\{a\}$. When $\{a\}$ is an artificial atom, it
may also be necessary to construct $\check{\Pi}$ in such a way that
$\check{\pi}(\{a\})$ is not very close to $0$ or $1$. The existence
of $\beta>0$ does not of course imply that it is always known in
practice but, at least in principle, one can simulate from the stationary
distribution of any uniformly ergodic Markov chain $\mathbf{X}$ via
the introduction of an artificial atom and the consideration of a
$k$-step version of the modified transition kernel. In Section~\ref{sec:icsmc_atom},
a Markov kernel $\check{P}_{N}$ with an artificial atom is considered
for which one can always treat $\check{P}_{N}$ itself as the Markov
kernel of interest. Moreover, the subscript $N$ is a parameter of
this Markov kernel that controls its rate of convergence.

The uniform ergodicity requirement for application of Algorithms~\ref{alg:Practical_imputation}
or~\ref{alg:Practical_multigamma} might appear strong but a similar
condition is required for the general CFTP algorithm presented in
\citet{propp1996exact}, as established in \citep[Theorem~4.2]{foss1998perfect}.

\section{Perfect simulation according to a Feynman--Kac path measure\label{sec:icsmc_atom}}

\subsection{Feynman--Kac path measure\label{sub:Feynman--Kac-path-measure}}

In this section, we focus on a generic discrete-time Feynman--Kac
model with time horizon $n$. Let $(\mathsf{Z},\mathcal{B}(\mathsf{Z}))$
be a measurable space. Consider a probability measure $\mu:\mathcal{B}(\mathsf{Z})\rightarrow[0,1]$,
some Markov kernels $M_{p}:\mathsf{Z}\times\mathcal{B}(\mathsf{Z})\rightarrow[0,1]$
for $p\in\{2,\ldots,n\}$ and non-negative $\mathcal{B}(\mathsf{Z})$-measurable
functions $G_{p}:\mathsf{Z}\rightarrow\mathbb{R}_{+}$ for $p\in\{1,\ldots,n\}$.
We write $\mathcal{M}:=(M_{p})_{p\in\{2,\ldots,n\}}$ and $\mathcal{G}:=(G_{p})_{p\in\{1,\ldots,n\}}$.
We define for any $p\in\{1,\ldots,n\}$, the measure $\gamma_{p}$
by
\begin{equation}
\gamma_{p}(A):=\int_{A}\left[\prod_{q=1}^{p}G_{q}(z_{q})\right]\mu({\rm d}z_{1})\prod_{q=2}^{p}M_{q}(z_{q-1},{\rm d}z_{q}),\quad A\in\mathcal{B}(\mathsf{Z}^{p}),\label{eq:unnorm_gamma_p}
\end{equation}
and its associated probability measure $\pi_{p}:=\gamma_{p}(1)^{-1}\gamma_{p}$.
With $\mathsf{X}:=\mathsf{Z}^{n}$ the Feynman--Kac path measure of
interest is the probability measure $\pi:=\pi_{n}$ on $\mathcal{B}(\mathsf{X})$.

Feynman--Kac models naturally accommodate HMMs \citep[see, e.g.,][]{DelMoral2004}.
Indeed, consider a latent general state space Markov chain $(Z_{p})_{p\geq1}$
such that $Z_{1}\sim\mu$ and $\left.Z_{p}\right|Z_{p-1}\sim M_{p}\left(Z_{p-1},\cdot\right)$
and observations $(Y_{1},\ldots,Y_{n})=(y_{1},\ldots,y_{n})$ where
$(Y_{1},\ldots,Y_{n})$ are assumed conditionally independent given
$(Z_{1},\ldots,Z_{n})$ with each $Y_{p}$ depending on $(Z_{1},\ldots,Z_{n})$
only through $Z_{p}$ with $\left.Y_{p}\right|Z_{p}\sim g_{p}\left(Z_{p},\cdot\right)$.
If we let $G_{p}(z_{p})=g_{p}(z_{p},y_{p})$, then $\pi$ corresponds
to the distribution of the states $(Z_{1},\ldots,Z_{n})$ of the unobserved
Markov chain conditional upon the observations $(y_{1},\ldots,y_{n})$
and $\pi(f)$ is the associated conditional expectation of $f(Z_{1},\ldots,Z_{n})$.

More generally, Feynman--Kac models, through expressions such as (\ref{eq:unnorm_gamma_p}),
can be used to define arbitrary distributions $\pi$. The primary
benefit of expressing distributions of interest in this way is that
it allows the approximation of expectations, $\pi(f)$, using SMC
methods. For example, a sophisticated SMC methodology for sampling
from a complex distribution using a sequence of auxiliary ``bridging''
distributions and associated MCMC kernels is the SMC sampler methodology
of \citet{DelMoral2006}.

\subsection{A particle filter\label{sub:A-particle-filter}}

The general purpose SMC algorithm, often referred to as a particle
filter, for estimating $\pi(f)$ was proposed in \citet{stewart1992use},
\citet{Gordon1993} and \citet{Kitagawa1996} in the context of HMMs;
see \citet{Doucet2008} for a recent survey. The algorithm is described
in Algorithm~\ref{alg:SMC}, in which $\mathcal{C}(p_{1},\ldots,p_{N})$
generically denotes the categorical distribution over $\{1,\ldots,N\}$
with probabilities proportional to $(p_{1},\ldots,p_{N})$. We follow
the presentation of \citet{Andrieu2010}. For each $i\in\{1,\ldots,N\}$
and $p\in\{1,\ldots,n\}$, the variable $\zeta_{p}^{i}$ is the $i$th
\emph{particle} at time $p$; when $p>1$, $A_{p-1}^{i}$ is the index
of the \emph{ancestor} of this same particle, in the sense that in
Algorithm~\ref{alg:SMC}, 
\[
\zeta_{p}^{i}\sim M_{p}(\zeta_{p-1}^{A_{p-1}^{i}},\cdot).
\]
This particular instance of SMC uses multinomial resampling, so-called
because the ancestor indices $\{A_{p-1}^{i}:i\in\{1,\ldots,N\}\}$
are conditionally i.i.d.\ given $(\zeta_{p-1}^{1},\ldots,\zeta_{p-1}^{N})$
and distributed according $\mathcal{C}\left(G_{p-1}(\zeta_{p-1}^{1}),\ldots,G_{p-1}(\zeta_{p-1}^{N})\right)$,
implying that the ``offspring'' vector 
\[
\left(\sum_{i=1}^{N}\mathbb{I}(A_{p-1}^{i}=1),\ldots,\sum_{i=1}^{N}\mathbb{I}(A_{p-1}^{i}=N)\right)
\]
is multinomially distributed.

\begin{algorithm}[H]
\protect\caption{SMC: sampling $V\sim Q^{N}$\label{alg:SMC}}

\begin{enumerate}
\item Simulate $\zeta_{1}^{i}\sim\mu$ for $i\in\{1,\ldots,N\}$.
\item {[}If $n>1${]} For $p=2,\ldots,n$

\begin{enumerate}
\item Simulate $A_{p-1}^{i}\sim\mathcal{C}\left(G_{p-1}(\zeta_{p-1}^{1}),\ldots,G_{p-1}(\zeta_{p-1}^{N})\right)$
for $i\in\{1,\ldots,N\}$.
\item Simulate $\zeta_{p}^{i}\sim M_{p}(\zeta_{p-1}^{A_{p-1}^{i}},\cdot)$
for $i\in\{1,\ldots,N\}$.
\end{enumerate}
\item Set $V=(\zeta_{1}^{1},\ldots,\zeta_{n}^{N},A_{1}^{1},\ldots,A_{n-1}^{N})$.\end{enumerate}
\end{algorithm}

In order to define an estimate of the path measure $\pi$ as a function
of the random variables produced in Algorithm~\ref{alg:SMC}, it
is necessary to define the \emph{ancestral lineage} of a particle
index. Following \citet{Andrieu2010}, for any $k\in\{1,\ldots,N\}$,
we define $B^{k}$ to be the $\{1,\ldots,N\}^{n}$-valued random variable
satisfying $B_{n}^{k}:=k$ and $B_{p}^{k}:=A_{p}^{B_{p+1}^{k}}$.
The random variable
\begin{equation}
\zeta^{k}:=(\zeta_{1}^{B_{1}^{k}},\ldots,\zeta_{n}^{B_{n}^{k}})\label{eq:zeta_k}
\end{equation}
is then a path taking values in $\mathsf{X}$. With this notation,
a natural estimate of $\pi$ is 
\begin{equation}
\pi^{N}({\rm d}x):=\frac{\sum_{k=1}^{N}G_{n}(\zeta_{n}^{k})\delta_{\zeta^{k}}({\rm d}x)}{\sum_{j=1}^{N}G_{n}(\zeta_{n}^{j})},\label{eq:piN}
\end{equation}
and we estimate accordingly $\pi(f)$ using $\pi^{N}(f)$. Many asymptotic
properties of $\pi^{N}(f)$ have been established in the literature
\citep[see][for a comprehensive treatment]{DelMoral2004}. For example,
under very mild conditions $\pi^{N}(f)$ converges almost surely to
$\pi(f)$ and a $\sqrt{N}$-central limit theorem holds for $\pi^{N}(f)$
as $N\rightarrow\infty$ \citep{dmmiclo,Chopin2004}.

\subsection{Motivation for perfect simulation\label{sub:Motivation-for-perfect}}

The convergence results mentioned above establish that (\ref{eq:piN})
can be used to approximate the probability measure $\pi$. In fact,
it is clear that if $\pi^{N}(A)$ is ``close'' to $\pi(A)$ for
any $A\in\mathcal{B}(\mathsf{X})$, then the marginal distribution
of a single path $\zeta^{K}$ where $K\sim\mathcal{C}\left(G_{n}(\zeta_{n}^{1}),\ldots,G_{n}(\zeta_{n}^{N})\right)$
is ``close'' to $\pi$. We will see that under mild conditions,
this ``closeness'' can be controlled by choosing a large enough
value of $N$.

Letting $[N]:=\{1,\ldots,N\}$, we define $\mathsf{V}_{N}:=\mathsf{Z}^{Nn}\times[N]^{N(n-1)}$
and $Q^{N}:\mathcal{B}(\mathsf{V}_{N})\rightarrow[0,1]$ to be the
probability measure associated with the variables produced by running
the SMC scheme with $N$ particles described in Algorithm~\ref{alg:SMC}.
We then define a collection of conditional probability measures $\{Q_{v}^{N}:v\in\mathsf{V}_{N}\}$
such that $Q_{v}^{N}:\mathcal{B}([N]^{n}\times\mathsf{X})\rightarrow[0,1]$
is the probability measure associated with picking an ancestral lineage
and a path. We show how to sample from this measure in Algorithm~\ref{alg:Pick-a-particle}.
A joint probability measure $Q^{N}:\mathcal{B}([N]^{n}\times\mathsf{X}\times\mathsf{V}_{N})\rightarrow[0,1]$
is then defined as $Q^{N}({\bf k},{\rm d}x,{\rm d}v):=Q^{N}({\rm d}v)Q_{v}^{N}({\bf k},{\rm d}x)$
and the marginal distribution of $X$ when $(X,V)\sim Q^{N}$ is also
denoted by $Q^{N}:\mathcal{B}(\mathsf{X})\rightarrow[0,1]$ with $Q^{N}({\rm d}x):=\sum_{{\bf k}\in[N]^{n}}\int_{\mathsf{V}_{N}}Q^{N}({\bf k},{\rm d}x,{\rm d}v)$.

\begin{algorithm}[H]
\protect\caption{Pick a path: sampling $({\bf K},X)\sim Q_{v}^{N}$\label{alg:Pick-a-particle}}

Letting $v=(z_{1}^{1},\ldots,z_{n}^{N},a_{1}^{1},\ldots,a_{n-1}^{N})$,
\begin{enumerate}
\item Sample $K_{n}\sim\mathcal{C}\left(G_{n}(z_{n}^{1}),\ldots,G_{n}(z_{n}^{N})\right)$
and {[}if $n>1${]} for $p=n-1,n-2,\ldots,1$ set $K_{p}=a_{p}^{K_{p+1}}$.
\item Set ${\bf K}=(K_{1},\ldots,K_{n})$ and $X=\left(z_{1}^{K_{1}},\ldots,z_{n}^{K_{n}}\right)$.\end{enumerate}
\end{algorithm}

In order to show that $Q^{N}$ is indeed ``close'' to $\pi$, and
because the construction itself is a central component of the $\pi$-invariant
Markov kernel we shall be concerned with, we introduce a collection
of probability measures $\{\bar{Q}_{x}^{N}:x\in\mathsf{X}\}$. For
a given $x\in\mathsf{X}$, called the reference path, $\bar{Q}_{x}^{N}:\mathcal{B}([N]^{n}\times\mathsf{V}_{N})\rightarrow[0,1]$
corresponds to the distribution of the ancestral lineage of the reference
path and the $\mathsf{V}_{N}$-valued variables produced in a \emph{conditional}
SMC algorithm, a procedure introduced in \citet{Andrieu2010} and
described in Algorithm~\ref{alg:Conditional-SMC} that has both theoretical
and methodological applications. We will also denote by $\bar{Q}_{x}^{N}:\mathcal{B}(\mathsf{V}_{N})\rightarrow[0,1]$
the probability measure associated with the marginal distribution
of the $\mathsf{V}_{N}$-valued variables produced, i.e. $\bar{Q}_{x}^{N}({\rm d}v):=\sum_{{\bf k}\in[N]^{n}}\bar{Q}_{x}^{N}({\bf k},{\rm d}v)$.

\begin{algorithm}[H]
\protect\caption{Conditional SMC: sampling $({\bf K},V)\sim\bar{Q}_{x}^{N}$\label{alg:Conditional-SMC}}

\begin{enumerate}
\item Sample $K_{p}$ independently and uniformly on $\{1,\ldots,N\}$ for
each $p\in\{1,\ldots,n\}$.
\item Set $\zeta_{1}^{K_{1}}=x_{1}$ and simulate $\zeta_{1}^{i}\sim\mu$
for $i\in\{1,\ldots,N\}\setminus\{K_{1}\}$.
\item {[}If $n\geq1${]} For $p=2,\ldots,n$

\begin{enumerate}
\item Set $A_{p-1}^{K_{p}}=K_{p-1}$ and simulate $A_{p-1}^{i}\sim\mathcal{C}\left(G_{p-1}(\zeta_{p-1}^{1}),\ldots,G_{p-1}(\zeta_{p-1}^{N})\right)$
for $i\in\{1,\ldots,N\}\setminus\{K_{p}\}$.
\item Set $\zeta_{p}^{K_{p}}=x_{p}$ and simulate $\zeta_{p}^{i}\sim M_{p}(\zeta_{p-1}^{A_{p-1}^{i}},\cdot)$
for $i\in\{1,\ldots,N\}\setminus\{K_{p}\}$.
\end{enumerate}
\item Set ${\bf K}=(K_{1},\ldots,K_{n})$ and $V=(\zeta_{1}^{1},\ldots,\zeta_{n}^{N},A_{1}^{1},\ldots,A_{n-1}^{N})$.\end{enumerate}
\end{algorithm}

Central to some of the methodology in \citet{Andrieu2010} is the
behaviour of a random variable known as the normalizing constant estimate,
renormalized so that it has expectation $1$ when evaluated at $({\bf K},X,V)\sim Q^{N}$
\citep{Andrieu2009,Andrieu2012,Leea,andrieu2013uniform}. Letting
$v=(z_{1}^{1},\ldots,z_{n}^{N},a_{1}^{1},\ldots,a_{n-1}^{N})$, this
renormalized estimate is the Radon--Nikodym derivative 
\begin{equation}
\phi^{N}(v):=\frac{\pi({\rm d}x)\bar{Q}_{x}^{N}({\bf k},{\rm d}v)}{Q^{N}({\bf k},{\rm d}x,{\rm d}v)}=\frac{1}{\gamma_{n}(1)}\prod_{p=1}^{n}\frac{1}{N}\sum_{j=1}^{N}G_{p}(z_{p}^{j}),\label{eq:RN-derive-NC-estimate}
\end{equation}
which depends neither on the value of the picked particle $X$ nor
on its ancestral lineage ${\bf K}$. Letting $\mathsf{E}^{N}$ denote
expectation w.r.t.\ the law of $Q^{N}$ and $\bar{\mathsf{E}}_{x}^{N}$
denote expectation w.r.t.\ the law of $\bar{Q}_{x}^{N}$, the following
result provides a non-asymptotic upper bound on the Radon--Nikodym
derivative between $\pi$ and $Q^{N}$.
\begin{prop}
\label{prop:RNderiv_ub}Assume that for each $p\in\{1,\ldots,n\}$,
$G_{p}(z_{p})>0$ for all $z_{p}\in\mathsf{Z}$. Then 
\[
\frac{\pi({\rm d}x)}{Q^{N}({\rm d}x)}\leq\int_{\mathsf{V}_{N}}\phi^{N}(v)\bar{Q}_{x}^{N}({\rm d}v)=\bar{\mathsf{E}}_{x}^{N}\left[\phi^{N}(V)\right].
\]
\end{prop}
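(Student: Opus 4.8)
The plan is to start from the identity (\ref{eq:RN-derive-NC-estimate}), which gives $\pi({\rm d}x)\bar{Q}_x^N({\bf k},{\rm d}v) = \phi^N(v) Q^N({\bf k},{\rm d}x,{\rm d}v)$ for each ancestral lineage ${\bf k}\in[N]^n$. First I would integrate this identity over $v\in\mathsf{V}_N$ and sum over ${\bf k}\in[N]^n$. On the left, using $\bar{Q}_x^N({\rm d}v)=\sum_{{\bf k}}\bar{Q}_x^N({\bf k},{\rm d}v)$, this produces $\pi({\rm d}x)\int_{\mathsf{V}_N}\phi^N(v)\bar{Q}_x^N({\rm d}v) = \pi({\rm d}x)\,\bar{\mathsf{E}}_x^N[\phi^N(V)]$. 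On the right, $\sum_{{\bf k}}\int_{\mathsf{V}_N}\phi^N(v)Q^N({\bf k},{\rm d}x,{\rm d}v)$; since $\phi^N(v)\geq 0$ everywhere (it is a normalized product of averages of the strictly positive $G_p$), dropping the $\phi^N(v)$ factor and bounding it is not quite what we want — instead I would note the subtlety and argue differently below.

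The cleaner route: the stated bound is equivalent to showing that for every $A\in\mathcal{B}(\mathsf{X})$,
\[
\pi(A) \leq \int_A \bar{\mathsf{E}}_x^N[\phi^N(V)]\, Q^N({\rm d}x).
\]
From the identity (\ref{eq:RN-derive-NC-estimate}) rearranged as $Q^N({\bf k},{\rm d}x,{\rm d}v) = \phi^N(v)^{-1}\pi({\rm d}x)\bar{Q}_x^N({\bf k},{\rm d}v)$ — valid wherever $\phi^N(v)>0$, which holds $Q^N$-a.e. under the positivity assumption on the $G_p$ — I would write, for $A\in\mathcal{B}(\mathsf{X})$,
\[
\int_A \bar{\mathsf{E}}_x^N[\phi^N(V)]\, Q^N({\rm d}x)
= \sum_{{\bf k}\in[N]^n}\int_{A}\int_{\mathsf{V}_N} \phi^N(v)\, \bar{Q}_x^N({\bf k},{\rm d}v)\, Q^N({\rm d}x).
\]
Now the missing link is to relate $Q^N({\rm d}x)$ (the marginal of the picked path) to the joint $Q^N({\bf k},{\rm d}x,{\rm d}v)$ and then to $\pi$. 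Here I would use that $Q^N({\rm d}x)\geq \frac{1}{\text{something}}$ is not the mechanism; rather, the key observation is that $Q^N({\bf k},{\rm d}x,{\rm d}v)\leq$ (the measure obtained by also running the conditional SMC), i.e. one shows $Q^N({\rm d}x)\bar{Q}_x^N({\bf k},{\rm d}v) \geq$ or $\leq$ the joint, coming from the fact that conditionally on $V=v$, picking the lineage in Algorithm~\ref{alg:Pick-a-particle} and in the conditional-SMC construction are compatible. The honest statement I would invoke is the standard identity $Q^N({\rm d}x)\geq \int_{\mathsf{V}_N} \phi^N(v)^{-1}\cdot(\text{stuff})$; concretely, summing (\ref{eq:RN-derive-NC-estimate}) over ${\bf k}$ and integrating over $v$ gives
\[
\pi({\rm d}x)\,\bar{\mathsf{E}}_x^N[\phi^N(V)] = \sum_{{\bf k}}\int_{\mathsf{V}_N}\phi^N(v)\, Q^N({\bf k},{\rm d}x,{\rm d}v),
\]
and since $Q^N({\bf k},{\rm d}x,{\rm d}v)\leq Q^N({\rm d}x)$ as set functions on $\mathsf{X}$ after summing $\sum_{\bf k}$ is an equality — wait, $\sum_{\bf k}\int_{\mathsf{V}_N} Q^N({\bf k},{\rm d}x,{\rm d}v)=Q^N({\rm d}x)$ — but with the extra $\phi^N(v)\geq 0$ weight inside, monotonicity does not immediately give the inequality unless $\phi^N$ is bounded. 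So the real content is: $\phi^N(v)\leq 1$? No. The correct closing step is that $\phi^N(v)^{-1}$ appears, not $\phi^N(v)$: from $Q^N({\bf k},{\rm d}x,{\rm d}v)=\pi({\rm d}x)\bar{Q}_x^N({\bf k},{\rm d}v)/\phi^N(v)$ and $\phi^N(v)\leq \bar{\mathsf{E}}_x^N[\phi^N(V)]$... I would instead integrate the reciprocal identity: $Q^N({\rm d}x)=\sum_{\bf k}\int \pi({\rm d}x)\bar{Q}_x^N({\bf k},{\rm d}v)/\phi^N(v) = \pi({\rm d}x)\,\bar{\mathsf{E}}_x^N[1/\phi^N(V)]$, whence $\pi({\rm d}x)/Q^N({\rm d}x) = 1/\bar{\mathsf{E}}_x^N[1/\phi^N(V)] \leq \bar{\mathsf{E}}_x^N[\phi^N(V)]$ by Jensen's inequality applied to the convex function $t\mapsto 1/t$ (equivalently, the harmonic mean is at most the arithmetic mean).

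So the skeleton is: (i) rearrange (\ref{eq:RN-derive-NC-estimate}) into $Q^N({\bf k},{\rm d}x,{\rm d}v)=\phi^N(v)^{-1}\pi({\rm d}x)\bar{Q}_x^N({\bf k},{\rm d}v)$, legitimate since $\phi^N(v)>0$ under the assumption $G_p>0$; (ii) sum over ${\bf k}\in[N]^n$ and integrate over $v$ to obtain $Q^N({\rm d}x)=\bar{\mathsf{E}}_x^N[\phi^N(V)^{-1}]\,\pi({\rm d}x)$, i.e. $\pi({\rm d}x)/Q^N({\rm d}x) = \bar{\mathsf{E}}_x^N[\phi^N(V)^{-1}]^{-1}$; (iii) apply Jensen's inequality (harmonic-arithmetic mean inequality) to conclude $\bar{\mathsf{E}}_x^N[\phi^N(V)^{-1}]^{-1}\leq \bar{\mathsf{E}}_x^N[\phi^N(V)]$. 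I expect step (ii) to be the one needing care: one must justify that the Radon--Nikodym identity (\ref{eq:RN-derive-NC-estimate}), stated fibrewise in ${\bf k}$, really does sum/integrate to the claimed marginal identity — in particular that $Q^N({\rm d}x)$, defined as $\sum_{\bf k}\int_{\mathsf{V}_N}Q^N({\bf k},{\rm d}x,{\rm d}v)$, is absolutely continuous with respect to $\pi$ (guaranteed by the positivity of the $G_p$, since then every path produced by the conditional SMC has positive $\bar{Q}_x^N$-density and $\phi^N$ is finite and positive), and that Fubini applies to interchange the ${\bf k}$-sum and $v$-integral with the identity. The positivity hypothesis on the $G_p$ is exactly what makes $\phi^N$ strictly positive and finite, so the reciprocal is well-defined and the argument goes through.
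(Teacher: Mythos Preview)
Your final skeleton (i)--(iii) is correct and is exactly the paper's proof: use positivity of the $G_p$ to ensure $\phi^N(v)\in(0,\infty)$, rewrite $Q^N({\rm d}x)=\pi({\rm d}x)\,\bar{\mathsf{E}}_x^N[1/\phi^N(V)]$ by summing and integrating the Radon--Nikodym identity, then apply Jensen's inequality to $t\mapsto 1/t$. The preceding paragraphs of false starts should simply be deleted; the argument you land on is the right one and needs no further justification beyond noting that strict positivity of the potentials makes the two joint measures equivalent.
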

\begin{rem*}
A similar, but slightly more complicated expression holds without
the strict positivity of the potentials.
\end{rem*}
Proposition~\ref{prop:RNderiv_ub}, together with results in \citet[Section~4]{andrieu2013uniform}
implies a uniform upper bound on $\pi({\rm d}x)/Q^{N}({\rm d}x)$
when the potentials in $\mathcal{G}$ are additionally uniformly bounded
above. In practice, SMC methods work well for models which have ``forgetting
properties''. This notion can be quantified, and used to bound quantities
that arise in their analysis. In particular we define for each $p\in\{1,\ldots,n\}$
and $k\in\{0,\ldots n-p\}$ the function $F_{p,k}:\mathsf{Z}\rightarrow\mathbb{R}_{+}$
by
\[
F_{p,k}(z_{p}):=\frac{\gamma_{p-1}(1)}{\gamma_{p+k}(1)}\int_{\mathsf{Z}^{k}}\left[\prod_{q=p}^{p+k}G_{q}(z_{q})\right]\left[\prod_{q=p+1}^{p+k}M_{q}(z_{q-1},{\rm d}z_{q})\right],
\]
with the convention that $\gamma_{0}(1)=1$. The value $F_{p,k}(z_{p})$
can be viewed as an appropriately normalized expectation of the product
of the $k+1$ potentials evaluated along a path started at $Z_{p}=z_{p}$
and evolving according to the dynamics in $\mathcal{M}$ for $k$
steps. In the context of an HMM, e.g., $F_{p,k}(z_{p})$ is the likelihood
of the $k+1$ observations $(y_{p},\ldots,y_{p+k})$ conditional upon
$Z_{p}=z_{p}$ divided by the likelihood of the same observations
conditional upon $y_{1},\ldots,y_{p-1}$, and indeed one can see that
if $\sup_{z_{p}\in\mathsf{Z}}F_{p,k}(z_{p})$ eventually stabilizes
as $k$ increases then the HMM ``forgets'' in a particular sense
the distribution of $Z_{p}$ at time $p+k$. The level of forgetting
of the finite time Feynman--Kac model as a whole can be crudely, but
succinctly summarized in the single quantity
\[
F:=\max_{p\in\{1,\ldots,n\},k\in\{0,\ldots,n-p\}}\sup_{z_{p}\in\mathsf{Z}}F_{p,k}(z_{p}),
\]
which is necessarily finite whenever the potentials in $\mathcal{G}$
are upper bounded.
\begin{prop}
\label{prop:RN_pi_QN}Assume there exists $B<\infty$ such that for
each $p\in\{1,\ldots,n\}$, $0<G_{p}(z_{p})<B$ for all $z_{p}\in\mathsf{Z}$.
Then for any $N\geq2$, 
\begin{equation}
\sup_{x\in\mathsf{X}}\frac{\pi({\rm d}x)}{Q^{N}({\rm d}x)}\leq\bar{\mathsf{E}}_{x}^{N}\left[\phi^{N}(V)\right]\leq\left(1+\frac{2(F-1)}{N}\right)^{n}.\label{eq:RNderivate-bound-smc}
\end{equation}

\end{prop}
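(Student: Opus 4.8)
The plan is to combine Proposition~\ref{prop:RNderiv_ub}, which already gives $\pi({\rm d}x)/Q^{N}({\rm d}x)\leq\bar{\mathsf{E}}_{x}^{N}[\phi^{N}(V)]$, with the explicit product form of $\phi^{N}$ in \eqref{eq:RN-derive-NC-estimate}. So the only real work is to bound $\bar{\mathsf{E}}_{x}^{N}[\phi^{N}(V)]$ uniformly in $x$ by $(1+2(F-1)/N)^{n}$. Writing $\phi^{N}(v)=\gamma_{n}(1)^{-1}\prod_{p=1}^{n}\bigl(\tfrac1N\sum_{j=1}^{N}G_{p}(z_{p}^{j})\bigr)$, I would try to compute $\bar{\mathsf{E}}_{x}^{N}[\phi^{N}(V)]$ sequentially in $p$, conditioning on the sigma-algebra $\mathcal{F}_{p-1}$ generated by the conditional-SMC variables up to generation $p-1$. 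The point is that under $\bar{Q}_{x}^{N}$, given $\mathcal{F}_{p-1}$, the particle $\zeta_{p}^{K_{p}}$ is pinned to $x_{p}$ while the other $N-1$ particles have ancestors drawn from $\mathcal{C}(G_{p-1}(\zeta_{p-1}^{1}),\ldots)$ and are then propagated by $M_{p}$; this resampling-then-mutation structure is what makes the telescoping work.

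The key algebraic step is to show that for each $p$,
\[
\bar{\mathsf{E}}_{x}^{N}\!\left[\left.\frac{1}{N}\sum_{j=1}^{N}G_{p}(\zeta_{p}^{j})\,\right|\mathcal{F}_{p-1}\right]
=\frac{1}{N}G_{p}(x_{p}) + \frac{N-1}{N}\cdot\frac{\sum_{i=1}^{N}G_{p-1}(\zeta_{p-1}^{i})(M_{p}G_{p})(\zeta_{p-1}^{i})}{\sum_{i=1}^{N}G_{p-1}(\zeta_{p-1}^{i})},
\]
(with the obvious modification at $p=1$, where the free particles are drawn from $\mu$). Then one rewrites $\sum_{i}G_{p-1}(\zeta_{p-1}^{i})(M_pG_p)(\zeta_{p-1}^{i})$ and the denominator in terms of the functions $F_{p,k}$: the ratio $(M_pG_p)(z_{p-1})/\text{(something)}$ telescopes against $\gamma_{p}(1)/\gamma_{p-1}(1)$ via the definition of $F_{p-1,1}$ and $F_{p-1,0}$. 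The upshot should be that after multiplying the per-generation bounds and cancelling the normalising constants against $\gamma_n(1)^{-1}$, each factor contributes at most $1 + (N-1)/N\cdot(\text{ratio}-1)$ where the ratio is controlled by $F$; crude bounding of $(N-1)/N\le 1$ together with $\text{ratio}\le F$ and a term like $1/N$ coming from the pinned particle gives a per-step factor of at most $1 + \tfrac{1}{N}(F-1) + \tfrac{1}{N}(F-1)$ — hence the $2(F-1)/N$. I would be slightly careful to get the combinatorial constant exactly $2$ and not something larger; that is where the bookkeeping matters.

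The main obstacle I anticipate is making the conditional-expectation recursion rigorous and in particular handling the coupling between the resampling weights at generation $p-1$ (which involve all $N$ particles, including the pinned one $\zeta_{p-1}^{K_{p-1}}=x_{p-1}$) and the definition of $F_{p-1,k}$, which is a "clean" Feynman--Kac quantity not referencing the particle system. Concretely, I expect to need an inequality of the shape $\sum_{i}G_{p-1}(\zeta_{p-1}^{i})(M_pG_p)(\zeta_{p-1}^{i})\le F\cdot(\gamma_p(1)/\gamma_{p-1}(1))\cdot$(appropriate normaliser) uniformly over particle configurations, which follows from $\sup_z F_{p-1,1}(z)\le F$ but requires relating the empirical particle measure to $\gamma$'s. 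A convenient route is to track the product of empirical averages $\prod_{q<p}\tfrac1N\sum_j G_q(\zeta_q^j)$ as a running quantity and show it behaves like an estimator of $\gamma_{p-1}(1)$ with a controlled upward bias; then the telescoping is transparent. Alternatively, one may cite the identity $\mathsf{E}^N[\phi^N(V)]=1$ and the known decomposition of the conditional-SMC normalising-constant estimate from \citet{andrieu2013uniform} and just track how pinning one lineage inflates it — this would shorten the argument considerably and is probably the cleanest write-up.
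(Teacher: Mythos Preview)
Your closing suggestion --- invoke the conditional-SMC normalising-constant result from \citet{andrieu2013uniform} --- is exactly what the paper does, and nothing more. Its proof is three lines: the boundedness hypothesis implies condition~(A1) of \citet{andrieu2013uniform} with their constant $\alpha$ equal to $F$; their Proposition~14 then gives $\sup_{x,y}\bar{\mathsf{E}}_{x,y}^{N}[\phi^{N}(V)]\le(1+2(F-1)/N)^{n}$ for the \emph{doubly} conditional SMC with two pinned paths $x,y$; and finally $\sup_{x}\bar{\mathsf{E}}_{x}^{N}\le\sup_{x,y}\bar{\mathsf{E}}_{x,y}^{N}$. The first inequality in \eqref{eq:RNderivate-bound-smc} is Proposition~\ref{prop:RNderiv_ub}, as you note.

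Your primary route --- a tower-property recursion under the singly conditional SMC --- is the right idea in spirit (it is essentially what underlies the cited Proposition~14), but the step ``bound each per-generation factor by $(1+2(F-1)/N)\cdot\gamma_{p}(1)/\gamma_{p-1}(1)$'' does not go through as written. The conditional mean you display is correct, yet its second term $\tfrac{N-1}{N}\sum_{i}G_{p-1}(\zeta_{p-1}^{i})(M_{p}G_{p})(\zeta_{p-1}^{i})\big/\sum_{i}G_{p-1}(\zeta_{p-1}^{i})$ is a functional of the empirical particle system at time $p-1$, not a deterministic multiple of $\gamma_{p}(1)/\gamma_{p-1}(1)$, so there is no pointwise per-step constant to extract and then multiply over $p$. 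To close the recursion one must instead track a family of running quantities of the form $\hat{\gamma}_{p-1}\cdot\tfrac{1}{N}\sum_{i}Q_{p,p+k}(1)(\zeta_{p}^{i})$ for varying look-ahead $k$, and it is the uniform bound $\sup_{z}Q_{p,p+k}(1)(z)/\eta_{p}Q_{p,p+k}(1)\le F$ (i.e.\ condition~(A1), cf.\ Appendix~\ref{sec:Expressions-for}) that makes the telescoping close --- that is the substance of the external result, not a bookkeeping afterthought. Your heuristic for the constant $2$ is also off: it does not arise from two $(F-1)/N$ contributions of a single pinned lineage, but from the \emph{two} pinned paths in the doubly conditional construction; the paper then obtains the singly conditional bound only by domination. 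A direct singly conditional computation would naturally yield a slightly different constant, so if you want exactly $2(F-1)/N$ the cleanest route is through the doubly conditional bound, as the paper does.
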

The second inequality in (\ref{eq:RNderivate-bound-smc}) is tight.
In interpreting (\ref{eq:RNderivate-bound-smc}), it is informative
to consider the effect of doubling the number of particles $N$. Letting
$C=2(F-1)$, we have $\left(1+\frac{C}{N}\right)^{n}/\left(1+\frac{C}{2N}\right)^{n}=\left(\frac{2N+2C}{2N+C}\right)^{n}=\left(1+\frac{C}{2N+C}\right)^{n}$
which is approximately $2^{n}$ if $C\gg2N$ and $1+\frac{nC}{2N}$
when $2N\gg C$. Hence the improvement is close to exponential for
small $N$ but asymptotically linear.

While in general $F$ will grow exponentially with $n$ when the potentials
have a uniform upper bound, under stronger assumptions it can be shown
that $F$ has an upper bound that is independent of $n$. In practice,
the performance of these algorithms in a large class of statistical
applications suggests that $\bar{\mathsf{E}}_{x}^{N}\left[\phi^{N}(V)\right]$
grows as a small power of $n$, and there is some recent theory pointing
towards this direction \citep[Section~4.3]{lindsten_pg}. One can
think of $F$ as providing a rough indication of the level of difficulty
of the Feynman-Kac model, in terms of the closeness of $Q^{N}$ to
$\pi$, and more generally as quantifying the suitability of SMC methodology
for a given application.

The form of (\ref{eq:RNderivate-bound-smc}) immediately implies that
rejection sampling to obtain perfect samples according to $\pi$ is
possible \emph{in principle}. However, since it is not possible to
evaluate $\pi({\rm d}x)/Q^{N}({\rm d}x)$ exactly such a procedure
has no existing practical implementation. A natural way to bypass
the calculation of $\pi({\rm d}x)/Q^{N}({\rm d}x)$, consists of using
rejection sampling to sample from the joint distribution defined by
$\pi({\rm d}x)\bar{Q}_{x}^{N}({\rm d}v)$, with proposal $Q^{N}({\rm d}x,{\rm d}v)$.
One can accept proposals with probability $M_{N}^{-1}\phi^{N}(V)$,
where $M_{N}\geq\sup_{v}\phi^{N}(v)$. From (\ref{eq:RN-derive-NC-estimate})
we can see that $M=\gamma_{n}(1)^{-1}\prod_{p=1}^{n}\sup_{x}G_{p}(x)$,
which is independent of $N$ and typically grows exponentially in
$n$, is usually the smallest value satisfying this condition. The
expected acceptance probability $\mathsf{E}^{N}\left[M^{-1}\phi^{N}(V)\right]=M^{-1}$
is unfortunately independent of $N$ and so this algorithm appears
unable to take advantage of (\ref{eq:RNderivate-bound-smc}).

The combination of the potential of Algorithm~\ref{alg:SMC} to produce
perfect samples and the inability to evaluate $\pi({\rm d}x)/Q^{N}({\rm d}x)$
invites modifications of Algorithm~\ref{alg:SMC}. In \citet{RubenthalerPS},
a perfect sampling method is proposed where the mechanism governing
particle offspring is fundamentally changed from selection with a
constant population size at each time to stochastic branching; their
method is able to produce perfect samples from some models (see also
Section~\ref{sub:Remarks}). We will instead use a $\pi$-invariant
Markov kernel proposed in \citet{Andrieu2010} whose main ingredient
is Algorithm~\ref{alg:Conditional-SMC} and is, in many ways, closer
in spirit to Algorithm~\ref{alg:SMC}.

\subsection{Iterated conditional SMC kernel\label{sub:Iterated-conditional-SMC}}

The iterated conditional SMC kernel \citep{Andrieu2010} is a Markov
kernel $P_{N}:\mathsf{X}\times\mathcal{B}(\mathsf{X})\rightarrow[0,1]$
defined by 
\[
P_{N}(x,{\rm d}y):=\int_{\mathsf{V}_{N}}\bar{Q}_{x}^{N}({\rm d}v)Q_{v}^{N}({\rm d}y).
\]
That is, we first run a conditional SMC algorithm (Algorithm~\ref{alg:Conditional-SMC})
to produce the auxiliary variables $V$ and then choose a path (Algorithm~\ref{alg:Pick-a-particle})
given the auxiliary variables $V$ just produced. This Markov kernel
leaves $\pi$ invariant, see, e.g., \citet[Lemma~4]{andrieu2013uniform}.

Our interest in using this particular Markov kernel to facilitate
regenerations and perfect simulation stems from both its good empirical
performance and its uniform ergodicity properties. In particular,
\citet{chopin:singh:2013} provided the first such result implying
that under some assumptions $P_{N}(x,{\rm d}y)\geq\epsilon_{N}\pi({\rm d}y)$
where $\epsilon_{N}\rightarrow1$ but without a rate of convergence.
\citet{andrieu2013uniform} and \citet{lindsten_pg} provide convergence
under the weaker assumption that the potentials in $\mathcal{G}$
are $\pi$-essentially upper bounded with quantitative rates of convergence
for $\epsilon_{N}\rightarrow1$. In addition, \citet{andrieu2013uniform}
showed that this essential boundedness condition is also necessary
for $P_{N}$ to define a uniformly ergodic Markov chain.

\citet[Corollary~15]{andrieu2013uniform} provides the following bound
on $\epsilon_{N}$ in terms of $F$
\begin{equation}
P_{N}(x,{\rm d}y)\geq\left(\frac{N-1}{N+2(F-1)}\right)^{n}\pi({\rm d}y)=\left(1-\frac{2F-1}{N+2(F-1)}\right)^{n}\pi({\rm d}y),\label{eq:icsmc_minorization}
\end{equation}
which implies that $\epsilon_{N}>0$ whenever $F$ is finite and $N\geq2$.
This bound can be compared with (\ref{eq:RNderivate-bound-smc}),
and behaves asymptotically in $N$ as $(1-\frac{(2F-1)n}{N})+\mathcal{O}(N^{-2})$
but for large $n$ and large $F$ the improvement by increasing $N$
is drastic. Finally, under fairly strong assumptions on the elements
of $\mathcal{\mu}$, $\mathcal{M}$ and $\mathcal{G}$, $F$ does
not grow with $n$ and it is possible to control $\epsilon_{N}$ as
$n\rightarrow\infty$ by scaling $N$ only linearly with $n$. In
particular, if $N=Bn$ then $\epsilon_{N}\geq\exp\left(-\frac{2(F-1)}{B}\right)$
\citep[Corollary~15]{andrieu2013uniform}. In order to guarantee $\epsilon_{N}\geq0.75$,
e.g., we can take $B=-2(F-1)/\log(0.75)$, which is slightly less
than $7(F-1)$ and so ensuring that $\epsilon_{N}$ is fairly large
requires a computational complexity in $\mathcal{O}(n^{2})$.

\subsection{Atomic extensions of the Feynman--Kac path measure and iterated conditional
SMC kernel\label{sub:Atomic-extensions-of}}

Let $\check{\mathsf{Z}}:=\mathsf{Z}\cup\alpha$, where $\alpha=\{a\}$
and $a$ is a distinguished point, and $\check{\mathsf{X}}:=\check{\mathsf{Z}}^{n}$,
$a_{n}:=\left(a,\ldots,a\right)$. We propose here a generic way to
define a new probability measure $\check{\pi}$ on $\check{\mathsf{X}}$
which satisfies for some $k\in(0,1)$,
\begin{equation}
\check{\pi}(A)=k\pi(A)+(1-k)\mathbb{I}(a_{n}\in A),\quad A\in\mathcal{B}(\check{\mathsf{X}}),\label{eq:extended_invariant-FeynmanKac}
\end{equation}
where $\pi$ is the Feynman--Kac path measure. We do not follow the
approach suggested in Section~\ref{sec:artificial_atom} to build
$\check{\pi}$ but directly modify the original Feynman--Kac model
to introduce the atom $\alpha_{n}:=\{a_{n}\}$. The benefit of this
approach is that it allows us to come up straightforwardly with a
$\check{\pi}$-invariant Markov kernel with well-understood properties.

The extended Feynman--Kac model is defined by the initial distribution
$\check{\mu}$, the Markov kernels $\check{\mathcal{M}}:=(\check{M}_{p})_{p\in\{2,\ldots,n\}}$
and potential functions $\check{\mathcal{G}}:=(\check{G}_{p})_{p\in\{1,\ldots,n\}}$
on $\check{\mathsf{Z}}$ which are given by
\begin{eqnarray*}
 &  & \check{\mu}(A):=(1-b)\mu(A\cap\mathsf{Z})+b\mathbb{I}\{a\in A\},\\
 &  & \check{M}_{p}(x,A):=M_{p}(x,A)\mathbb{I}\{x\in\mathsf{X}\}+\mathbb{I}\{x=a,A=\alpha\},\\
 &  & \check{G}_{p}(x):=G_{p}(x)\mathbb{I}\{x\in\mathsf{X}\}+\psi_{p}\mathbb{I}\{x=a\},
\end{eqnarray*}

for $A\in\mathcal{B}(\check{\mathsf{Z}})$ where $b\in(0,1)$ and
$\psi_{1},\ldots,\psi_{n}$ are user-defined positive constants. Similar
to the definition of $\gamma_{n}$ in Section~\ref{sub:Feynman--Kac-path-measure},
we define the measure $\check{\gamma}_{n}$ by
\[
\check{\gamma}_{n}(A):=\int_{A}\left[\prod_{p=1}^{n}\check{G}_{p}(x_{p})\right]\check{\mu}({\rm d}x_{1})\prod_{p=2}^{n}\check{M}_{p}(x_{p-1},{\rm d}x_{p}),\quad A\in\mathcal{B}(\check{\mathsf{X}}),
\]
and its associated probability measure $\check{\pi}:=\check{\gamma}_{n}(1)^{-1}\check{\gamma}_{n}$.
It follows easily from these definitions that (\ref{eq:extended_invariant-FeynmanKac})
holds with 
\begin{equation}
k=\frac{1-b}{1-b+b\gamma_{n}(1)^{-1}\prod_{p=1}^{n}\psi_{p}}.\label{eq:valuek_icsmc}
\end{equation}

The transition kernel we propose to use to sample from the extended
Feynman--Kac path measure $\check{\pi}$ is simply the i-cSMC kernel
targetting $\check{\pi}$ instead of $\pi$. We denote by $\check{P}_{N}$
this kernel, where $N$ is the number of particles used. The uniform
ergodicity of the Markov chain with transition kernel $\check{P}_{N}$
is straightforward, and indeed the Markov chain defined by $\check{P}_{N}$
is uniformly ergodic if and only if the Markov chain defined by $P_{N}$
is uniformly ergodic. The forgetting properties of the extended Feynman--Kac
model can be quantified via
\[
\check{F}_{p,k}(z_{p}):=\frac{\check{\gamma}_{p-1}(1)}{\check{\gamma}_{p+k}(1)}\int_{\check{\mathsf{Z}}^{k}}\left[\prod_{q=p}^{p+k}\check{G}_{q}(z_{q})\right]\left[\prod_{q=p+1}^{p+k}\check{M}_{q}(z_{q-1},{\rm d}z_{q})\right],
\]
and
\[
\check{F}:=\max_{p\in\{1,\ldots,n\},k\in\{0,\ldots,n-p\}}\sup_{z_{p}\in\mathsf{Z}}\check{F}_{p,k}(z_{p}).
\]

At this point, it is clear that the notion of forgetting encoded by
$F$ and $\check{F}$ is not the same as forgetting of the initial
distribution of a Feynman--Kac model, which refers to the time $n$
marginal of $\pi_{n}$ being asymptotically independent of $\mu$
as $n\rightarrow\infty$ under suitable assumptions. Indeed, one
can see that conditional upon $Z_{p}=a$, the distribution of $Z_{p+k}$
is a point mass at $a$, regardless of $k$, and yet we can still
have ``forgetting'' in this weaker sense.

\begin{prop}
\label{prop:icsmc-artificial-minorization}The Markov chain with transition
kernel $\check{P}_{N}$ is uniformly ergodic if and only if the potentials
in $\mathcal{G}$ are $\pi$-essentially bounded and $N\geq2$. Moreover,
for any $N\geq2$, 
\begin{equation}
\check{P}_{N}(x,A)\geq\check{\epsilon}_{N}\check{\pi}(A),\quad x\in\check{\mathsf{X}},A\in\mathcal{B}(\check{\mathsf{X}}),\label{eq:minorization_Pcheck}
\end{equation}
where $\check{\epsilon}_{N}>0$ and $\check{\epsilon}_{N}\rightarrow1$
as $N\rightarrow\infty$. One can take $\check{\epsilon}_{N}=\left(\frac{N-1}{N+2(\check{F}-1)}\right)^{n}$.
\end{prop}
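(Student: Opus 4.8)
The plan is to exploit the fact that $\check{P}_N$ is, by its very construction, the iterated conditional SMC kernel of Section~\ref{sub:Iterated-conditional-SMC} applied to the extended Feynman--Kac model $(\check{\mu},\check{\mathcal{M}},\check{\mathcal{G}})$ in place of $(\mu,\mathcal{M},\mathcal{G})$. Every structural result recalled for $P_N$ therefore transfers verbatim to $\check{P}_N$, and the only genuine task is to re-express the hypotheses ``$\check{\pi}$-essential boundedness of $\check{\mathcal{G}}$'' and ``$\check{F}<\infty$'' in terms of the original model. First I would invoke the characterisation of \citet{andrieu2013uniform}: the i-cSMC chain for a Feynman--Kac model is uniformly ergodic if and only if $N\ge2$ and the potentials are essentially bounded with respect to the associated path measure. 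Specialised to the extended model, this states that the chain with kernel $\check{P}_N$ is uniformly ergodic iff $N\ge2$ and $\check{\mathcal{G}}$ is $\check{\pi}$-essentially bounded.

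It then remains to observe that $\check{\mathcal{G}}$ is $\check{\pi}$-essentially bounded precisely when $\mathcal{G}$ is $\pi$-essentially bounded. Indeed $\check{G}_p$ coincides with $G_p$ on $\mathsf{Z}$ and equals the finite constant $\psi_p$ at the artificial point $a$, while by (\ref{eq:extended_invariant-FeynmanKac}) the restriction of $\check{\pi}$ to $\mathsf{X}$ is the positive multiple $k\pi$ and $\check{\pi}(\{a_n\})=1-k<\infty$; hence the relevant essential suprema of $\check{G}_p$ and $G_p$ differ by at most $\max_p\psi_p$, and one is finite iff the other is. This yields the first assertion in both directions. For the minorisation, applying (\ref{eq:icsmc_minorization}) to the extended model gives directly $\check{P}_N(x,\mathrm{d}y)\ge\bigl(\frac{N-1}{N+2(\check{F}-1)}\bigr)^{n}\check{\pi}(\mathrm{d}y)$ for every $N\ge2$, so one may take $\check{\epsilon}_N=\bigl(\frac{N-1}{N+2(\check{F}-1)}\bigr)^{n}$, and $\check{\epsilon}_N>0$ with $\check{\epsilon}_N\to1$ as $N\to\infty$ is immediate once $\check{F}<\infty$ is established.

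To see $\check{F}<\infty$, note that if the defining integral of $\check{F}_{p,k}$ is started at a point $z_p\in\mathsf{Z}$ then the kernels $\check{M}_q$ keep the path inside $\mathsf{Z}$, where $\check{G}_q=G_q$, so that $\check{F}_{p,k}(z_p)=F_{p,k}(z_p)\,\frac{\check{\gamma}_{p-1}(1)\,\gamma_{p+k}(1)}{\gamma_{p-1}(1)\,\check{\gamma}_{p+k}(1)}$; decomposing each $\check{\gamma}_m(1)$ according to whether $x_1=a$ gives $\check{\gamma}_m(1)=(1-b)\gamma_m(1)+b\prod_{q=1}^{m}\psi_q>0$, from which the displayed ratio lies between two finite, strictly positive constants depending only on $b$, the $\psi_q$ and the $\gamma_m(1)$. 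Consequently $\check{F}<\infty$ if and only if $F<\infty$, which holds under the essential-boundedness condition used above (e.g. whenever the potentials are bounded as in Proposition~\ref{prop:RN_pi_QN}). The step I expect to require the most care is not any estimate but the bookkeeping in this translation: one must check that the extended model genuinely satisfies the standing assumptions under which the cited i-cSMC results were proved --- in particular strict positivity of the $\check{G}_p$ and that the degenerate absorbing dynamics $\check{M}_q(a,\cdot)=\delta_a$ do not invalidate the arguments --- and that the notion of essential boundedness in \citet{andrieu2013uniform} refers to the same marginals that appear after the extension.
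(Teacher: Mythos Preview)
Your proposal is correct and follows essentially the same route as the paper: observe that $\check{P}_N$ is the i-cSMC kernel for the extended model, note that $\check{\mathcal{G}}$ is $\check{\pi}$-essentially bounded iff $\mathcal{G}$ is $\pi$-essentially bounded (since the $\psi_p$ are finite constants), and then invoke the results of \citet{andrieu2013uniform}. The paper's actual proof is in fact considerably shorter than yours---it simply records the essential-boundedness equivalence and cites \citet[Theorem~1]{andrieu2013uniform}, leaving implicit both the explicit minorisation constant and the verification that the extended model satisfies the standing hypotheses; your additional bookkeeping on $\check{F}<\infty$ and the degenerate dynamics at $a$ is reasonable caution rather than a departure in strategy.
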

In order to implement either Algorithm~\ref{alg:psim_onestep_generic_split}
or Algorithm~\ref{alg:psim_onestep_generic}, one needs a lower bound
on $\inf_{x\in\check{\mathsf{X}}}\check{P}_{N}(x,\alpha_{n})$. This
can be obtained via a lower bound on the quantities $\check{\pi}(\alpha_{n})$
and a value of $\check{\epsilon}_{N}$ satisfying (\ref{eq:minorization_Pcheck}).
In general, it may not be possible to obtain such bounds analytically
using current theory, although Proposition~\ref{prop:icsmc-artificial-minorization}
is reassuring in that one knows that $\check{\epsilon}_{N}$ can be
made arbitrarily close to $1$ by increasing $N$. We will see in
the next section that a specific choice of the constants $\psi_{1},\ldots,\psi_{n}$
ensures that $\check{F}$ is equal to $F$, and while this choice
is typically not possible one can select $\psi_{1},\ldots,\psi_{n}$
stochastically so that they are close to this specific choice.

\subsection{Parameter settings\label{sub:Parameter-tuning}}

As mentioned earlier in Section~\ref{sec:artificial_atom}, it is
desirable that $\check{\pi}\left(\alpha_{n}\right)$ is neither too
close to $0$ nor too close to $1$. We outline here a simple strategy
for selecting $b$ and $\psi_{1},\ldots,\psi_{n}$ to achieve this.
We note that if $\prod_{p=1}^{n}\psi_{p}=\gamma_{n}(1)$ then we obtain
$\check{\gamma}_{n}(1)=\gamma_{n}(1)$ and $k=1-b$ in (\ref{eq:valuek_icsmc}).
We therefore suggest choosing $b=\frac{1}{2}$ with the constants
$\psi_{1},\ldots,\psi_{n}$ chosen such that $\prod_{p=1}^{n}\psi_{p}\approx\gamma_{n}(1)$.
The behaviour of the i-cSMC Markov chain associated with the extended
model specified by $\check{\mu}$, $\check{\mathcal{M}}$ and $\check{\mathcal{G}}$
can be related to the behaviour of the i-cSMC Markov chain associated
with the model specified by $\mu$, $\mathcal{M}$ and $\mathcal{G}$.
In particular, we can relate the forgetting properties of both models
by relating the constants $F$ and $\check{F}$ associated with the
two models. 
\begin{prop}
\label{prop:Fcheck_F}Assume that for some $E\geq1$ 
\[
\max_{p\in\{1,\ldots.n\}}\max\left\{ \frac{\psi_{p}}{\gamma_{p}(1)/\gamma_{p-1}(1)},\frac{\gamma_{p}(1)/\gamma_{p-1}(1)}{\psi_{p}}\right\} \leq E.
\]
Then $\check{F}\leq FE^{n}$.
\end{prop}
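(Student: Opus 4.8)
The plan is to reduce the comparison of $\check F$ and $F$ to a comparison of normalizing constants. The key structural observation is that in the definitions of both $F$ and $\check F$ the supremum over the reference point $z_p$ is taken over $\mathsf{Z}$ rather than $\check{\mathsf{Z}}$, and that $\check M_q(z,\cdot)=M_q(z,\cdot)$ is supported on $\mathsf{Z}$ whenever $z\in\mathsf{Z}$. Hence, starting from $z_p\in\mathsf{Z}$, an induction shows that the intermediate path $(z_{p+1},\dots,z_{p+k})$ remains in $\mathsf{Z}$ almost surely, where $\check G_q\equiv G_q$ and $\check M_q\equiv M_q$; consequently the integral appearing in $\check F_{p,k}(z_p)$ is exactly the integral appearing in $F_{p,k}(z_p)$, and therefore
\[
\check F_{p,k}(z_p)=F_{p,k}(z_p)\,\frac{\check\gamma_{p-1}(1)/\gamma_{p-1}(1)}{\check\gamma_{p+k}(1)/\gamma_{p+k}(1)},\qquad z_p\in\mathsf{Z}.
\]

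I would then compute $\check\gamma_m(1)$ explicitly. Splitting $\check\mu$ into its restriction to $\mathsf{Z}$ (mass $1-b$) and its atom at $a$ (mass $b$), and using again that both $\mathsf{Z}$ and $\{a\}$ are absorbing under $\check M$, the atomic path contributes $\prod_{p=1}^m\psi_p$ while the $\mathsf{Z}$-valued paths contribute $\gamma_m(1)$, so that
\[
r_m:=\frac{\check\gamma_m(1)}{\gamma_m(1)}=(1-b)+b\,t_m,\qquad t_m:=\frac{\prod_{p=1}^m\psi_p}{\gamma_m(1)},
\]
with $\gamma_0(1)=\check\gamma_0(1)=1$. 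The hypothesis is precisely that every factor in the telescoping identity $t_{p+k}/t_{p-1}=\prod_{q=p}^{p+k}\psi_q\gamma_{q-1}(1)/\gamma_q(1)$ lies in $[E^{-1},E]$, and in particular $t_{p-1}/t_{p+k}\le E^{k+1}$.

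It remains to bound $r_{p-1}/r_{p+k}$. By the mediant inequality, $r_{p-1}/r_{p+k}=\frac{(1-b)+bt_{p-1}}{(1-b)+bt_{p+k}}$ lies between $1$ and $t_{p-1}/t_{p+k}$, so $r_{p-1}/r_{p+k}\le\max\{1,t_{p-1}/t_{p+k}\}\le E^{k+1}$ because $E\ge1$. Combining with the displayed identity gives $\check F_{p,k}(z_p)\le F_{p,k}(z_p)\,E^{k+1}\le F\,E^{k+1}$, and since $p\ge1$ forces $k+1\le n-p+1\le n$, taking the maximum over $p$ and $k$ and the supremum over $z_p\in\mathsf{Z}$ yields $\check F\le F\,E^{n}$. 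The only step requiring any care is the first one --- verifying that the extended Feynman--Kac model genuinely degenerates to the original one along $\mathsf{Z}$-valued paths, so that the ratio $\check F_{p,k}(z_p)/F_{p,k}(z_p)$ is governed entirely by the normalizing constants; the remainder is the bookkeeping of the telescoping ratio and the one-line mediant estimate.
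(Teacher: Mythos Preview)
Your argument is correct and follows the same blueprint as the paper's (which is Lemma~\ref{lem:Acheck} in the appendix): decompose the extended quantity into its atomic contribution and its $\mathsf Z$-contribution, observe that along $\mathsf Z$-valued paths the extended dynamics coincide with the original ones, and control the resulting ratio by a mediant-type bound together with the telescoping estimate $\prod_{q=p}^{p+k}\bigl[\psi_q\gamma_{q-1}(1)/\gamma_q(1)\bigr]^{\pm1}\le E^{k+1}\le E^{n}$. Your formulation through $r_m=\check\gamma_m(1)/\gamma_m(1)=(1-b)+bt_m$ is a little more direct than the paper's semigroup language, where the same splitting appears as $\check\eta_p\check Q_{p,p+k}(1)=\check\eta_p(\{a\})\prod_q\psi_q+\check\eta_p(\mathsf Z)\,\eta_pQ_{p,p+k}(1)$, but the content is identical. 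One point worth flagging: you explicitly exploit that the displayed definition of $\check F$ takes the supremum over $z_p\in\mathsf Z$ rather than $\check{\mathsf Z}$, which lets you bypass the starting point $z_p=a$; the paper's lemma does treat $z_p=a$ separately (the same $E^{n}$ bound applies there too), and that case is in fact needed for the downstream use of $\check F$ in Proposition~\ref{prop:icsmc-artificial-minorization}.
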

Although the dependence of this bound on $n$ is exponential, for
models in which $F$ is not too large, small values of $E$ result
with high probability when each $\psi_{p}$ is an SMC estimate of
$\gamma_{p}(1)/\gamma_{p-1}(1)$. Practically, we run Algorithm~\ref{alg:SMC}
with a large number $N'$ of particles for the model specified by
$\mu$, $\mathcal{M}$ and $\mathcal{G}$ and approximate each $\gamma_{p}(1)/\gamma_{p-1}(1)$
by $\psi_{p}:=\frac{1}{N'}\sum_{i=1}^{N'}G_{p}(\zeta_{p}^{i})$, which
is a consistent estimate as $N'\rightarrow\infty$. Then one can run
Algorithm~\ref{alg:SMC} with $N'$ particles for the model specified
by $\check{\mu}$, $\check{\mathcal{M}}$ and $\check{\mathcal{G}}$
and estimate $\check{\pi}(\alpha_{n})$ by $\frac{1}{N'}\sum_{i=1}^{N'}\mathbb{I}(\zeta_{n}^{i}=a)$.
Since these are only estimates, it is prudent to conservatively use
a value smaller than this latter estimate, e.g., by using a concentration
inequality after multiple runs of Algorithm~\ref{alg:SMC} for the
model specified by $\check{\mu}$, $\check{\mathcal{M}}$ and $\check{\mathcal{G}}$.
Ideally, this estimate is close to $b$, as suggested by (\ref{eq:valuek_icsmc}).
In practice, when $\check{\pi}(\alpha_{n})$ is close to $b$, this
is also a good indication that each $\psi_{p}$ is very close to $\gamma_{p}(1)/\gamma_{p-1}(1)$
since
\[
\gamma_{n}(1)=\prod_{p=1}^{n}\frac{\gamma_{p}(1)}{\gamma_{p-1}(1)}\approx\prod_{p=1}^{n}\psi_{p}.
\]

Approximating or bounding $F$ is challenging, but explicit upper
bounds can be obtained under strong assumptions (see Appendix~\ref{sec:Expressions-for}).
Although the finiteness of $F$ is guaranteed when $n$ is finite
whenever the potentials in $\mathcal{G}$ are bounded, little attention
has been paid in the literature to obtaining explicit upper bounds
that are reasonably small. For example, \citet{whiteley2013stability}
provides only the existence of $F<\infty$ as $n\rightarrow\infty$
but no explicit bounds. One situation in which an upper bound on $F$
can be sufficiently small to be of practical use is discussed in Section~\ref{sec:Applications}.

In practice, we have found that this procedure for selecting $\psi_{1},\ldots,\psi_{n}$
and estimating bounds on $\check{\epsilon}_{N}$ and $\check{\pi}(\alpha_{n})$
is very effective, even though deterministic bounds on $\check{\epsilon}_{N}$
and $\check{\pi}(\alpha_{n})$ are not available. Since this procedure
does not provide a lower bound on $\inf_{x\in\check{\mathsf{X}}}\check{P}_{N}(x,\alpha_{n})$,
one can only hope for a bound that holds with high probability. In
Sections~\ref{sub:Diagnostics-and-estimation} and~\ref{sub:Sensitivity-to-the}
we discuss diagnostics and estimation of $\inf_{x\in\check{\mathsf{X}}}\check{P}_{N}(x,\alpha_{n})$,
and the sensitivity of the procedure to choosing a value of $\beta$
superior to $\inf_{x\in\check{\mathsf{X}}}\check{P}_{N}(x,\alpha_{n})$.

\section{Discussion\label{sec:discussion}}

\subsection{Diagnostics and estimation of $\beta$\label{sub:Diagnostics-and-estimation}}

In some situations, it may not be possible to obtain a lower bound
$\beta$ on $\underline{p}$. One can nevertheless attempt to estimate
such a bound and there are many strategies one may use to perform
this estimation. In general, one seeks to find the maximum of the
function $g:\mathsf{X}\rightarrow[0,1]$ defined by $x\mapsto1-p(x)$,
and one can obtain unbiased estimates of $g(x)$ for any $x\in\mathsf{X}$
by sampling from $\Pi(x,\cdot)$. This suggests that a stochastic
optimization procedure could be used to estimate $\underline{p}$.
An alternative approach is to simulate the Markov chain $\mathbf{X}$
for a long period of time and use the chain to estimate $\underline{p}$.
It is possible then to use the same realization of the Markov chain
to impute the regeneration indicators, i.e., run Algorithm~\ref{alg:Practical_imputation}
retrospectively.

It is also possible to monitor the validity of the assumption that
a chosen $\beta$ satisfies $\beta\leq\underline{p}$ using a simple
and fairly inexpensive diagnostic. At each state $x$ visited during
the course of Algorithm~\ref{alg:Practical_imputation} or~\ref{alg:Practical_multigamma},
one can simply simulate $p(x)$-coins until their average exceeds
$\beta$. The following result shows that this strategy will require
little computational effort if $\underline{p}>\beta$ but that a failure
to terminate after a reasonable number of $p(x)$-coins have been
simulated indicates that $\underline{p}$ may be inferior to $\beta$. 
\begin{prop}
\label{prop:diagnostic}Let $p\in[0,1]$ and $(B_{i})_{i\geq1}$ be
i.i.d. ${\rm Bernoulli}(p)$ random variables. Let $\tau$ be the
stopping time defined by $\tau:=\inf\{n\geq1:\frac{1}{n}\sum_{i=1}^{n}B_{i}>\beta\}$.
Then the following results hold:
\begin{enumerate}
\item If $p<\beta$, then $\mathbb{P}(\tau<\infty)<1$.
\item If $p=\beta$, then $\mathbb{P}(\tau<\infty)=1$ and $\mathbb{E}(\tau)=\infty$.
\item If $p>\beta$, then $\mathbb{P}(\tau<\infty)=1$ and $\mathbb{E}(\tau)\leq(1-\beta)/(p-\beta)$.
\end{enumerate}
\end{prop}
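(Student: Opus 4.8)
The plan is to analyze the random walk $S_n := \sum_{i=1}^n (B_i - \beta)$ and to observe that $\tau = \inf\{n \geq 1 : S_n > 0\}$, which is the first strict ascending ladder epoch of a random walk whose increments $B_i - \beta$ take the two values $1-\beta > 0$ (with probability $p$) and $-\beta < 0$ (with probability $1-p$). This reframing makes all three cases accessible via standard random-walk theory, but each case needs a slightly different tool. For case (iii) I would moreover look for a direct argument via a supermartingale or a one-step optional-stopping computation, since the claimed explicit bound $\mathbb{E}(\tau) \leq (1-\beta)/(p-\beta)$ is sharp and suggests an exact identity is lurking.

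For case (iii), where the increments have strictly positive mean $p - \beta > 0$, the drift pushes $S_n \to \infty$ a.s., so $\mathbb{P}(\tau < \infty) = 1$ is immediate (e.g. from the strong law, $S_n/n \to p-\beta > 0$). For the quantitative bound I would apply Wald's identity: since $\mathbb{E}(\tau) < \infty$ (which follows because, restricting to the subsequence of steps and using that $S$ has positive drift with bounded increments, $\tau$ has exponential tails — or simply because $\tau$ is dominated by a geometric-type waiting time for $n$ consecutive successes scaled appropriately), Wald gives $\mathbb{E}(S_\tau) = (p-\beta)\,\mathbb{E}(\tau)$. Now $S_\tau > 0$ and, because each downward step has size $\beta \le 1 - \beta$ is not quite what I want; rather, observe $S_{\tau-1} \leq 0$ and $S_\tau = S_{\tau-1} + (B_\tau - \beta) \leq 0 + (1-\beta) = 1-\beta$, since the increment that first makes $S$ positive must be the positive increment $1-\beta$. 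Hence $\mathbb{E}(S_\tau) \leq 1-\beta$, giving $\mathbb{E}(\tau) \leq (1-\beta)/(p-\beta)$. I should double-check the edge issue that $S_\tau$ could exceed $1-\beta$ only if $S_{\tau-1}$ were already positive, contradicting minimality of $\tau$; so the bound on the overshoot is clean.

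For case (ii), $p = \beta$, the increments have mean zero, so $S_n$ is a mean-zero random walk with bounded, non-degenerate increments; it is recurrent and oscillates, hence $\mathbb{P}(\tau < \infty) = 1$. That $\mathbb{E}(\tau) = \infty$ is the classical fact that the mean first ascending ladder epoch of a mean-zero (nontrivial) random walk is infinite — provable by a renewal/duality argument, or by noting that if $\mathbb{E}(\tau) < \infty$ Wald would force $\mathbb{E}(S_\tau) = 0$, contradicting $S_\tau > 0$ a.s. This last contradiction argument is the cleanest and I would use it. For case (i), $p < \beta$, the walk has negative drift, $S_n \to -\infty$ a.s., so $\sup_n S_n < \infty$ a.s.; the event $\{\tau = \infty\} = \{\sup_n S_n \leq 0\} = \{S_n \leq 0 \ \forall n\}$ has positive probability — indeed at least $\mathbb{P}(B_1 = 0) = 1 - p > 0$ from the first step alone is not sufficient to stay below forever, so instead I would argue $\mathbb{P}(\tau = \infty) \geq \mathbb{P}(S_n \leq 0 \ \forall n) > 0$ using that $S_n \to -\infty$ together with a standard "all ladder heights finite" estimate, or more elementarily: $\mathbb{P}(\tau = \infty) \geq 1 - \mathbb{E}(\#\{n : S_n > 0 \text{ is a record}\})$-type bound; cleanest is to cite that for a random walk with negative drift the first ascending ladder epoch is defective. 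Alternatively, a self-contained bound: $\mathbb{P}(\tau < \infty) \leq \mathbb{E}\big[\sum_{n\geq 1}\mathbb{I}(S_n > 0, S_{n-1}\le 0)\big]$, and each such term can be controlled, but the slickest is the defectiveness of the ladder epoch.

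The main obstacle is the quantitative bound in (iii): making the overshoot control rigorous and justifying $\mathbb{E}(\tau) < \infty$ cleanly enough to invoke Wald. The overshoot point is the crux — one must notice that the first step making $S$ positive is necessarily an \emph{up}-step of size exactly $1-\beta$ from a value $S_{\tau-1} \le 0$, which caps $S_\tau$ at $1-\beta$; everything else is routine random-walk bookkeeping.
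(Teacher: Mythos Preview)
Your proposal is correct and follows essentially the same approach as the paper: both recast $\tau$ as the first strict ascending ladder epoch of the random walk $S_n=\sum_{i=1}^n(B_i-\beta)$, handle cases (i) and (ii) via standard drift/recurrence properties of that walk, and obtain the bound in (iii) from Wald's identity together with the overshoot bound $S_\tau\le 1-\beta$. The only cosmetic differences are that for (ii) the paper cites null recurrence of the mean-zero walk whereas you derive $\mathbb{E}(\tau)=\infty$ by the cleaner Wald contradiction, and you rightly flag the need to verify $\mathbb{E}(\tau)<\infty$ before applying Wald in (iii), a point the paper leaves implicit.
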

These properties immediately imply that repeated use of Algorithms~\ref{alg:Practical_imputation}
or~\ref{alg:Practical_multigamma} in tandem with running the diagnostic
will not cause problems if $\beta<\underline{p}$ but that if $\beta=\pi-{\rm ess}\inf_{x\in\mathsf{X}}p(x)$
each call has an infinite expected running time and if $\beta>\pi-{\rm ess}\inf_{x\in\mathsf{X}}p(x)$
then with probability $1$, one of the calls will eventually fail
to terminate.
\begin{cor}
\label{cor:diagnostic_algorithm_implication}If the diagnostic is
performed at every state visited by Algorithm~\ref{alg:Practical_imputation}
or~\ref{alg:Practical_multigamma}, then:
\begin{enumerate}
\item If $\pi-{\rm ess}\inf_{x\in\mathsf{X}}p(x)<\beta$, then the algorithm
will not terminate with positive probability.
\item If $\pi-{\rm ess}\inf_{x\in\mathsf{X}}p(x)=\beta$, then the algorithm
has infinite expected running time.
\item If $\pi-{\rm ess}\inf_{x\in\mathsf{X}}p(x)>\beta$, then the algorithm
has finite expected running time.
\end{enumerate}
\end{cor}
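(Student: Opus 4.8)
The plan is to reduce each of the three implications to the corresponding part of Proposition~\ref{prop:diagnostic}, by localising everything to the individual states at which the diagnostic is invoked. The starting observation is that the diagnostic-augmented procedure terminates if and only if (i) the underlying Algorithm~\ref{alg:Practical_imputation} or~\ref{alg:Practical_multigamma} terminates, say at step $T$ (put $T=\infty$ otherwise), and (ii) the diagnostic stopping time is finite at every state $X_{1},\dots,X_{T}$ the chain visits; moreover, conditionally on the visited states these diagnostic stopping times are independent, and the one run at $X_{m}$ is distributed as the $\tau$ of Proposition~\ref{prop:diagnostic} with $p=p(X_{m})$. Writing $h(x):=\mathbb{E}[\tau(x)]$, the expected running time of the augmented procedure is then that of the base algorithm plus $\mathbb{E}\big[\sum_{m=1}^{T}h(X_{m})\big]$; and, since Algorithm~\ref{alg:Practical_imputation}/\ref{alg:Practical_multigamma} realises a single tour of $\tilde{\mathbf{X}}_{a,\epsilon}$ with $\tau_{a,\epsilon}\sim\mathrm{Geometric}(\epsilon)$, the mixture representation~(\ref{eq:mixture_representation}) specialised to $s=\epsilon$ shows that the expected number of steps of that tour spent in a set $A$ equals $\epsilon^{-1}\pi(A)$, so that $\mathbb{E}\big[\sum_{m=1}^{T}h(X_{m})\big]=\epsilon^{-1}\pi(h)$ whenever the base algorithm is almost surely finite.

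The second ingredient is a recurrence statement. Because $\mathbf{X}$ is $\pi$-irreducible and Harris recurrent with accessible atom $\{a\}$, the excursions away from $a$ are i.i.d.\ and any set of positive $\pi$-measure is entered on each excursion with a fixed positive probability (else it would never be entered). I would use this twice. If $\pi-{\rm ess}\inf_{x}p(x)<\beta$ then $S:=\{x:p(x)<\beta\}$ has positive $\pi$-measure, so with positive probability the chain visits $S$ already on its first excursion, hence before the base algorithm can output (the latter being possible only at a return to $a$); and every state visited strictly before then has $p$-value at least $\beta$. Conversely, if $\pi-{\rm ess}\inf_{x}p(x)\ge\beta$, I claim every visited state satisfies $p(X_{m})\ge\pi-{\rm ess}\inf_{x}p(x)$ almost surely: $p(a)\ge\pi-{\rm ess}\inf_{x}p(x)$ since $\{a\}$ has positive $\pi$-mass, and the transition kernel does not escape the $\pi$-support --- for the i-cSMC kernel of Section~\ref{sec:icsmc_atom} this holds because, under positivity of the potentials, a newly sampled path has law absolutely continuous with respect to $\check{\pi}$, the only other possible moves being to the atom $a_{n}$ or back to the current state.

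Given these, parts~1 and~3 are quick. For part~1, on the positive-probability event that $S$ is visited, Proposition~\ref{prop:diagnostic}(1) gives that the diagnostic run at the first $S$-state fails to terminate with positive probability, while all earlier diagnostics terminate almost surely; hence the augmented procedure does not terminate with positive probability. For part~3, $\pi-{\rm ess}\inf_{x}p(x)=c>\beta$ forces $p(X_{m})\ge c$ for every visited state, so by Proposition~\ref{prop:diagnostic}(3) each diagnostic has expected duration at most $(1-\beta)/(c-\beta)$, and the base algorithm --- whose Bernoulli factories are now only ever queried at states with $p\ge c>\epsilon$ --- is almost surely finite with $\mathbb{E}[T]<\infty$ by Proposition~\ref{prop:equiv_1mp_over_1me_coins_and_Pi_samples}; then $\mathbb{E}\big[\sum_{m\le T}h(X_{m})\big]\le\frac{1-\beta}{c-\beta}\,\mathbb{E}[T]<\infty$, giving a finite expected total.

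Part~2 is the delicate one, and I expect it to be the main obstacle. When $\pi(\{p=\beta\})>0$ the chain visits $\{p=\beta\}$ with positive probability on its first excursion, and by Proposition~\ref{prop:diagnostic}(2) a single such diagnostic already has infinite expected duration, so $\mathbb{E}\big[\sum_{m\le T}h(X_{m})\big]=\infty$. When $\pi(\{p=\beta\})=0$ but $\pi-{\rm ess}\inf_{x}p(x)=\beta$, the diagnostic still terminates almost surely at every visited state, so one cannot invoke Proposition~\ref{prop:diagnostic}(2) directly; instead the argument must go through the expected durations, using that $h(x)$ is of order $1/(p(x)-\beta)$ as $p(x)\downarrow\beta$ --- the bound $(1-\beta)/(p(x)-\beta)$ of Proposition~\ref{prop:diagnostic}(3) being sharp up to constants, by a ladder-epoch/Wald argument for the associated bounded-increment random walk --- together with the identity $\mathbb{E}\big[\sum_{m\le T}h(X_{m})\big]=\epsilon^{-1}\pi(h)$ from the first paragraph. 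Establishing $\pi(h)=+\infty$ in this subcase, equivalently that $\int_{\mathsf{X}}(p(x)-\beta)^{-1}\,\mathrm{d}\pi(x)$ diverges, is the step requiring real care, since it hinges on how $\pi$ concentrates near the essential infimum of $p$; this, together with the rigorous justification of the recurrence step (in particular ruling out excursions into $\pi$-null ``bad'' sets for a general kernel $\Pi$), is where the bulk of the work lies.
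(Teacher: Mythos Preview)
For parts~1 and~3 your argument is sound, though the paper's is shorter: instead of occupation measures and excursions, it simply notes that the conditional law of $X_{k}$ given the base algorithm has not yet stopped is $\nu R_{a,\epsilon}^{k-1}$, so $\pi$-irreducibility yields a finite $n$ with $\nu R_{a,\epsilon}^{n-1}(A)>0$ whenever $\pi(A)>0$, and one lower-bounds the probability of non-termination (resp.\ the expected number of diagnostic flips) by the time-$n$ contribution $(1-\epsilon)^{n-1}\int_{A}\nu R_{a,\epsilon}^{n-1}({\rm d}x)\,\mathbb{P}(\tau_{x}=\infty)$ (resp.\ with $\mathbb{E}[\tau_{x}]$ in place of $\mathbb{P}(\tau_{x}=\infty)$). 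Part~3 is declared immediate. Your occupation-time identity $\mathbb{E}\big[\sum_{m\le T}h(X_{m})\big]=\epsilon^{-1}\pi(h)$ is a clean alternative route to the same conclusion.

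For part~2 you have correctly located a real difficulty that the paper itself glosses over: its proof takes $A=\{x:p(x)=\beta\}$ and reuses the part-1 construction, tacitly assuming $\pi(A)>0$. In your second subcase, $\pi(\{p=\beta\})=0$ with $\pi\text{-}{\rm ess}\inf p=\beta$, neither argument applies. Your fallback via $\pi(h)=\infty$ does not close the gap either: even granting the two-sided estimate $h(x)\asymp(p(x)-\beta)^{-1}$, the hypothesis $\pi\text{-}{\rm ess}\inf p=\beta$ does not force $\int(p(x)-\beta)^{-1}\,{\rm d}\pi(x)$ to diverge --- if the pushforward of $\pi$ under $p$ has a density of order $t-\beta$ near $t=\beta$, the integral is finite while the essential infimum is still $\beta$. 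So part~2, read literally, appears to require either that the essential infimum be attained on a set of positive $\pi$-mass or some additional tail condition on $\pi\circ p^{-1}$; this is a lacuna shared with the paper's own proof, not a flaw peculiar to your approach. Your caveat about the chain possibly visiting $\pi$-null sets for a general $\Pi$ is likewise a legitimate concern and is not addressed in the paper either.
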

In the special case where $\beta=1/m$ for some $m\in\mathbb{N}$,
one can express analytically $\mathbb{P}(\tau<\infty)$. The following
result shows that when $p$ is even slightly different to $\beta$
then the probability of not stopping in finite time is quite large.
For example, if $\beta=0.2$ and $p=0.19$ then the probability of
not stopping is over $0.06$. We note that one can also determine
probabilities associated with hitting times of certain subsets of
$[0,1]$ in this specific setting, but do not pursue this further.
\begin{prop}
\label{prop:skip_free_prob_never_positive}Let $p\in[0,\beta)$ where
$\beta=m^{-1}$ for some $m\in\mathbb{N}$, $(B_{i})_{i\geq1}$ be
i.i.d. ${\rm Bernoulli}(p)$ random variables and $\tau:=\inf\{n\geq1:\frac{1}{n}\sum_{i=1}^{n}B_{i}>\beta\}$.
Then $\mathbb{P}(\tau<\infty)=p(m-1)/(1-p)$.
\end{prop}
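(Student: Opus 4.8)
The plan is to recast $\{\tau<\infty\}$ in terms of a lattice random walk that is skip-free to the left, analyse it with a family of exponential martingales indexed by the roots of an associated polynomial, and close the argument with a polynomial-division identity. First I would set $S_n:=\sum_{i=1}^n B_i$ and $W_n:=mS_n-n=\sum_{i=1}^n\xi_i$, where $\xi_i:=mB_i-1$ equals $m-1$ with probability $p$ and $-1$ with probability $1-p$. Since $\frac1n\sum_{i=1}^n B_i>\beta=m^{-1}$ is equivalent to $W_n\ge1$, we have $\tau=\inf\{n\ge1:W_n\ge1\}$. The cases $m=1$ or $p=0$ are trivial, since then $\xi_i\le0$ always, so $W_n\le0$ and $\tau=\infty$ a.s., matching $p(m-1)/(1-p)=0$; so assume $m\ge2$ and $0<p<m^{-1}$. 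Then $\mathbb{E}[\xi_1]=mp-1<0$, so by the strong law $W_n\to-\infty$ almost surely. The key structural observation is that on $\{\tau<\infty\}$ one has $W_{\tau-1}\le0$ while $W_\tau\ge1>W_{\tau-1}$, forcing the last increment to be an up-step $\xi_\tau=m-1$, hence $W_\tau\in\{1,\dots,m-1\}$. Writing $a_j:=\mathbb{P}(\tau<\infty,\;W_\tau=j)$ for $j\in\{1,\dots,m-1\}$, the goal becomes $\sum_{j=1}^{m-1}a_j=p(m-1)/(1-p)$.

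For any $z\in\mathbb{C}\setminus\{0\}$ with $\mathbb{E}[z^{\xi_1}]=pz^{m-1}+(1-p)z^{-1}=1$, the process $(z^{W_n})_{n\ge0}$ is a complex martingale started at $1$. This condition reads $pz^m-z+(1-p)=0$, which factors as $(z-1)\bigl(p(1+z+\cdots+z^{m-1})-1\bigr)=0$; apart from $z=1$ it has exactly the $m-1$ roots of $Q(z):=p(1+z+\cdots+z^{m-1})-1$. I would show each root $z_i$ of $Q$ satisfies $|z_i|>1$ by Rouché's theorem on a circle $|z|=1+\delta$ with $\delta>0$ small: there $|pz^m|=p(1+\delta)^m=p+pm\delta+O(\delta^2)<p+\delta\le|z-(1-p)|$ because $pm<1$, so $pz^m-z+(1-p)$ has the same number of zeros in $|z|<1+\delta$ as $z\mapsto-(z-(1-p))$, namely one, which must be $z=1$. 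For such a root $z_i$, the stopped martingale $z_i^{W_{n\wedge\tau}}$ is bounded: its modulus is at most $1$ while $n<\tau$ (as $W_n\le0$ and $|z_i|>1$), and it equals $z_i^{W_\tau}$, of modulus at most $|z_i|^{m-1}$, thereafter. Optional stopping gives $\mathbb{E}[z_i^{W_{n\wedge\tau}}]=1$, and letting $n\to\infty$ (using $W_n\to-\infty$, so $z_i^{W_n}\to0$ on $\{\tau=\infty\}$, plus bounded convergence) yields $\mathbb{E}[z_i^{W_\tau}\mathbb{I}_{\{\tau<\infty\}}]=1$, that is, $\sum_{j=1}^{m-1}a_j z_i^{j}=1$ for each of the $m-1$ roots $z_i$ of $Q$.

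To exploit these relations, set $A(z):=\sum_{j=1}^{m-1}a_j z^j-1$, a polynomial of degree at most $m-1$ that vanishes at every root of the degree-$(m-1)$ polynomial $Q$. When the roots of $Q$ are distinct, $Q\mid A$ forces $A=cQ$ for a constant $c$; comparing constant terms ($A(0)=-1$, $Q(0)=p-1$) gives $c=(1-p)^{-1}$, and comparing the coefficient of $z^j$ for each $j\in\{1,\dots,m-1\}$ gives $a_j=cp=p/(1-p)$, whence $\mathbb{P}(\tau<\infty)=\sum_{j=1}^{m-1}a_j=p(m-1)/(1-p)$. The distinctness assumption is removed by continuity in $p$: the discriminant of $Q$ is a polynomial in $p$ that is not identically zero, so $Q$ has distinct roots for all but finitely many $p\in(0,m^{-1})$, and both $\mathbb{P}(\tau<\infty)$ and $p(m-1)/(1-p)$ depend continuously on $p$.

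The step I expect to be most delicate is the root localisation: one needs \emph{all} $m-1$ non-trivial roots of $pz^m-z+(1-p)$ to lie strictly outside the closed unit disc, since this is exactly what makes the stopped exponential martingales bounded and forces $z_i^{W_n}\to0$ on the non-absorption event, and the hypothesis $p<m^{-1}$ enters precisely here through the perturbed-radius Rouché estimate. The only other minor points are the treatment of possible repeated roots (via the continuity argument above) and checking that the complex-valued optional-stopping step is legitimate, which follows by applying it to the real and imaginary parts separately.
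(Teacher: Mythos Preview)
Your argument is correct, but it is a genuinely different and much longer route than the paper's. After introducing exactly the same integer-valued walk $S_n=\sum_{i=1}^n(mB_i-1)$ and observing that it is skip-free to the left, the paper simply quotes Corollary~1 of Brown, Pek\"oz and Ross (2010), which supplies the absorption probability directly and yields
\[
\mathbb{P}(\tau<\infty)=1-\frac{1-mp}{1-p}=\frac{p(m-1)}{1-p}.
\]
So the paper outsources the entire computation to a known result on skip-free walks.

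What you have written is, in effect, a self-contained derivation of that cited result for this particular step distribution. The machinery you bring in---the exponential martingales $z^{W_n}$, the Rouch\'e localisation of the nontrivial roots of $pz^m-z+(1-p)$, optional stopping with a bounded stopped process, and the polynomial identification $A=cQ$---is the classical way such skip-free hitting probabilities are proved, and every step is sound. The gain is that your proof stands on its own and makes transparent exactly where the hypothesis $p<m^{-1}$ is used (the Rouch\'e estimate $p(1+\delta)^m<p+\delta$ for small $\delta$). The cost is length: the paper's version is three lines. Your two residual worries are minor and can be disposed of: continuity of $p\mapsto\mathbb{P}(\tau<\infty)$ follows because $\mathbb{P}(\tau\le n)$ is polynomial in $p$ and $\mathbb{P}(\tau=n)\le p\,\mathbb{P}(W_{n-1}\ge 2-m)$ decays geometrically, uniformly on compact subsets of $[0,m^{-1})$, by a Chernoff bound; and the discriminant of $Q$ is not identically zero in $p$ because, for instance, at $p$ close to $m^{-1}$ the roots of $Q$ approach the $m$th roots of unity other than $1$, which are distinct.
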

Finally, in Section~\ref{sub:Computional-cost-of} it was seen that
the computational complexity of the algorithms in Section~\ref{sub:indirect_s_e}
is improved when $\beta$ is not too close to $0$. In general, therefore,
it may be the case that $\underline{p}$ is too small for the algorithms
to be practical, even if $\mathbf{X}$ is uniformly ergodic. Following
the remarks in Section~\ref{sub:General-applicability-of}, one general
strategy is to treat the Markov kernel $\Pi^{k}$ as the Markov kernel
of interest. This can also be seen as an alternative to increasing
$N$ in the iterated conditional SMC kernel of Section~\ref{sub:Iterated-conditional-SMC},
and may be appropriate when $N$ is already sufficiently large that
$\epsilon_{N}$ in (\ref{eq:minorization_Pcheck}) is reasonably large.

\subsection{Sensitivity to the choice of $\epsilon$\label{sub:Sensitivity-to-the}}

Consider the general case where (\ref{eq:minorization_general}) holds
for $\nu=\delta_{a}$ and $\underline{p}=\underbar{\ensuremath{\epsilon}}>0$
but one has attempted to run Algorithm~\ref{alg:psim_onestep_generic}
with $s=\epsilon>\underbar{\ensuremath{\epsilon}}$. We investigate
here how this affects the samples obtained. It is clear that $R_{a,\epsilon}$
is not a Markov kernel but one can instead view the algorithm as sampling
from
\[
\tilde{R}_{a,\epsilon}(x,{\rm d}y):=\frac{\Pi(x,{\rm d}y)-\min\{\epsilon,p(x)\}\delta_{a}({\rm d}y)}{1-\min\{\epsilon,p(x)\}},
\]
which results from flipping a $\min\left\{ 1,[1-p(x)]/(1-\epsilon)\right\} $-coin
when deciding which mixture component in (\ref{eq:mixture_residual_epsilon})
to sample from. The corresponding Markov transition kernel whose invariant
distribution we obtain perfect samples from is
\[
\tilde{\Pi}(x,{\rm d}y)=\epsilon\delta_{a}({\rm d}y)+(1-\epsilon)\tilde{R}_{a,\epsilon}(x,{\rm d}y),
\]
which defines a uniformly ergodic Markov chain of invariant distribution
denoted $\tilde{\pi}$. The following result indicates that closeness
of $\epsilon$ and $\underbar{\ensuremath{\epsilon}}$ implies closeness
of $\pi$ and $\tilde{\pi}$.
\begin{prop}
\label{prop:sensitivity}The invariant distributions $\pi$ and $\tilde{\pi}$
satisfy
\[
\left\Vert \tilde{\pi}-\pi\right\Vert _{{\rm TV}}\leq1-\underbar{\ensuremath{\epsilon}}/\epsilon,
\]
where for a signed measure $\mu$ on $\mathcal{B}(\mathsf{X})$, $\left\Vert \mu\right\Vert _{{\rm TV}}:=\sup_{A\in\mathcal{B}(\mathsf{X})}|\mu(A)|$.
\end{prop}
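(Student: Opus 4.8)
The plan is to first rewrite the effective kernel $\tilde\Pi$ in a form that makes its relationship to $\Pi$ transparent, and then to compare the two invariant laws through a regenerative (occupation--measure) argument; the only non-trivial estimate turns out to be a bound on expected return times to the atom $a$. First I would substitute the definition of $\tilde R_{a,\epsilon}$ into $\tilde\Pi(x,{\rm d}y)=\epsilon\delta_a({\rm d}y)+(1-\epsilon)\tilde R_{a,\epsilon}(x,{\rm d}y)$ and split on the cases $p(x)\ge\epsilon$ and $p(x)<\epsilon$; a short calculation then gives $\tilde\Pi(x,{\rm d}y)=\tilde p(x)\delta_a({\rm d}y)+(1-\tilde p(x))R_{a,p}(x,{\rm d}y)$ with $\tilde p(x):=\max\{\epsilon,p(x)\}$. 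Thus $\tilde\Pi$ and $\Pi=p(x)\delta_a(\cdot)+(1-p(x))R_{a,p}(x,\cdot)$ share the residual kernel $R_{a,p}$ and differ only in that $\tilde\Pi$ inflates the jump probability to $a$ from $p(x)$ to $\tilde p(x)\ge p(x)$; in particular $a$ is a proper accessible atom for $\tilde\Pi$, which is uniformly ergodic since $\tilde\Pi(x,\cdot)\ge\epsilon\delta_a$.

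Next I would realise the excursions away from $a$ of both chains on a common space: sample one path $X_0=a$, $X_{i+1}\sim R_{a,p}(X_i,\cdot)$ together with i.i.d.\ uniforms $U_0,U_1,\dots$, and set $\tau:=\inf\{j\ge1:U_{j-1}\le p(X_{j-1})\}$ and $\tilde\tau:=\inf\{j\ge1:U_{j-1}\le\tilde p(X_{j-1})\}$, so that $\tilde\tau\le\tau$ and the $\Pi$- and $\tilde\Pi$-excursions from $a$ are $(X_0,\dots,X_{\tau-1})$ and $(X_0,\dots,X_{\tilde\tau-1})$. By the standard regenerative representation of the stationary law of a chain with an atom, $\pi=N/N(\mathsf X)$ and $\tilde\pi=\tilde N/\tilde N(\mathsf X)$ where $N({\rm d}x):=\mathbb E[\sum_{j=0}^{\tau-1}\delta_{X_j}({\rm d}x)]$ and $\tilde N({\rm d}x):=\mathbb E[\sum_{j=0}^{\tilde\tau-1}\delta_{X_j}({\rm d}x)]$; since $\tilde\tau\le\tau$ we have $N=\tilde N+R$ for a non-negative measure $R$, and normalising yields the mixture identity $\pi=(1-\theta)\tilde\pi+\theta\rho$ with $\rho$ a probability measure and $\theta=1-\tilde N(\mathsf X)/N(\mathsf X)=1-\mathbb E(\tilde\tau)/\mathbb E(\tau)$. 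Hence $\|\tilde\pi-\pi\|_{\rm TV}=\theta\|\tilde\pi-\rho\|_{\rm TV}\le\theta$, and it remains only to show $\theta\le1-\underbar{\ensuremath{\epsilon}}/\epsilon$, i.e.\ that $\mathbb E(\tilde\tau)\ge(\underbar{\ensuremath{\epsilon}}/\epsilon)\mathbb E(\tau)$.

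For this last bound I would condition on the path $\mathbf x=(x_0,x_1,\dots)$ and set $h_k:=\sum_{j\ge0}\prod_{i=k}^{k+j-1}(1-p(x_i))$ and $\tilde h_k$ with $p$ replaced by $\tilde p$, so $\mathbb E(\tau\mid\mathbf x)=h_0$ and $\mathbb E(\tilde\tau\mid\mathbf x)=\tilde h_0$, with the recursions $h_k=1+(1-p(x_k))h_{k+1}$, $\tilde h_k=1+(1-\tilde p(x_k))\tilde h_{k+1}$ and the a-priori bound $h_k\le\sum_{j\ge0}(1-\underbar{\ensuremath{\epsilon}})^j=1/\underbar{\ensuremath{\epsilon}}$ coming from $p\ge\underbar{\ensuremath{\epsilon}}$. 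A finite-horizon backward induction, started from $h_n=\tilde h_n=1$ and passed to the limit $n\to\infty$, then yields $\epsilon\tilde h_k-\underbar{\ensuremath{\epsilon}}h_k\ge(\epsilon-\underbar{\ensuremath{\epsilon}})(1-\underbar{\ensuremath{\epsilon}}h_{k+1})\ge0$, where the first inequality uses the induction hypothesis at $k+1$ together with $\tilde p(x)-p(x)=(\epsilon-p(x))^+\le\epsilon-\underbar{\ensuremath{\epsilon}}$ and the second uses $h_{k+1}\le1/\underbar{\ensuremath{\epsilon}}$. Taking $k=0$ and integrating over the path gives $\epsilon\,\mathbb E(\tilde\tau)\ge\underbar{\ensuremath{\epsilon}}\,\mathbb E(\tau)$, which combined with the previous step completes the proof.

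The main obstacle is precisely this expected-return-time inequality. It is sharp --- equality holds when the chain is confined to a region on which $p\equiv\underbar{\ensuremath{\epsilon}}$, so that $\tau$ and $\tilde\tau$ become geometric with parameters $\underbar{\ensuremath{\epsilon}}$ and $\epsilon$ --- so no crude perturbation estimate or term-by-term comparison can suffice: bounding the ratio of excursion survival probabilities $\prod(1-\tilde p(x_i))/\prod(1-p(x_i))$ factor by factor fails because this ratio can decay to $0$ along long excursions through $\{p=\underbar{\ensuremath{\epsilon}}\}$. The self-referential recursion, which trades the pointwise gap $\tilde p-p\le\epsilon-\underbar{\ensuremath{\epsilon}}$ against the uniform bound $1/\underbar{\ensuremath{\epsilon}}$ on the expected excursion length from any state, is what makes the estimate go through; the remaining steps are routine.
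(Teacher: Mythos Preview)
Your argument is correct, and it takes a genuinely different route from the paper's. The paper invokes Mitrophanov's perturbation bound for uniformly ergodic chains to get $\|\tilde\pi-\pi\|_{\rm TV}\le\epsilon^{-1}\sup_x\|\tilde\Pi(x,\cdot)-\Pi(x,\cdot)\|_{\rm TV}$, and then computes the one-step kernel discrepancy directly, obtaining $\sup_x\|\tilde\Pi(x,\cdot)-\Pi(x,\cdot)\|_{\rm TV}\le\epsilon-\underbar{\ensuremath{\epsilon}}$ by an elementary calculation; the result follows in two lines once Mitrophanov's theorem is quoted. Your approach instead exploits the atomic structure explicitly: after the clean rewriting $\tilde\Pi(x,\cdot)=\tilde p(x)\delta_a+(1-\tilde p(x))R_{a,p}(x,\cdot)$ with $\tilde p=\max\{\epsilon,p\}$, you couple the two excursions from $a$ through the common residual kernel and compare expected occupation measures. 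This is longer and requires the nontrivial inductive estimate $\epsilon\,\mathbb{E}(\tilde\tau)\ge\underbar{\ensuremath{\epsilon}}\,\mathbb{E}(\tau)$, but it is self-contained (no appeal to an external perturbation theorem), it yields the structurally stronger statement that $\pi$ is a genuine mixture $(1-\theta)\tilde\pi+\theta\rho$ with $\theta\le1-\underbar{\ensuremath{\epsilon}}/\epsilon$, and your observation that equality holds when $p\equiv\underbar{\ensuremath{\epsilon}}$ exhibits sharpness, which the paper's route does not make visible. Either proof is acceptable; the paper's is quicker to write down, yours gives more insight into why the bound holds.
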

The approximate Markov kernel $\tilde{R}_{a,\epsilon}(x,\cdot)$ is
not exactly what is sampled from in Algorithm~\ref{alg:Practical_multigamma}
when $p(x)<\epsilon$, as the latter is dependent on the specific
implementation of the Bernoulli factory of a $(1-p)/(1-\epsilon)$-coin.
However, Proposition~\ref{prop:sensitivity} provides a reasonable
approximation of the discrepancy between $\pi$ and the distribution
of the samples obtained from Algorithm~\ref{alg:Practical_multigamma}
when $\underline{p}<\epsilon$.

\subsection{Unbiased estimation of $\pi(f)$ with a perfect sample from $\pi$\label{sub:Unbiased-estimation-of}}

With a perfect sample $X_{1}\sim\pi$, it is trivial that $f(X_{1})$
is an unbiased estimate of $\pi(f)$, where $f:\mathsf{X}\rightarrow\mathbb{R}$.
To obtain a consistent estimate of $\pi(f)$, one can simply run any
irreducible, $\pi$-invariant Markov kernel $P$ initialized at $X_{1}$
and compute the unbiased MCMC estimate $n^{-1}\sum_{i=1}^{n}f(X_{i})$.
It may be less obvious that one can also combine a perfect sample
with a number of imperfect samples to obtain an unbiased and consistent
estimate of $\pi(f)$. Consider the $\pi$-invariant, i-cSMC kernel
of Section~\ref{sub:Iterated-conditional-SMC}. This kernel can be
expressed as 
\[
P_{N}(x,{\rm d}y):=\int_{\mathsf{X}^{N}}\tilde{P}_{N}(x,{\rm d}y_{1:N})\frac{\sum_{k=1}^{N}G(y_{k})\delta_{y_{k}}({\rm d}y)}{\sum_{j=1}^{N}G(y_{j})},
\]
where $\tilde{P}_{N}(x,{\rm d}y_{1:N}):=\int_{\mathsf{V}_{N}}\bar{Q}_{x}^{N}({\rm d}v)\prod_{k=1}^{N}\delta_{\zeta^{k}}({\rm d}y_{k})$,
with $\zeta^{k}$ defined in (\ref{eq:zeta_k}), and for $z\in\mathsf{X}=\mathsf{Z}^{n}$,
$G(z)=G_{n}(z_{n})$. It is clear that $P_{N}$ satisfies $\int_{\mathsf{X}}\pi({\rm d}x)P_{N}(x,{\rm d}z)=\pi({\rm d}z)$,
that $\int_{\mathsf{X}^{2}}f(y)\pi({\rm d}x)P_{N}(x,{\rm d}y)=\pi(f)$
and, therefore, that
\[
\int_{\mathsf{X}^{N+1}}\pi({\rm d}x)\tilde{P}_{N}(x,{\rm d}y_{1:N})\frac{\sum_{k=1}^{N}G(y_{k})f(y_{k})}{\sum_{j=1}^{N}G(y_{j})}=\pi(f).
\]
Hence, sampling $(Y_{1},\ldots,Y_{N})\sim\tilde{P}_{N}(X_{1},\cdot)$
implies that the ``self-normalized'' estimate 
\begin{equation}
\frac{\sum_{k=1}^{N}G(Y_{k})f(Y_{k})}{\sum_{j=1}^{N}G(Y_{j})}\label{eq:self_norm_unbiased}
\end{equation}
is an unbiased estimate of $\pi(f)$. The estimate (\ref{eq:self_norm_unbiased})
can be viewed as a specific case of a type of estimate studied in
\citet[Section~3]{whiteley2014twisted} (which are in general, however,
not unbiased), and a $\sqrt{N}$-central limit theorem holds for (\ref{eq:self_norm_unbiased})
under fairly mild assumptions. Moreover, the asymptotic variance is
identical to that associated with $\pi^{N}(f)$, the standard particle
filter estimate in Section~\ref{sub:A-particle-filter} \citep[Remark~7]{whiteley2014twisted}.

\subsection{Non-uniformly ergodic $\Pi$ and unbiased estimation of $\pi(f)$\label{sub:Non-uniformly-ergodic-}}

In cases where $\underline{p}=0$ or $\Pi$ is not even uniformly
ergodic, the more generally applicable strategy of Algorithm~\ref{alg:psim_general_regen}
can still be used. Implementations of this algorithm have typically
sampled from $\eta_{n}^{(\nu,s)}$ in (\ref{eq:eta_n}) by rejection
using $\nu R_{\nu,s}^{n-1}$ as a proposal, which leads to an overall
expected infinite running time.

An alternative procedure, however, is apparent from viewing $\eta_{n}^{(\nu,s)}$
as a marginal distribution associated with a Feynman--Kac path measure.
That is, one can in principle sample from $\eta_{n}^{(\nu,s)}$ using
either of the algorithms in Section~\ref{sub:indirect_s_e} by constructing,
e.g., an i-cSMC kernel $P_{N}^{(n)}$ with an artificial atom. In
Section~\ref{sub:Iterated-conditional-SMC} it was seen that the
computational time required to obtain such a sample could be in $\mathcal{O}(n^{2})$,
at least under strong assumptions. More generally, we note that a
finite expected computational time of Algorithm~\ref{alg:psim_general_regen}
is guaranteed for a large class of Markov chains as long as one can
sample $N$ according to (\ref{eq:Ngeneraldist}) in finite expected
time and from $\eta_{n}^{(\nu,s)}$ in expected time polynomial in
$n$.
\begin{prop}
\label{prop:finite_expected_time_general_alg}Assume $\mathbb{P}_{\nu,s}(\tau_{\nu,s}\geq n)/\mathbb{E}_{\nu,s}(\tau_{\nu,s})=\mathcal{O}(n^{-b})$
for some $b>0$ and that one can sample according to (\ref{eq:Ngeneraldist})
in finite expected time. Then if the expected time to sample from
$\eta_{n}^{(\nu,s)}$ is in $\mathcal{O}(n^{d})$ for some $d>0$
then the expected time to complete Algorithm~\ref{alg:psim_general_regen}
is finite whenever $d<b-1$.
\end{prop}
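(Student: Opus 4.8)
The plan is to show that the expected running time of Algorithm~\ref{alg:psim_general_regen} decomposes as a sum over $n$ of the probability that $N=n$ times the expected cost of the two sub-steps conditional on $N=n$, and then to bound this sum using the stated polynomial rates. First I would fix notation: let $T$ denote the total (random) running time of Algorithm~\ref{alg:psim_general_regen}, let $T_1$ denote the time to sample $N$ in Step~1, and let $T_2(n)$ denote the time to sample from $\eta_n^{(\nu,s)}$ in Step~2 given that $N=n$ was drawn. By assumption $\mathbb{E}(T_1)<\infty$, and by assumption $\mathbb{E}[T_2(n)]=\mathcal{O}(n^d)$, so there is a constant $C<\infty$ with $\mathbb{E}[T_2(n)]\leq C n^d$ for all $n\geq1$. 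Writing $p_n:=\Pr(N=n)=\mathbb{P}_{\nu,s}(\tau_{\nu,s}\geq n)/\mathbb{E}_{\nu,s}(\tau_{\nu,s})$, which by assumption is $\mathcal{O}(n^{-b})$, there is a constant $C'<\infty$ with $p_n\leq C' n^{-b}$.

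Next I would write, using that Step~2 is run conditionally on the value of $N$ produced in Step~1,
\[
\mathbb{E}(T)=\mathbb{E}(T_1)+\sum_{n=1}^{\infty}p_n\,\mathbb{E}[T_2(n)].
\]
(One must be slightly careful that the cost of Step~1 and the cost of Step~2 are genuinely additive and that the conditional expectation of $T_2$ given $N=n$ is what appears; this is immediate from the structure of the algorithm, since Step~1 and Step~2 are executed sequentially and the randomness used in Step~2 can be taken independent of that in Step~1 given $N$.) Then I would bound the series:
\[
\sum_{n=1}^{\infty}p_n\,\mathbb{E}[T_2(n)]\leq CC'\sum_{n=1}^{\infty}n^{-b}\,n^{d}=CC'\sum_{n=1}^{\infty}n^{d-b},
\]
and this last series converges precisely when $d-b<-1$, i.e.\ when $d<b-1$. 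Combining with $\mathbb{E}(T_1)<\infty$ gives $\mathbb{E}(T)<\infty$, which is the claim.

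The only real subtlety — and the step I would treat most carefully — is the justification of the additive decomposition of $\mathbb{E}(T)$ and, in particular, that the conditional expected cost of the rejection-based (or i-cSMC-based) sampler for $\eta_n^{(\nu,s)}$ really is the quantity assumed to be $\mathcal{O}(n^d)$, so that no hidden dependence on the realized path of $N$ or on earlier randomness inflates the bound. Since the hypothesis is phrased directly in terms of ``the expected time to sample from $\eta_n^{(\nu,s)}$,'' this is essentially built in, and the remaining work is the elementary observation that $\sum_n n^{d-b}<\infty \iff d-b<-1$. The argument is short; the content is entirely in correctly setting up the tower-property decomposition and invoking the two rate assumptions.
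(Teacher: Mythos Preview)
Your proposal is correct and follows essentially the same approach as the paper's proof: write the expected cost of Step~2 as $\sum_{n\geq1}p_n\,\mathbb{E}[T_2(n)]$, bound $p_n$ by $C'n^{-b}$ and $\mathbb{E}[T_2(n)]$ by $Cn^{d}$, and observe that $\sum_n n^{d-b}<\infty$ iff $d<b-1$. If anything, you are slightly more explicit than the paper in separating out the finite expected cost of Step~1 and in justifying the tower-property decomposition.
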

While one can sample from $\eta_{n}^{(\nu,s)}$ in principle using
either of the algorithms in Section~\ref{sub:indirect_s_e} and an
appropriate sequence of i-cSMC kernels $(P_{N}^{(n)})_{n\geq1}$ with
corresponding constants $(\beta_{n})_{n\geq1}$, defining these sequences
in practice may constitute a research programme on their own. The
problem of being able to sample according to (\ref{eq:Ngeneraldist})
in finite expected time is a separate issue which has already been
addressed for a few Markov kernels in \citet{flegal2012exact}.

An alternative approach to Algorithm~\ref{alg:psim_general_regen}
when one can sample according to (\ref{eq:Ngeneraldist}) in finite
expected time, but is interested only in unbiased estimation of $\pi(f)$
for some $f:\mathsf{X}\rightarrow\mathbb{R}$ (as opposed to perfect
simulation) also follows from viewing $\eta_{n}^{(\nu,s)}$ as a marginal
distribution associated with a Feynman--Kac path measure. It is well
established that SMC algorithms such as Algorithm~\ref{alg:SMC}
produce unbiased estimates of unnormalized expectations $\gamma_{n}(f)$
\citep[Section~7.4.2]{DelMoral2004}, where $\gamma_{n}$ is defined
in Section~\ref{sub:Feynman--Kac-path-measure}. In this context,
this means that one can unbiasedly estimate $\gamma_{n}(f)=\mathbb{P}_{\nu,s}(\tau_{\nu,s}\geq n)\eta_{n}(f)$
using Algorithm~\ref{alg:SMC}. From the decomposition (\ref{eq:mixture_representation}),
we thus have the identity
\[
\pi(f)=\sum_{k\geq1}\frac{\mathbb{P}_{\nu,s}(\tau_{\nu,s}\geq k)}{\mathbb{E}_{\nu,s}(\tau_{\nu,s})}\eta_{k}(f)=\frac{1}{\mathbb{E}_{\nu,s}(\tau_{\nu,s})}\sum_{k\geq1}\gamma_{k}(f).
\]
Letting $g$ be a p.m.f.\ with support $\mathbb{N}$, Algorithm~\ref{alg:Unbiased-estimation-etan_smc}
provides an unbiased estimate of $\pi(f)$. In practice, one should
choose $g$ so that it has heavier tails than (\ref{eq:Ngeneraldist}).
The validity of the method is immediate upon viewing $\mathbb{I}(M=1)$
as a Bernoulli random variable with success probability
\[
\frac{\mathbb{P}_{\nu,s}(\tau_{\nu,s}\geq1)}{\mathbb{E}_{\nu,s}(\tau_{\nu,s})}=\frac{1}{\mathbb{E}_{\nu,s}(\tau_{\nu,s})},
\]
and $Z/g(K)$ as an importance sampling estimate of $\sum_{k\geq1}\gamma_{k}(f)$.
The relative variance of $\mathbb{I}(M=1)$ is therefore $\mathbb{E}_{\nu,s}(\tau_{\nu,s})\left[1-\mathbb{E}_{\nu,s}(\tau_{\nu,s})^{-1}\right]$
and that of $Z/g(K)$ will depend on both $g$ and the number of particles
$N$ used to estimate $\gamma_{K}(f)$ \citep[Section~9.4.1]{DelMoral2004}.

\begin{algorithm}[H]
\protect\caption{Unbiased estimation of $\pi(f)$ using SMC\label{alg:Unbiased-estimation-etan_smc}}

\begin{enumerate}
\item Simulate $M$ from the distribution with p.m.f. (\ref{eq:Ngeneraldist}).
If $M\neq1$, output $0$.
\item Sample $K\sim g$ and compute $Z$, an unbiased estimate of $\gamma_{K}(f)$
using SMC. Output $Z/g(K)$.\end{enumerate}
\end{algorithm}

\subsection{Parallel implementation of Markov chain Monte Carlo\label{sub:Parallel-implementation-of}}

Algorithms that can produce perfect samples in finite expected time
can immediately be implemented in parallel. However, simulation of
the split chain $\tilde{\mathbf{X}}_{\nu,s}$ is all that is required
for parallel implementation of MCMC, as noted by \citet{mykland1995regeneration}.
Indeed, if one can detect regeneration events, then one can simulate
multiple i.i.d.\  tours of the split chain in parallel and piece
them together afterwards. \citet{brockwell2005identification} introduced
artificial atoms for exactly this purpose, since detection of regeneration
is trivial for the split chain $\tilde{{\bf X}}_{a,p}$. 

Judicious choices of Markov kernel incorporating an artificial atom
are therefore highly pertinent in the current computing landscape,
where parallel algorithms offer significant computational advantages
\citep[see, e.g.,][]{Lee2010}. This type of regenerative parallelization
is complementary to parallel implementation of a sampling from a complex
Markov transition kernel, which typically require the use of a multiple
cores on a single computer, as it allows tours of such Markov chains
to be simulated on different computers. Notably, and in contrast to
the perfect simulation algorithms described in Section~\ref{sub:indirect_s_e},
there is no need for explicit knowledge of bounds on $p(x)$ as one
can simply use $s(x)=p(x)$ and simulate the split chain described
in Section~\ref{sub:MC_singleton_atoms}.

We now show that Markov chains possessing an identifiable atom can
be suitable for parallel implementation in the manner suggested without
stringent requirements on their rates of convergence. This is important
because in some cases, complex and intractable Markov kernels used
in statistical applications are not geometrically ergodic \citep[see, e.g.,][]{Andrieu2009,Andrieu2012,Leea}.
Since the lengths of each tour are i.i.d.\ random variables, we consider
here the length of the first tour $\tau_{a,p}$. From Kac's theorem,
we have $\mathbb{E}_{a,p}(\tau_{a,p})=\pi(\{a\})^{-1}<\infty$, and
for performance this suggests that having $\pi(\{a\})$ not too close
to $0$ or $1$ will ensure that at least the expected length of a
tour is small. In a parallel setting, however, it is important that
the expected length of the longest of $n$ independent tours be reasonably
short, and this can be guaranteed in some sense when $\mathbb{E}_{a,p}\left(\tau_{a,p}^{2}\right)<\infty$.

It follows indeed from \citet[Section~4.2]{david1970order} that when
$\mathbb{E}_{a,p}\left(\tau_{a,p}^{2}\right)<\infty$, 
\[
\mathbb{E}_{a,p}\left[\max\left\{ \tau_{1}^{(a,p)},\ldots,\tau_{n}^{(a,p)}\right\} \right]\leq\mathbb{E}_{a,p}\left(\tau_{a,p}\right)+(n-1)\left[\frac{\mathbb{E}_{a,p}\left(\tau_{a,p}^{2}\right)-\mathbb{E}_{a,p}\left(\tau_{a,p}\right)^{2}}{2n-1}\right]^{1/2},
\]
and so one can be assured that the expected length of the longest
tour grows at most as $\mathcal{O}(n^{1/2})$. In cases where the
tours are geometrically distributed with mean $\epsilon^{-1}$, one
can have much slower growth rates since then \citep[see, e.g, ][]{eisenberg2008expectation}
\[
\frac{1}{\lambda}H_{n}\leq\mathbb{E}_{a,p}\left[\max\left\{ \tau_{1}^{(a,p)},\ldots,\tau_{n}^{(a,p)}\right\} \right]\leq1+\frac{1}{\lambda}H_{n},
\]
where $\lambda=-\log(1-\epsilon)$ and $H_{n}=\sum_{k=1}^{n}k^{-1}$
is the $n$th harmonic number. Since $H_{n}$ grows as $\mathcal{O}(\log n)$,
one can expect very few long tours in these situations.

The following proposition, which is essentially a corollary of \citet[Theorem~1.1]{bednorz2008regeneration},
implies that if $\mathbf{X}$ is a reasonable Markov chain in a specific
sense then $\mathbb{E}_{\nu,p}\left(\tau_{\nu,p}^{2}\right)<\infty$.
\begin{prop}
\label{prop:reasonablechain-boundedvariancereturn}Let $\mathbf{X}$
be a time-homogeneous, Harris recurrent and aperiodic Markov chain
with unique stationary distribution $\pi$. If $\mathbf{X}$ possesses
a proper atom $\alpha$, then a $\sqrt{n}$-CLT holds for all bounded
functions $g:\mathsf{X}\rightarrow\mathbb{R}$ if and only if $\mathbb{E}_{\alpha,p}\left(\tau_{\alpha,p}^{2}\right)<\infty$.\end{prop}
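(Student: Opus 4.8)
The plan is to reduce the statement to the i.i.d.\ block-sum form of the central limit theorem for regenerative processes --- precisely the framework of \citet{bednorz2008regeneration} --- and then to exhibit a single bounded test function that witnesses the converse. Since $\mathbf{X}$ is Harris recurrent and $\alpha$ is a proper atom, $\alpha$ is accessible, so by Kac's theorem $\mathbb{E}_{\alpha,p}(\tau_{\alpha,p})=\pi(\alpha)^{-1}<\infty$ and the successive visits to $\alpha$ partition $\mathbf{X}$ into excursions which, started from $\alpha$, are i.i.d. Fix a bounded $g$; replacing $g$ by $g-\pi(g)$ we may and do assume $\pi(g)=0$, and we set $\zeta_g:=\sum_{i=1}^{\tau_{\alpha,p}}g(X_i)$ under $\mathbb{P}_{\alpha,p}$, the reward accumulated over one excursion, with $\mathbb{E}_{\alpha,p}(\zeta_g)=0$ by Kac's theorem applied to $g$. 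The first step is to record the equivalence --- the content of \citet[Theorem~1.1]{bednorz2008regeneration} together with the standard regenerative transfer (Anscombe's theorem for the complete excursions, negligibility at the $\sqrt n$ scale of the incomplete initial and terminal excursions) and the converse central limit theorem for i.i.d.\ summands under $\sqrt n$-normalisation --- that the $\sqrt n$-CLT holds for $g$ if and only if $\mathbb{E}_{\alpha,p}(\zeta_g^2)<\infty$, the asymptotic variance then being $\pi(\alpha)\,\mathbb{E}_{\alpha,p}(\zeta_g^2)$.

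Granting this equivalence the two directions are short. For the ``if'' part, suppose $\mathbb{E}_{\alpha,p}(\tau_{\alpha,p}^2)<\infty$. Then for \emph{every} bounded $g$ the crude bound $|\zeta_g|\le\tau_{\alpha,p}\sup_{x\in\mathsf{X}}|g(x)|$ gives $\mathbb{E}_{\alpha,p}(\zeta_g^2)\le\big(\sup_{x}|g(x)|\big)^2\,\mathbb{E}_{\alpha,p}(\tau_{\alpha,p}^2)<\infty$, so the $\sqrt n$-CLT holds for $g$. For the ``only if'' part I would argue by contraposition: assume $\mathbb{E}_{\alpha,p}(\tau_{\alpha,p}^2)=\infty$, so $\tau_{\alpha,p}$ is non-degenerate and hence $\pi(\alpha)\in(0,1)$, and take the bounded function $g=\mathbb{I}_\alpha-\pi(\alpha)$. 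Each excursion visits $\alpha$ exactly once (at its terminal time), so $\zeta_g=1-\pi(\alpha)\tau_{\alpha,p}$ and therefore $\mathbb{E}_{\alpha,p}(\zeta_g^2)=\mathbb{E}_{\alpha,p}\!\big[(1-\pi(\alpha)\tau_{\alpha,p})^2\big]=\infty$ since $\pi(\alpha)>0$. By the equivalence, the $\sqrt n$-CLT fails for this particular bounded $g$, so it cannot hold for all bounded functions.

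The one genuinely delicate point --- and effectively the whole substance of the argument --- is the first step: checking that \citet[Theorem~1.1]{bednorz2008regeneration} is available under exactly the hypotheses assumed here (Harris recurrence, aperiodicity, and the mere existence of a proper atom, with no uniform or geometric ergodicity), and that its notion of ``$\sqrt n$-CLT'' coincides with the one in the present proposition, including the degenerate case (constant $g$, where $\zeta_g=0$ and the CLT holds trivially). The remaining inputs --- that $\mathbb{E}_{\alpha,p}(\zeta_g)=0$ whenever $\pi(g)=0$, that starting from an arbitrary initial law instead of from $\alpha$ contributes only an a.s.\ finite initial excursion negligible at the $\sqrt n$ scale, and that convergence of $m^{-1/2}\sum_{j\le m}\zeta_j$ to a non-degenerate Gaussian for i.i.d.\ summands forces $\mathbb{E}(\zeta^2)<\infty$ --- are classical facts of regenerative Markov chain theory (see, e.g., \citealt{meyn2009markov}) that I would simply cite.
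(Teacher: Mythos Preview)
Your proof is correct and follows essentially the same route as the paper's: both invoke \citet[Theorem~1.1]{bednorz2008regeneration} for the equivalence between the $\sqrt n$-CLT for $g$ and finiteness of the second moment of the excursion sum $\sum_{k=1}^{\tau_{\alpha,p}}\bar g(X_k)$, then use the crude bound $|\zeta_g|\le\|\bar g\|\,\tau_{\alpha,p}$ for the ``if'' direction and the test function $g=\mathbb{I}_\alpha-\pi(\alpha)$ (yielding $\zeta_g=1-\pi(\alpha)\tau_{\alpha,p}$) for the ``only if'' direction. Your contrapositive framing of the second direction and the paper's direct computation $\mathbb{E}_{\alpha,p}[(1-\pi(\alpha)\tau_{\alpha,p})^2]=\pi(\alpha)^2\mathbb{E}_{\alpha,p}(\tau_{\alpha,p}^2)-1$ are the same argument.
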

\begin{rem*}
That $\mathbb{E}_{\alpha,p}\left(\tau_{\alpha,p}^{2}\right)<\infty$
implies a $\sqrt{n}$-CLT holds for all bounded functions $g:\mathsf{X}\rightarrow\mathbb{R}$
is provided by \citet[Theorem~4.3]{jarner2002polynomial}.
\end{rem*}

In particular, Proposition~\ref{prop:reasonablechain-boundedvariancereturn}
implies that if a $\sqrt{n}$-CLT holds for all bounded functions,
then the variance of the return time to $\alpha$ is finite. In a
statistical application, where $\pi$ is a posterior distribution,
this suggests that if the chain $\mathbf{X}$ is suitable for estimating
any posterior probability then $\mathbb{E}_{\alpha,p}\left(\tau_{\alpha,p}^{2}\right)$
is finite. As an example, if $\mathbf{X}$ is polynomially ergodic
and $\Pi$ satisfies jointly, for some function $V:\mathsf{X}\rightarrow[1,\infty)$,
constants $c,b\in(0,\infty)$ and $\gamma\in[1/2,1)$,
\[
\int_{\mathsf{X}}V(y)\Pi(x,{\rm d}y)\leq V(x)-cV(x)^{\gamma}+b\mathbf{1}_{\alpha}(x),\quad x\in\mathsf{X},
\]
then $\mathbb{E}_{\alpha,p}\left(\tau_{\alpha,p}^{2}\right)<\infty$
\citep[Section~4]{jarner2002polynomial}.

\subsection{Remarks\label{sub:Remarks}}

The main limitation of the proposed methodology is the requirement
that $\beta$ be chosen to be less than or equal to $\underline{p}$.
Nevertheless, the ability in principle to obtain perfect samples from
the path distribution associated with a Feynman--Kac model is encouraging,
as no previous algorithm was able to do so with guaranteed expected
computational time polynomial in the time horizon of the model. We
anticipate future progress in at least two areas: deterministic lower
bounds on $\underline{p}$ under reasonably weak assumptions and quantitative
bounds on the probability that $\beta>\underline{p}$ under suitable
assumptions on how $\Pi$ is defined, how $\beta$ is chosen and the
results of any diagnostics. As an example, a detailed analysis of
the suggested i-cSMC kernel construction of Section~\ref{sec:icsmc_atom}
may provide some assurances in practical applications.

An alternative perfect simulation algorithm has been proposed in \citet{RubenthalerPS}
to sample from Feynman--Kac path measures. Their algorithm relies
on a dominated CFTP procedure \citep{kendall2004geometric}  and
variants of Algorithm~\ref{alg:SMC} and Algorithm~\ref{alg:Conditional-SMC}
which use a branching mechanism instead of a multinomial resampling
mechanism to limit particle interactions. This algorithm has been
successfully applied to the simulation of self-avoiding random walks,
but computational guarantees have not yet been established. The approaches
developed here and in \citet{RubenthalerPS} are complementary.

The introduction of an artificial atom may even be beneficial in cases
where an atom does exist for $\Pi$. Consider the case where $\pi(\{a\})\leq\zeta$
for any $a\in\mathsf{X}$ in which case it follows that $\inf_{x\in\mathsf{X}}\Pi(x,\{a\})\leq\zeta$
for any $a\in\mathsf{X}$. If $\zeta$ is small then the perfect simulation
procedures we have provided are computationally expensive. Furthermore,
even if an atom is used only for parallel implementation of MCMC as
in Section~\ref{sub:Parallel-implementation-of}, the tours can be
prohibitively long when $\zeta$ is small since the expected return
time to an accessible atom $a$ is $\pi(\{a\})^{-1}$ by Kac's Theorem.

\section{Applications\label{sec:Applications}}

For simplicity, all of our applications are associated with discrete-time,
time-homogeneous HMMs with $\mathsf{Z}=\mathbb{R}$. That is, each
model is determined by the initial distribution of the latent states
$\mu$ and for each $p\in\{2,\ldots,n\}$, $M_{p}=M$. The observations
are encoded in the potential functions as $G_{p}(z_{p}):=g(z_{p},y_{p})$,
the likelihood of observation $Y_{p}=y_{p}$ conditional on $Z_{p}=z_{p}$.

In the first two applications, we make use of Corollary~\ref{cor:bound_on_A}
in Appendix~\ref{sec:Expressions-for}. In particular, we will show
that for some $A<\infty$, 
\begin{equation}
\sup_{z_{p-1}\in S_{p-1},z_{p}\in S_{p}}\frac{G_{p}(z_{p})}{\int_{\mathsf{Z}}G_{p}(z'_{p})M_{p}(z_{p-1},{\rm d}z'_{p})}\cdot\sup_{z_{p},z'_{p}\in S_{p},z_{p+1}\in S_{p+1}}\frac{M_{p+1}(z_{p},{\rm d}z_{p+1})}{M_{p+1}(z'_{p},{\rm d}z_{p+1})}\leq A,\label{eq:condition_A}
\end{equation}
where $S_{p}:=\{z\in\mathsf{Z}:G_{p}(z)>0\}$ for each $p\in\{1,\ldots,n\}$,
with $S_{0}:=\mathsf{Z}$. This allows us to take $F\leq A$ in (\ref{eq:icsmc_minorization}).
As suggested in Section~\ref{sub:Iterated-conditional-SMC}, we then
take $N=7(A-1)$ to guarantee that $\epsilon_{N}\geq0.75$.

Whenever (\ref{eq:condition_A}), or the more general condition (\ref{eq:condition_icsmcA})
in Appendix~\ref{sec:Expressions-for}, can be established in such
a way that $A$ is independent of $n$ then it follows that one only
needs a number of particles $N$ linear in $n$ in order to control
$\epsilon_{N}$. As the $N$ particles need to be propagated for $n$
time steps in Algorithm~\ref{alg:Conditional-SMC} the computational
cost per simulation from the i-cSMC Markov kernel is $\mathcal{O}(n^{2})$.
Since $\epsilon_{N}$ has been controlled, the expected number of
such simulations required to obtain a perfect sample is bounded by
a constant depending on $\beta$, and so the overall computational
complexity is $\mathcal{O}(n^{2})$.

\subsection{Particle in an absorbing medium\label{sub:Particle-in-an}}

We let $\mu$ be the uniform distribution on $[0,1]$ and $M(z,\cdot)$
be the Markov kernel associated with a normal distribution with mean
$z$ and variance $\sigma^{2}$, and whose density w.r.t.\ the Lebesgue
measure is denoted by $M(z,z')$. For some $S\in\mathcal{B}(\mathsf{X})$,
we let $G_{p}(z_{p})=\mathbf{1}_{S}(z_{p})$ for each $p\in\{1,\ldots,n\}$.
Under the dynamics associated with $\mu$ and $M$, $\pi$ is the
distribution of the path of a particle that has not left the set $S$.
This is a simple example of a ``hard obstacle'' model, which has
been studied in \citet{del2004particle}. The applicability of Algorithm~\ref{alg:SMC}
to obtain accurate estimates of $\pi(f)$ for this model has been
discussed in \citet{del2014particle}, where it is also remarked that
rejection sampling using the Markov chain defined by $\mu$ and $M$
alone has an expected computational cost that is $\gamma_{n}(1)^{-1}$,
which is exponentially increasing in $n$. Algorithm~\ref{alg:SMC},
however, does not produce samples exactly distributed according to
$\pi$.

As an example, we take $S=[0,1]$ and $\sigma^{2}=0.25$, from which
we can see that (\ref{eq:condition_A}) is satisfied with 
\[
A=\frac{1}{M(0,[0,1])}\frac{M(1,1)}{M(0,1)}<15.5.
\]

Since $\gamma_{n}(1)\leq M(0.5,S)^{n}=(0.683)^{n}$ it is clear that
for $n=100$ the expected number of samples for a rejection sampler
exceeds $10^{16}$. In contrast, $F\leq A$ suggests one can use $7(A-1)n\approx10^{4}$
particles to ensure $\epsilon_{N}>0.75$. As $n$ grows, the difference
becomes more and more significant; for example when $n=1000$, the
expected number of paths sampled in the rejection sampler is around
$10^{165}$ whereas $7(A-1)n\approx10^{5}$.

As suggested in Section~\ref{sub:Parameter-tuning}, Algorithm~\ref{alg:SMC}
was run to estimate the parameters $(\psi_{p})_{p\in\{1,\ldots,n\}}$
and $b$ was set to $0.5$. A number of simulations indicated that
$\check{P}_{N}((0.5,\ldots,0.5),\alpha_{n})$ was very likely to
be greater than $0.47$, and we chose $\beta=0.2$, and $\epsilon=0.1$
to be conservative.

We ran both algorithms to obtain perfect samples from $\pi$ with
$n=100$. The number of simulations from $\check{P}_{N}$ was in perfect
agreement with Propositions~\ref{prop:equiv_1mp_over_1me_coins_and_Pi_samples}
and~\ref{prop:constant_huber}. In our simulations this average cost
of flipping a $(1-p)/(1-\epsilon)$-coin in terms of simulations from
$\check{P}_{N}$ was around $5.5$ on average. This indicates an expected
number of simulations from $\check{P}_{N}$ to obtain a perfect sample
from $\check{\pi}$ of around $65$ and to obtain a perfect sample
from $\pi$ of about $130$, but it is worth noting that this figure
depends only on $\beta$ and $\epsilon$ and not $n$. Of course,
a less conservative estimate of $\beta$ would bring this number lower.
The diagnostic procedure of Section~\ref{sub:Diagnostics-and-estimation}
was run at each state visited by the Markov chain without issue.

Empirically, it seems that a smaller value $N=10^{4}$ was sufficient
to ensure $\inf_{x\in\check{\mathsf{X}}}\check{P}_{N}(x,\alpha_{n})$
greater than $0.4$ and it is true more generally that bounds such
as (\ref{eq:condition_A}) are naturally conservative. We note that
if one was to consider a larger set $S$ or a smaller variance $\sigma^{2}$
for the Markov kernels in $\mathcal{M}$ one could obtain smaller
values of $C$ by using Lemma~\ref{lemma:bound_on_A_alternative}
with a value of $m$ greater than $1$.

\subsection{Interval-censored sensor data\label{sub:Interval-censored-sensor-data}}

We consider a situation in which there are evenly spaced sensors which
track an object moving in $\mathbb{R}$. Each sensor $j\in\mathbb{Z}$
is associated with the interval of $[j,j+1)$ of $\mathbb{R}$. The
observations are thus integer valued and $Y_{p}=y_{p}$ indicates
that $Z_{p}\in[y_{p},y_{p+1})$. We let $\mu$ be a standard normal
distribution and $M(z,\cdot)$, as in Section~\ref{sub:Particle-in-an},
be the Markov kernel associated with a normal distribution with mean
$z$ and variance $\sigma^{2}$. In this case, we obtain a bound on
$A$ and therefore $C$ that is dependent on the data observed.

It is clear that $S_{p}:=[Y_{p},Y_{p}+1)$ for each $p\in\{1,\ldots,n\}$.

Assume $Y_{p}=i$ and $Y_{p+1}=j$. Then
\[
\sup_{z_{p},z'_{p}\in S_{p},z_{p+1}\in S_{p+1}}\frac{M(z_{p},{\rm d}z_{p+1})}{M(z'_{p},{\rm d}z_{p+1})}\leq\begin{cases}
\frac{M(i+1,j+1)}{M(i,j+1)} & i\leq j,\\
\frac{M(i,j)}{M(i+1,j)} & i>j.
\end{cases}
\]
Similarly, we have
\[
\inf_{z_{p-1}\in S_{p-1}}M_{p}(G_{p})(z_{p-1})=\begin{cases}
M(i,S_{p}) & i\leq j,\\
M(i+1,S_{p}) & i>j.
\end{cases}
\]
and $\bar{G}_{p}=\bar{G}=1$. Then, given observations $(Y_{1},\ldots,Y_{n})=(y_{1},\ldots,y_{n})$,
we bound $A$ by taking the two consecutive observations $y_{k}$
and $y_{k+1}$ that are furthest apart. We consider the case where
$y_{k}=i$ and $y_{k+1}=j$ and $i<j$, the cases with $i\geq j$
being analogous. Then we set
\[
A=\frac{1}{M(i,[j,j+1))}\frac{M(i+1,j)}{M(i,j+1)}.
\]

As above, Algorithm~\ref{alg:SMC} was run to estimate the parameters
$(\psi_{p})_{p\in\{1,\ldots,n\}}$ and $b$ was set to $0.5$. A number
of simulations indicated that $\inf_{x\in\check{\mathsf{X}}}\check{P}_{N}(x,\alpha_{n})$
was very likely to be greater than $0.5$, and we chose $\beta=0.2$,
and $\epsilon=0.1$ to be conservative. We ran both algorithms to
obtain perfect samples from $\check{\pi}$ with $n=100$ on a simulated
data set where $\max_{p\in\{1,\ldots,n-1\}}|y_{p}-y_{p+1}|=3$. This
led to $A=38$ with $\sigma^{2}=5$ and therefore $N=7(A-1)n=25900$.

The number of simulations from $\check{P}_{N}$ was again in perfect
agreement with Propositions~\ref{prop:equiv_1mp_over_1me_coins_and_Pi_samples}
and~\ref{prop:constant_huber}, with the average number of flips
of a $p$-coin to obtain a flip of a $(1-p)/(1-\epsilon)$-coin being
just over $5$. This indicates an expected number of simulations from
$\check{P}_{N}$ to obtain a perfect sample from $\pi$ of a little
over $120$, and again this figure depends only on $\beta$ and $\epsilon$
and not $n$. Of course, a less conservative estimate of $\beta$
would bring this number lower. The diagnostic procedure of Section~\ref{sub:Diagnostics-and-estimation}
was run at each state visited by the Markov chain without issue.

Our methodology is not only suited to indicator observations, but
we have used them here for simplicity. In addition, this is a setting
where $\mathsf{Z}$ is not compact, but each $\pi_{p}$ is compactly
supported with a support that is data dependent. This type of compact
support assumption is very much tied to obtaining bounds on $F$ via
(\ref{eq:condition_A}).

\subsection{Linear Gaussian state space model\label{sub:Linear-Gaussian-state}}

In practice, (\ref{eq:condition_A}) can fail to be satisfied for
any finite $A$, even though $F$ is finite whenever the potentials
are bounded. One such case is the linear Gaussian model, where $\mu$
is Gaussian, $M(z,\cdot)$ is Gaussian with a mean depending linearly
on $z$ and a variance independent of $z$ and $g(z,\cdot)$ is Gaussian
with a mean depending linearly on $z$ and a variance independent
of $z$. In such cases, $\pi$ is tractable and perfect samples may
be obtained from a Kalman smoother.

We simulated a time series of length $n=100$ in the case where $M(z,z')=\mathcal{N}(z';0.9z,1)$
and $g(z,y)=\mathcal{N}(y;z,1)$. After running Algorithm~\ref{alg:SMC}
with $10000$ particles to estimate the parameters $(\psi_{p})_{p\in\{1,\ldots,n\}}$
and setting $b$ to $0.5$, it was verified empirically that $\inf_{x\in\check{\mathsf{X}}}\check{P}_{N}(x,\alpha_{n})$
was likely to be greater than $0.5$ with $N=4096$. In addition,
comparison of each $\psi_{p}$ with the quantity $\gamma_{p}(1)/\gamma_{p-1}(1)$
computed using the Kalman filter showed relative discrepancies of
less than $0.02$ and $\gamma_{n}(1)^{-1}\prod_{p=1}^{n}\psi_{p}=1.01$
suggested the parameter settings were successful.

Running both perfect simulation algorithms with $\beta=0.2$ and $\epsilon=0.1$,
along with the diagnostic scheme in Section~\ref{sub:Diagnostics-and-estimation},
suggested that the samples obtained were perfect. Comparison of the
samples with $\pi$, obtained from the Kalman smoother, also did not
suggest any bias. As before, the computational expense was in perfect
agreement with Propositions~\ref{prop:equiv_1mp_over_1me_coins_and_Pi_samples}
and~\ref{prop:constant_huber} and on average less than $6$ $p$-coin
flips were needed to obtain a flip of a $(1-p)/(1-\epsilon)$-coin.

\subsection{Parallel particle marginal Metropolis--Hastings}

Our final application involves not perfect simulation, but parallel
implementation of the particle marginal Metropolis--Hastings (PMMH)
Markov chain introduced in \citet{Andrieu2010}. This Markov chain
is well-suited to estimating static parameters of an HMM, and we use
the same model and data as in Section~\ref{sub:Linear-Gaussian-state}.

Denoting by $\Theta=\mathbb{R}^{4}$ the parameter space with $\theta=(\theta_{1},\theta_{2},\theta_{3},\theta_{4})$
governing $M$ and $g$ by $M(z,z')=\mathcal{N}(z';\theta_{1}z,\theta_{2})$
and $g(z,y)=\mathcal{N}(y;\theta_{3}z,\theta_{4})$. The distribution
$\pi$ of interest is the posterior distribution of $\theta$ conditional
upon the observed data, and can be written as
\[
\pi(\theta):=\frac{\gamma_{\theta,n}(1)\varpi(\theta)}{\int\gamma_{\theta',n}(1)\varpi(\theta'){\rm d}\theta'},
\]
where $\varpi$ is a prior density for $\theta$, and $\gamma_{\theta,n}$
is the measure defined by (\ref{eq:unnorm_gamma_p}) for a given value
of the parameter $\theta$. More specifically, instead of a single
$\mu$, $\mathcal{M}$ and $\mathcal{G}$ there are now collections
$\{\mu_{\theta}:\theta\in\Theta\}$, $\{\mathcal{M}_{\theta}:\theta\in\Theta\}$,
and $\{\mathcal{G}_{\theta}:\theta\in\Theta\}$, and for each $\theta$
the triplet $(\mu_{\theta},\mathcal{M}_{\theta},\mathcal{G}_{\theta})$
specifies a Feynman--Kac model. We denote by $G_{\theta,p}$ the time
$p$ potential specified by $\mathcal{G}_{\theta}$. In this context,
since the Feynman--Kac models specify different versions of an HMM,
the quantity $\gamma_{\theta,n}(1)$ is equal to the marginal likelihood
of the observations conditional upon the parameter $\theta$.

While $\gamma_{\theta,n}(1)$ cannot be computed in practice, the
PMMH Markov chain is an example of a pseudo-marginal Markov chain
\citep{Beaumont2003,Andrieu2009} that in this context can be viewed
as evolving on $\Theta\times\mathbb{R}_{+}$ with transition kernel
\[
\Pi(\theta,w;{\rm d}\theta',{\rm d}w'):=\min\left\{ 1,\frac{\varpi(\theta')q(\theta',\theta)w'}{\varpi(\theta)q(\theta,\theta')w}\right\} U_{\theta'}^{N}({\rm d}w')q(\theta,{\rm d}\theta')+r(\theta,w)\delta_{(\theta,w)}({\rm d}\theta',{\rm d}w'),
\]
where
\[
r(\theta,w):=1-\int_{\Theta\times\mathbb{R}_{+}}\min\left\{ 1,\frac{\varpi(\theta')q(\theta',\theta)w'}{\varpi(\theta)q(\theta,\theta')w}\right\} q(\theta,{\rm d}\theta')U_{\theta'}^{N}({\rm d}w'),
\]
and $U_{\theta}^{N}$ is the distribution of the random variable 
\begin{equation}
\gamma_{\theta,n}^{N}(1):=\prod_{p=1}^{n}\frac{1}{N}\sum_{j=1}^{N}G_{\theta,p}(\zeta_{p}^{j}),\label{eq:nc_estimate_rv}
\end{equation}
when $(\zeta_{1}^{1},\ldots,\zeta_{n}^{N})\sim Q_{\theta}^{N}$, i.e.,
produced by running Algorithm~\ref{alg:SMC} with the Feynman--Kac
model specific to $\theta$. The marginal distribution of the $\theta$-coordinate
of the invariant distribution associated with this Markov kernel is
$\pi$. We consider the case where $q(\theta,\cdot)$ is a multivariate
normal distribution centred at $\theta$ with variance-covariance
matrix $0.1^{2}Id$, $\varpi$ is a multivariate normal distribution
centred at $0$ with variance-covariance matrix $0.2^{2}Id$, and
$N=2048$ to define $U_{\theta}^{N}$.

We modify $\Pi$ to produce $\check{\Pi}$ according to the first
strategy in Section~\ref{sub:Practical-design-of}. We first identify
a central parameter $\theta^{*}$, here taken to be $(0.9,1,1,1)$
for simplicity, and define the artificial atom to be $\alpha:=\{a\}$,
$a:=(a_{\theta},a_{w})$, where $a_{\theta}$ is a distinguished point
and $a_{w}$ is an estimate of $\gamma_{\theta^{*},n}(1)$ obtained
by running Algorithm~\ref{alg:SMC} and computing (\ref{eq:nc_estimate_rv})
with $N$ large. The prior density for $\theta$ is modified to $\check{\varpi}(\theta)=0.5\mathbb{I}(\theta\in\Theta)\varpi(\theta)+0.5\mathbb{I}(\theta=a_{\theta})$.
The ``re-entry'' proposal $\mu$ in (\ref{eq:brockwell-MH}) is
taken to be $\mu({\rm d}\theta,{\rm d}w)=q(\theta^{*},{\rm d}\theta)U_{\theta}^{N}({\rm d}w)$,
and the mixture weights in (\ref{eq:brockwell-MH}) are taken to be
$0.5$. Simulation of the split chain with $\nu=\delta_{a}$ and $s=p$
associated with the transition kernel $\check{\Pi}$ is then straightforward.

We ran the modified Markov chain $\check{\mathbf{X}}$ for $100000$
iterations, obtaining $71713$ tours. The proportion of time spent
at the artificial atom was therefore $0.72$ and the average length
of a tour was $1.4$. The empirical variance of the tour lengths was
$7.7$ and the length of the longest tour was $150$. There were only
$5$ tours whose length exceeded $100$. It is possible that the value
of $a_{w}$ was slightly too large, as almost three quarters of the
samples of $\check{\mathbf{X}}$ are discarded when $\check{\mathbf{Y}}$
is constructed. However, this factor of $4$ loss is offset by the
potential for a practically arbitrary gain in efficiency by distributing
the simulation of multiple tours across a number of different computers.

\section*{Acknowledgements}

The authors would like to thank the Isaac Newton Institute for Mathematical
Sciences, Cambridge, for support and hospitality during the programme
\emph{Advanced Monte Carlo Methods for Complex Inference Problems},
where work on this paper was undertaken. Arnaud Doucet's research
is supported by an Engineering and Physical Sciences Research Council
Established Career Research Fellowship.

\appendix

\section{Proofs}

\subsection{Proofs for Section~\ref{sec:Reg-PS-perfectatom}}
\begin{proof}[Proof of Proposition~\ref{prop:e_over_p_coin_standard}]
Since $p>\epsilon>0$, we have the Maclaurin series expansion $1/p=\sum_{k=0}^{\infty}(1-p)^{k}$
and so
\[
\frac{\epsilon}{p}=\epsilon\sum_{k=1}^{\infty}(1-p)^{k-1}=\sum_{k=1}^{\infty}\epsilon(1-\epsilon)^{k-1}\left(\frac{1-p}{1-\epsilon}\right)^{k-1},
\]
which is an expectation w.r.t.\ a ${\rm Geometric}(\epsilon)$ random
variable, and the result follows.
\end{proof}

\begin{proof}[Proof of Proposition~\ref{lem:eoverp_cost}]
We define the stopping time $\tau:=\inf\{n\geq1:Y_{n}+Z_{n}\geq1\}$
and it is clear that 
\begin{eqnarray*}
\mathbb{P}(Y_{\tau}=1) & = & \mathbb{P}(Y_{n}=1\mid Y_{n}+Z_{n}\geq1)\\
 & = & \frac{\epsilon}{\epsilon+(p-\epsilon)/(1-\epsilon)-\epsilon(p-\epsilon)/(1-\epsilon)}=\frac{\epsilon}{p}.
\end{eqnarray*}
We have $\mathbb{E}(\tau)=\left[1-(1-\epsilon)(1-p)/(1-\epsilon)\right]^{-1}=p^{-1}$
since $\tau$ is the minimum of two geometric random variables and
hence also geometrically distributed. Now when $Y_{\tau}=1$ it is
not necessary to flip $Z_{\tau}$, and so the expected number of $(p-\epsilon)/(1-\epsilon)$-coin
flips is
\[
\frac{\epsilon}{p}\left(\frac{1}{p}-1\right)+\left(1-\frac{\epsilon}{p}\right)\frac{1}{p}=\frac{1-\epsilon}{p}.
\]

\end{proof}

\begin{proof}[Proof of Corollary~\ref{prop:split_cost}]
We note that simulating $X_{n}$ and $\rho_{n-1}^{(a,\epsilon)}$
requires drawing $X_{n}\sim\Pi(X_{n-1},\cdot)$ and then, if $X_{n}=a$,
flipping an $\epsilon/p(X_{n-1})$-coin. From $p(X_{n-1})=\Pi(X_{n-1},\{a\})$
and Proposition~\ref{lem:eoverp_cost} we have that the expected
number of $(1-p(X_{n-1}))/(1-\epsilon)$-coin flips required to simulate
$(\rho_{n-1}^{(a,\epsilon)},X_{n})$ is $p(X_{n-1})(1-\epsilon)/p(X_{n-1})=1-\epsilon$.
\end{proof}

\begin{proof}[Proof of Proposition~\ref{prop:equiv_1mp_over_1me_coins_and_Pi_samples}]
For Algorithm~\ref{alg:Practical_imputation}, the number of samples
from $\Pi$ is $\epsilon^{-1}$ since we simulate $X_{2},\ldots,X_{\tau_{a,\epsilon}+1}$
each using $\Pi$ and the expected length of a tour is $\epsilon^{-1}$.
The expected number of $(1-p)/(1-\epsilon)$-coin flips required is
$(1-\epsilon)/\epsilon$ from Proposition~\ref{lem:eoverp_cost},
the fact that we simulate $\rho_{1}^{(a,\epsilon)},\ldots,\rho_{\tau_{a,\epsilon}}^{(a,\epsilon)}$
and the expected length of each tour being $\epsilon^{-1}$.

For Algorithm~\ref{alg:Practical_multigamma}, the number of $(1-p)/(1-\epsilon)$-coin
flips required is $\epsilon^{-1}-1$ since $N\sim{\rm Geometric}(\epsilon)$.
For each of the samples $X_{2},\ldots,X_{N}$ from $R_{a,\epsilon}(x,\cdot)$
the probability of selecting the mixture component $R_{a,p}$ in (\ref{eq:mixture_residual_epsilon})
is $(1-p(x))/(1-\epsilon)$ and the expected number of samples from
$\Pi(x,\cdot)$ to obtain a sample from $R_{a,p}(x,\cdot)$ by rejection
is $1/(1-p(x))$. Hence the expected number of samples from $\Pi$
for each of the $N-1$ steps is $\epsilon^{-1}$.
\end{proof}

The following Lemma is used to prove Proposition~\ref{prop:constant_huber}.
\begin{lem}
\label{lem:huber_bounded}Let $B\geq0.5$, $b\in[B,1]$, $p\in[0,b]$
and $C=2/(1+b)$, with $\varepsilon=1-Cb$. Then the expected number
of $p$-coin flips required to flip a $Cp$-coin using the algorithm
of \citet{huber2013nearly} with parameters $(C,\varepsilon)$ is
bounded above by $11$.\end{lem}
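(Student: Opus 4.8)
The plan is to invoke the explicit running-time guarantee of \citet{huber2013nearly} for the linear Bernoulli factory, substitute the particular parameters $(C,\varepsilon)$ appearing in the statement, and then verify by an elementary optimisation that the resulting bound stays below $11$.

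First I would record the finer, $p$-dependent form of Huber's bound: when the algorithm is run with parameters $(C,\varepsilon)$ and $Cp\le 1-\varepsilon$ to produce a $Cp$-coin from a $p$-coin, the expected number of $p$-coin flips is at most an explicit quantity $\Phi(C,\varepsilon,p)$ from \citet[Theorem~1]{huber2013nearly}. It is essential to use this form rather than the headline estimate $9.5C/(1-\gamma)$ quoted in Section~\ref{sub:Bernoulli-factory-algorithms}: with $\gamma=Cb=1-\varepsilon$ that estimate is $9.5C/\varepsilon$, which already equals $38$ at $b=1/2$ and diverges as $b\uparrow 1$, so it is far too weak for the present claim.

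Second, I would substitute $C=2/(1+b)$ and $\varepsilon=1-Cb=(1-b)/(1+b)$, turning $\Phi$ into a function of $b$ and $p$ with $p\in[0,b]$. The hypothesis $B\ge 1/2$ is used at exactly one point: it forces $b\ge 1/2$, hence $C=2/(1+b)\in(1,4/3]$ and $\varepsilon=(1-b)/(1+b)\in(0,1/3]$, so $C$ is uniformly controlled and the task reduces to bounding $\sup_{1/2\le b<1}\sup_{0\le p\le b}\Phi\bigl(2/(1+b),(1-b)/(1+b),p\bigr)$. Since $p$ affects $\Phi$ only through $Cp\in[0,1-\varepsilon]$, this is effectively a two-variable optimisation on a compact region.

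Third, I would carry out that optimisation. I expect $\Phi$ to be essentially increasing in $p$ for fixed $b$ (the factory is hardest when $Cp$ is near its ceiling $1-\varepsilon$), so the worst case is $p=b$, i.e.\ $Cp=1-\varepsilon$; the key observation will be that there the dangerous denominator $1-Cp$ equals $\varepsilon=(1-b)/(1+b)$ and cancels against the compensating small factor in Huber's bound (morally the $1-p=1-b$ coming from the low probability of the rare outcome of the input coin), leaving a quantity bounded uniformly in $b$, which together with $C\le 4/3$ lands comfortably below $11$. Since $11$ is used only to produce the round estimate $12/\epsilon$ in Section~\ref{sub:Computional-cost-of}, no sharper constant is needed. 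The main obstacle is precisely this last step: one must handle the exact shape of Huber's formula carefully enough to rule out any interior stationary point of $p\mapsto\Phi(\cdot,\cdot,p)$ exceeding $11$ and to confirm that the supremum over $b\in[1/2,1)$ is controlled by the cancellation at $p=b$ rather than by a blow-up elsewhere; this is routine but slightly fiddly.
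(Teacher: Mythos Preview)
Your overall strategy---quote Huber's explicit running-time bound, substitute $C=2/(1+b)$ and $\varepsilon=(1-b)/(1+b)$, then maximise over $b\in[1/2,1)$ and $p\in[0,b]$---is exactly the paper's. But your anticipated cancellation mechanism is wrong, and if you go looking for it you will not find it.

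The paper uses \citet[Theorem~3.6]{huber2013nearly}, which with the defaults $\gamma=1/2$, $k=2.3/(\gamma\varepsilon)$, $r=e^{-2.3}/(1-\gamma)^{2}$ reads
\[
\mathbb{E}[T]\le\frac{k(C-1)+C}{1-(Cp)^{k}}-\frac{C-1}{1-Cp}+\frac{\frac{r}{1-r}\bigl[\gamma k(\tfrac{C}{1-\varepsilon}-1)+(1-\gamma)^{2}\tfrac{C}{1-\varepsilon}\bigr]}{1-(Cp)^{k}}.
\]
The factor $1/(1-Cp)$ appears only in the \emph{negative} middle term, which is simply dropped; there is no $(1-p)$ factor anywhere to cancel against it. The genuine cancellation is algebraic: from the specific parameter choice one has $C-1=\varepsilon$, so $k(C-1)=2.3/\gamma=4.6$ is constant, and similarly $\gamma k\bigl(C/(1-\varepsilon)-1\bigr)=2.3(1+b)/b$ is bounded. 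The remaining denominator $1-(Cp)^{k}$ is harmless because $(Cb)^{k}=\bigl(\tfrac{2b}{1+b}\bigr)^{4.6(1+b)/(1-b)}\nearrow e^{-4.6}$ as $b\to 1$.

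A second correction: after these substitutions the bound is monotonically \emph{decreasing} in $b$, and the constant $11$ is achieved (not ``comfortably'' beaten) at the endpoint $b=B=1/2$. So the hypothesis $B\ge 1/2$ is doing more than keeping $C\le 4/3$; it is exactly what pins the bound at $11$, and a smaller $B$ would give a strictly larger constant (cf.\ the remark following the proof in the paper).
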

\begin{proof}
From inspection of the default algorithm settings, one can check that
since $B\geq0.217$ and $b\geq B$, one indeed has $\varepsilon=\min\{0.644,1-Cb\}=1-Cb$.
From \citet[Theorem~3.6]{huber2013nearly}, we have that the number
of $p$-coin flips $T$ satisfies
\begin{eqnarray*}
\mathbb{E}[T] & \leq & \frac{k(C-1)+C}{1-(Cp)^{k}}-\frac{C-1}{1-Cp}+\frac{\frac{r}{1-r}[\gamma k(\frac{C}{1-\varepsilon}-1)+(1-\gamma)^{2}\frac{C}{1-\varepsilon}]}{1-(Cp)^{k}}\\
 & \leq & \frac{1}{1-(Cp)^{k}}\left\{ k(C-1)+C+\frac{r}{1-r}\left[\gamma k(\frac{C}{1-\varepsilon}-1)+(1-\gamma)^{2}\frac{C}{1-\varepsilon}\right]\right\} 
\end{eqnarray*}
where $k:=2.3/(\gamma\varepsilon)$ and $r:=\exp(-2.3)/(1-\gamma)^{2}$
and $\gamma=1/2$. We first bound
\[
(Cp)^{k}\leq(Cb)^{k}=\left(\frac{2b}{1+b}\right)^{k}\nearrow\exp\left(-4.6\right)
\]
as $b\rightarrow1$ so $\frac{1}{1-(Cp)^{k}}\leq\frac{1}{1-\exp\left(-4.6\right)}<1.0102$.
Noting that $\varepsilon=C-1$ we have $k(C-1)+C=4.6+2/(1+b)$. Finally,
since $C/(1-\varepsilon)=b^{-1}$,
\[
\gamma k\left(\frac{C}{1-\varepsilon}-1\right)=\frac{2.3}{C-1}\left(\frac{1-b}{b}\right)=2.3\frac{1+b}{b},
\]
and $(1-\gamma)^{2}C/(1-\varepsilon)=1/(4b)$. All together, we obtain
\[
\mathbb{E}[T]\leq1.0102\left\{ 4.6+\frac{2}{1+b}+\frac{r}{1-r}\left[2.3\frac{1+b}{b}+\frac{1}{4b}\right]\right\} ,
\]
which is monotonically decreasing in $b$. Since $r/(1-r)\leq0.6696$
we have 
\[
\mathbb{E}[T]\leq1.0102\left\{ 4.6+\frac{2}{1+B}+0.6696\left[2.3\frac{1+B}{B}+\frac{1}{4B}\right]\right\} ,
\]
and for $B=0.5$ we obtain $\mathbb{E}[T]\leq11$.\end{proof}
\begin{rem*}
A bound can be obtained for any $B\geq0.217$ using the final displayed
equation. One could potentially derive better choices of $\gamma$
and $k$ based on the developments in the proof.\end{rem*}
\begin{proof}[Proof of Proposition~\ref{prop:constant_huber}]
This follows from Lemma~\ref{lem:huber_bounded} by noting that
\[
C=\frac{1}{1-\epsilon}=\frac{1}{1-\beta/2}=\frac{2}{2-\beta}=\frac{2}{1+b},
\]
with $b=1-\beta\geq0.5$, and $q=1-p\leq1-\beta=b$.
\end{proof}

\subsection{Proof for Section~\ref{sec:artificial_atom}}
\begin{proof}[Proof of Proposition~\ref{prop:ue-inheritance}]
Since $\mathbf{X}$ is uniformly ergodic, $\Pi$ satisfies for some
$m\in\mathbb{N}$, $\epsilon>0$ and probability measure $\nu$, 
\[
\Pi^{m}(x,A)\geq\epsilon\nu(A),\quad A\in\mathcal{B}(\mathsf{X}).
\]
We consider $\check{\Pi}^{m+1}$ and an arbitrary $A\in\mathcal{B}(\mathsf{X})$.
It follows that if $x\in\mathsf{X}$ then from (\ref{eq:check_pi_modification_definition}),
\[
\check{\Pi}^{m+1}(x,A)\geq\int_{\mathsf{X}}\check{\Pi}(x,{\rm d}y)\check{\Pi}^{m}(y,A)\geq w^{m+1}\epsilon\nu(A),
\]
since $\check{\Pi}(x,\mathsf{X})\geq w\Pi(x,\mathsf{X})=w$, and 
\[
\check{\Pi}^{m+1}(a,A)\geq\int_{\mathsf{X}}\check{\Pi}(a,{\rm d}y)\check{\Pi}^{m}(y,A)\geq\check{\Pi}(a,\mathsf{X})w^{m}\epsilon\nu(A).
\]
Hence $\inf_{x\in\check{\mathsf{X}}}\check{\Pi}^{m+1}(x,A)\geq\min\left\{ w,\check{\Pi}(a,\mathsf{X})\right\} w^{m}\epsilon\nu(A)$
and, as $\check{\Pi}(a,\mathsf{X})>0$ by assumption, it follows that
$\check{\mathbf{X}}$ is uniformly ergodic.

To show that the converse does not hold, consider the case where $\check{\Pi}=w\Pi_{1}+(1-w)\Pi_{2}$
is exactly as in the mixture strategy discussed in Section~\ref{sub:Practical-design-of}.
Then if $C:=\inf_{x\in\mathsf{X}}\mu({\rm d}x)/\pi({\rm d}x)>0$,
we have for $x\in\mathsf{X}$
\[
\check{\Pi}(x,\{a\})\geq(1-w)\min\left\{ 1,\check{\pi}(\{a\})\frac{\mu({\rm d}x)}{\check{\pi}({\rm d}x)}\right\} \geq(1-w)C\check{\pi}(\{a\})
\]
and $\check{\Pi}(a,\{a\})\geq w$. Hence, $\check{\Pi}(x,\{a\})\geq\min\left\{ w,(1-w)C\check{\pi}(\{a\})\right\} >0$
and so $\check{\mathbf{X}}$ is uniformly ergodic, irrespective of
the properties of $\Pi$. 
\end{proof}

\subsection{Proofs for Section~\ref{sec:icsmc_atom}}
\begin{proof}[Proof of Proposition~\ref{prop:RNderiv_ub}]
Since the potentials are strictly positive, the measures $Q^{N}({\rm d}x,{\rm d}v)$
and $\pi({\rm d}x)\bar{Q}_{x}({\rm d}v)$ are equivalent. Therefore,
\begin{eqnarray*}
Q^{N}({\rm d}x) & = & \sum_{{\bf k}\in[N]^{n}}\int_{\mathsf{V}_{N}}Q^{N}({\bf k},{\rm d}x,{\rm d}v)\\
 & = & \sum_{{\bf k}\in[N]^{n}}\int_{\mathsf{V}_{N}}\frac{Q^{N}({\bf k},{\rm d}x,{\rm d}v)}{\pi({\rm d}x)\bar{Q}_{x}^{N}({\bf k},{\rm d}v)}\pi({\rm d}x)\bar{Q}_{x}^{N}({\bf k},{\rm d}v)\\
 & = & \pi({\rm d}x)\int_{\mathsf{V}_{N}}\frac{1}{\phi^{N}(v)}\bar{Q}_{x}^{N}({\rm d}v)\\
 & \geq & \pi({\rm d}x)\left\{ \int_{\mathsf{V}_{N}}\phi^{N}(v)\bar{Q}_{x}^{N}({\rm d}v)\right\} ^{-1},
\end{eqnarray*}
by Jensen's inequality, and we conclude.
\end{proof}

\begin{proof}[Proof of Proposition~\ref{prop:RN_pi_QN}]
We provide here a simple proof leveraging a result of \citep{andrieu2013uniform}.
Under the assumptions given, condition (A1) in \citet{andrieu2013uniform}
always holds when $n$ is finite and their constant $\alpha$ is equal
to $F$ (see also Appendix~\ref{sec:Expressions-for}). Hence we
can apply \citet[Proposition~14]{andrieu2013uniform} to obtain that
for any $N\geq2$
\[
\sup_{x,y\in\mathsf{X}}\bar{\mathsf{E}}_{x,y}^{N}\left[\phi^{N}(V)\right]\leq\left(1+\frac{2(F-1)}{N}\right)^{n},
\]
where $\bar{\mathsf{E}}_{x,y}^{N}$ denotes expectation w.r.t.\ the
law of a doubly conditional SMC algorithm with two fixed paths $x,y\in\mathsf{X}$,
a construction defined in \citet{andrieu2013uniform}. Since
\[
\sup_{x\in\mathsf{X}}\bar{\mathsf{E}}_{x}^{N}\left[\phi^{N}(V)\right]\leq\sup_{x,y\in\mathsf{X}}\bar{\mathsf{E}}_{x,y}^{N}\left[\phi^{N}(V)\right],
\]
we conclude.
\end{proof}

\begin{proof}[Proof of Proposition~\ref{prop:icsmc-artificial-minorization}]
Since $\psi_{1},\ldots,\psi_{n}$ are constants, the potential functions
$\left\{ \check{G}_{p}:p\in\{1,\ldots,n\}\right\} $ are $\check{\pi}$-essentially
bounded above if and only if the potential functions $\{G_{p}:p\in\{1,\ldots,n\}\}$
are $\pi$-essentially bounded above. The result then follows from
\citet[Theorem~1]{andrieu2013uniform}.
\end{proof}

\begin{proof}[Proof of Proposition~\ref{prop:Fcheck_F}]
This follows immediately from Lemma~\ref{lem:Acheck} in Appendix~\ref{sec:Expressions-for}
\end{proof}

\subsection{Proofs for Section~\ref{sec:discussion}}
\begin{proof}[Proof of Proposition~\ref{prop:diagnostic}]
We define $(\xi_{i})_{i\geq1}$ by $\xi_{i}:=B_{i}-\beta$ and $S_{n}:=\sum_{i=1}^{n}\xi_{i}$
with $S_{0}:=0$. Then $\tau=\inf\{n\geq1:S_{n}>0\}$. For the first
two parts, we can think of $(S_{n})_{n\geq0}$ as a Markov chain on
$\mathbb{R}$ (\citealp{lamperti1960criteria}, see also \citealt[Chapter~9]{meyn2009markov})
with bounded increments. If $p<\beta$, $(S_{n})_{n\geq0}$ is transient
with negative drift and so its probability of return to $\mathbb{R}_{+}$
is strictly inferior to $1$. If $p=\beta$, $(S_{n})_{n\geq0}$ is
null recurrent and so its probability of return to $\mathbb{R}_{+}$
is $1$, and its expected return time to $\mathbb{R}_{+}$ is infinite.
Finally, when $p>\beta$, we can apply Wald's equation $\mathbb{E}(S_{\tau})=\mathbb{E}(\tau)\mathbb{E}(\xi_{1})$,
and since $S_{\tau}\leq1-\beta$ by construction and $\mathbb{E}(\xi_{1})=p-\beta$,
we have $\mathbb{E}(\tau)\leq(1-\beta)/(p-\beta)$.
\end{proof}

\begin{proof}[Proof of Corollary~\ref{cor:diagnostic_algorithm_implication}]
For the first part, let $A:=\{x\in\mathsf{X}:p(x)<\beta\}$ and by
assumption $\pi(A)>0$. Now let $n=\min\{k:\nu R_{a,\epsilon}^{k-1}(A)>0\}$,
which from the definitions of $\nu$ and $R_{a,\epsilon}$ must be
finite. Moreover, for any $k\in\{1,\ldots,n\}$, $\nu R_{a,\epsilon}^{k-1}$
is the distribution of $X_{k}$ using either algorithm, conditional
upon the algorithm having not stopped before time $k$. Letting $\tau_{x}$
denote the stopping time of the diagnostic when $p=p(x)$, it follows
that the probability that the algorithm never terminates is greater
than or equal to $(1-\epsilon)^{n-1}\int_{A}\nu R_{a,\epsilon}^{n-1}({\rm d}x)\mathbb{P}(\tau_{x}=\infty)>0$.

For the second part, let $A:=\{x\in\mathsf{X}:p(x)=\beta\}$ and $n=\min\{k:\nu R_{a,\epsilon}^{k-1}(A)>0\}$
as above. Then it follows that the expected number of flips of a $p$-coin
is greater than $(1-\epsilon)^{n-1}\int_{A}\nu R_{a,\epsilon}^{n-1}({\rm d}x)\mathbb{E}(\tau_{x})=\infty$.
The third part is immediate.
\end{proof}

\begin{proof}[Proof of Proposition~\ref{prop:skip_free_prob_never_positive}]
As in the proof of Proposition~\ref{prop:diagnostic} we define
$(\xi_{i})_{i\geq1}$ by $\xi_{i}:=B_{i}-\beta$ but now define $S_{n}:=\beta^{-1}\sum_{i=1}^{n}\xi_{i}=\sum_{i=1}^{n}\left(mB_{i}-1\right)$.
It is clear that $(S_{n})_{n\geq1}$ is a Markov chain on the integers
that is skip-free to the left and so we can apply \citet[Corollary~1]{brown2010some}
to obtain
\[
\mathbb{P}(\tau<\infty)=1-\frac{1-mp}{1-p}=\frac{p(m-1)}{1-p}.
\]

\end{proof}

\begin{proof}[Proof of Proposition~\ref{prop:sensitivity}]
Both kernels are uniformly ergodic with $\left\Vert \Pi^{n}(x,\cdot)-\pi\right\Vert _{{\rm TV}}\leq(1-\underbar{\ensuremath{\epsilon}})^{n}$
and $\left\Vert \tilde{\Pi}^{n}(x,\cdot)-\tilde{\pi}\right\Vert _{{\rm TV}}\leq(1-\epsilon)^{n}$.
We have, following \citet[Theorem~3.2]{mitrophanov2005sensitivity},
\[
\left\Vert \tilde{\Pi}^{n}(x,\cdot)-\Pi^{n}(x,\cdot)\right\Vert _{{\rm TV}}\leq\frac{1}{\epsilon}\sup_{x\in\mathsf{X}}\left\Vert \tilde{\Pi}(x,\cdot)-\Pi(x,\cdot)\right\Vert _{{\rm TV}}.
\]
It remains to bound $\left\Vert \tilde{\Pi}(x,\cdot)-\Pi(x,\cdot)\right\Vert _{{\rm TV}}=\sup_{A\in\mathcal{B}(\mathsf{X})}\left|\Pi(x,A)-\tilde{\Pi}(x,A)\right|$.
Letting $a\wedge b$ denote $\min\{a,b\}$ and $A\in\mathcal{B}(\mathsf{X})$,
we have
\begin{eqnarray*}
\left|\Pi(x,A)-\tilde{\Pi}(x,A)\right| & = & \left|\Pi(x,A)-\epsilon\delta_{a}(A)-(1-\epsilon)\frac{\Pi(x,A)-\left\{ p(x)\wedge\epsilon\right\} \delta_{a}(A)}{1-p(x)\wedge\epsilon}\right|\\
 & = & \left|\Pi(x,A)\left\{ 1-\frac{1-\epsilon}{1-p(x)\wedge\epsilon}\right\} +\delta_{a}(A)\left\{ \frac{1-\epsilon}{1-p(x)\wedge\epsilon}\left\{ p(x)\wedge\epsilon\right\} -\epsilon\right\} \right|\\
 & = & \left|\Pi(x,A)\left\{ \frac{\epsilon-p(x)\wedge\epsilon}{1-p(x)\wedge\epsilon}\right\} +\delta_{a}(A)\left\{ \frac{p(x)\wedge\epsilon-\epsilon}{1-p(x)\wedge\epsilon}\right\} \right|\\
 & = & \left|\Pi(x,A)-\delta_{a}(A)\right|\frac{\epsilon-p(x)\wedge\epsilon}{1-p(x)\wedge\epsilon}\\
 & \leq & \left|\Pi(x,A)-\delta_{a}(A)\right|\frac{\epsilon-\underbar{\ensuremath{\epsilon}}}{1-\underbar{\ensuremath{\epsilon}}},
\end{eqnarray*}
where the last inequality follows because $(a-b)/(1-b)$ is monotonically
decreasing as $b$ increases and $\underline{p}\geq\underbar{\ensuremath{\epsilon}}$.
Because $\Pi(x,\{a\})\geq\underbar{\ensuremath{\epsilon}}$ we have
$\left|\Pi(x,A)-\delta_{a}(A)\right|\leq1-\underbar{\ensuremath{\epsilon}}$
and therefore $\left\Vert \tilde{\Pi}(x,\cdot)-\Pi(x,\cdot)\right\Vert _{{\rm TV}}\leq\epsilon-\underbar{\ensuremath{\epsilon}}$.
Finally,
\[
\left\Vert \tilde{\Pi}^{n}(x,\cdot)-\Pi^{n}(x,\cdot)\right\Vert _{{\rm TV}}\leq\frac{1}{\epsilon}\sup_{x\in\mathsf{X}}\left\Vert \tilde{\Pi}(x,\cdot)-\Pi(x,\cdot)\right\Vert \leq1-\underbar{\ensuremath{\epsilon}}/\epsilon
\]
and we conclude.
\end{proof}

\begin{proof}[Proof of Proposition~\ref{prop:finite_expected_time_general_alg}]
The proof essentially follows the same argument as the corresponding
negative results of \citet{asmussen1992stationarity} and \citet{blanchet2007exact}.
Let $T(n)$ denote the computational time required to simulate from
$\eta_{n}^{(\nu,s)}$. Then the assumptions provide that there exists
$C<\infty$ such that the expected computational time of this simulation
is
\[
\sum_{n\geq1}T(n)\mathbb{P}_{\nu,s}(\tau_{\nu,s}\geq n)/\mathbb{E}_{\nu,s}(\tau_{\nu,s})\leq\sum_{n=1}^{\infty}Cn^{d}n^{-b},
\]
and the r.h.s.\ is finite whenever $d<b-1$.
\end{proof}

\begin{proof}[Proof of Proposition~\ref{prop:reasonablechain-boundedvariancereturn}]
From \citet[Theorem~1.1]{bednorz2008regeneration} we have that a
$\sqrt{n}$-CLT holds for a function $g$ if and only if $\mathbb{E}_{\alpha,p}\left[\left(\sum_{k=1}^{\tau_{\alpha,p}}\bar{g}(X_{k})\right)^{2}\right]<\infty$,
where $\bar{g}:=g-\pi(g)$. We define $\left\Vert g\right\Vert :=\sup_{x\in\mathsf{X}}\left|g(x)\right|$.

$\left(\Longleftarrow\right)$If $\mathbb{E}_{\alpha,p}\left(\tau_{\alpha,p}^{2}\right)<\infty$
and $\left\Vert g\right\Vert <\infty$, we have $\left\Vert \bar{g}\right\Vert \leq2\left\Vert g\right\Vert $
and so
\[
\mathbb{E}_{\alpha,p}\left[\left(\sum_{k=1}^{\tau_{\alpha,p}}\bar{g}(X_{k})\right)^{2}\right]\leq\mathbb{E}_{\alpha,p}\left[\left(\sum_{k=1}^{\tau_{\alpha,p}}2\left\Vert g\right\Vert \right)^{2}\right]=4\left\Vert g\right\Vert ^{2}\mathbb{E}_{\alpha,p}\left(\tau_{\alpha,p}^{2}\right)<\infty.
\]

$\left(\Longrightarrow\right)$ If $\mathbb{E}_{\alpha,p}\left[\left(\sum_{k=1}^{\tau_{\alpha,p}}\bar{g}(X_{k})\right)^{2}\right]<\infty$
for all $g$ such that $\left\Vert g\right\Vert <\infty$ then clearly
this holds for $g(x)=\mathbb{I}\left\{ x\in\alpha\right\} -\pi(\alpha)=\bar{g}(x)$.
Then
\begin{eqnarray*}
\mathbb{E}_{\alpha,p}\left[\left(\sum_{k=1}^{\tau_{\alpha,p}}\bar{g}(X_{k})\right)^{2}\right] & = & \mathbb{E}_{\alpha,p}\left[\left(1-\tau_{\alpha,p}\pi(\alpha)\right)^{2}\right]\\
 & = & \pi(\alpha)^{2}\mathbb{E}_{\alpha,p}\left(\tau_{\alpha,p}^{2}\right)-1
\end{eqnarray*}
as $\mathbb{E}_{\alpha,p}\left(\tau_{\alpha,p}\right)=1/\pi(\alpha)$
from Kac's theorem so it follows that $\mathbb{E}_{\alpha,p}\left(\tau_{\alpha,p}^{2}\right)<\infty$.
\end{proof}

\section{Lemmas for $F$ and $\check{F}$\label{sec:Expressions-for}}

We adopt here the notation of \citet{DelMoral2004}. We define for
each $p\in\{2,\ldots,n\}$, the non-negative kernel $Q_{p}(x_{p-1},{\rm d}x_{p}):=G_{p-1}(x_{p-1})M_{p}(x_{p-1},{\rm d}x_{p})$.
We can then define for any $p\in\{2,\ldots,n\}$ and $k\in\{1,\ldots,n-p+1\}$
\[
Q_{p,p+k}(A)(z_{p}):=\int_{\mathsf{Z}^{k}}\mathbb{I}(z_{p+k}\in A)\prod_{q=p+1}^{p+k}Q_{q}(z_{q-1},{\rm d}z_{q}),\quad A\in\mathcal{B}(\mathsf{Z}).
\]
This allows us to express, e.g., $\gamma_{n}(1)=\mu Q_{2,n+1}(1)$.
We define for each $p\in\{2,\ldots,n\}$ 
\[
\eta_{p}(A):=\frac{\mu Q_{2,p}(A)}{\mu Q_{2,p}(1)},\quad A\in\mathcal{B}(\mathsf{Z}),
\]
and the relationship between $\eta_{p}$ and $\pi_{p-1}$ is that
$\eta_{p}(A)=\pi_{p-1}(f)$ where $f(z_{1},\ldots,z_{p-1})=M_{p}(z_{p-1},A)$.
It follows that $\eta_{p}(G_{p})=\gamma_{p}(1)/\gamma_{p-1}(1)$ and
that $\eta_{p}Q_{p,p+k}(1)=\gamma_{p+k-1}(1)/\gamma_{p-1}(1)$.

For a Markov kernel $P$ evolving on $E$, we write $P(f)(x):=\int_{\mathsf{Z}}f(z)P(x,{\rm d}z)$.
Assumption (A1) of \citet{andrieu2013uniform} is the existence of
a constant $A<\infty$ such that 
\begin{equation}
\sup_{z\in\mathsf{Z}}\frac{Q_{p,p+k}(1)(z)}{\eta_{p}Q_{p,p+k}(1)}\leq A,\qquad p\in\{2,\ldots,n\},k\in\{1,\ldots,n-p+1\},\label{eq:condition_icsmcA}
\end{equation}
and this is in fact exactly $F$ defined in Section~\ref{sub:Iterated-conditional-SMC}.

Let $S_{p}:=\{z\in\mathsf{Z}:G_{p}(z)>0\}$ and $\bar{G}_{p}:=\sup_{z\in\mathsf{Z}}G_{p}(z)$
for each $p\in\{1,\ldots,n\}$, with $S_{0}:=\mathsf{Z}$. Let $M_{n+1}(z,A):=\chi(A)$
independent of $z$ and 
\[
M_{p+1,p+m}(z_{p},A):=\int_{\mathsf{Z}^{m}}\mathbb{I}(z_{p+m}\in A)\prod_{q=p+1}^{p+m}M_{q}(z_{q-1},{\rm d}z_{q}),\quad A\in\mathcal{B}(\mathsf{Z}).
\]

\begin{lem}
\label{lemma:bound_on_A_alternative}Let $A<\infty$, and assume that
for all $p\in\{1,\ldots,n\}$, there exists an $m\in\{1,\ldots,n-p+1\}$
such that 
\[
\frac{\prod_{q=p}^{p+m}\bar{G}_{q}}{\prod_{q=p}^{p+m}\eta_{q}(G_{q})}\cdot\sup_{z_{p}\in S_{p},z_{p+m}\in S_{p+m}}\frac{M_{p+1,p+m}(z_{p},{\rm d}z_{p+m})}{\eta_{p+m}({\rm d}z_{p+m})}\leq A,
\]
then (\ref{eq:condition_icsmcA}) holds with this choice of $A$.\end{lem}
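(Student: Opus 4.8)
The plan is to establish, for each fixed $p\in\{2,\ldots,n\}$ and $k\in\{1,\ldots,n-p+1\}$, the pointwise inequality $Q_{p,p+k}(1)(z)\le A\,\eta_{p}Q_{p,p+k}(1)$ for every $z\in\mathsf{Z}$, and then to take the supremum over $z$. The only inputs are: the hypothesis at index $p$ with its associated $m=m(p)$; the telescoping identities $\prod_{j=a}^{b}\eta_{j}(G_{j})=\gamma_{b}(1)/\gamma_{a-1}(1)$ and $\eta_{p}Q_{p,p+k}(1)=\gamma_{p+k-1}(1)/\gamma_{p-1}(1)$ recorded above; the elementary bound $\bar{G}_{j}\ge\eta_{j}(G_{j})$, which holds because $G_{j}\le\bar{G}_{j}$ pointwise; and the fact that the Radon--Nikodym supremum $D_{p}:=\sup_{z\in S_{p},\,z'\in S_{p+m}}M_{p+1,p+m}(z,{\rm d}z')/\eta_{p+m}({\rm d}z')$ appearing in the hypothesis is at least $1$, being the supremum of a density of one probability measure with respect to another. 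The guiding idea is to peel a \emph{single} length-$m$ block off the front of the sub-Markov kernel $Q_{p,p+k}(1)$: after one step the continuation of the path can be integrated against $\eta_{p+m}$ and collapses to a deterministic ratio of normalizing constants, so that only one copy of the hypothesis bound is incurred.

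Concretely, I would split into two cases. If $k>m$, I would use $Q_{q}(z_{q-1},{\rm d}z_{q})=G_{q-1}(z_{q-1})M_{q}(z_{q-1},{\rm d}z_{q})$ to factor the product defining $Q_{p,p+k}(1)(z_{p})$ at time $p+m$, bound the intermediate potentials $G_{p+1},\ldots,G_{p+m-1}$ by $\bar{G}_{p+1},\ldots,\bar{G}_{p+m-1}$, and integrate out $z_{p+1},\ldots,z_{p+m-1}$, leaving $G_{p}(z_{p})\,(\prod_{j=p+1}^{p+m-1}\bar{G}_{j})\int M_{p+1,p+m}(z_{p},{\rm d}z_{p+m})\,Q_{p+m,p+k}(1)(z_{p+m})$. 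Restricting to $z_{p}\in S_{p}$ (outside which both sides vanish) and observing that, because $k>m$, the tail $Q_{p+m,p+k}(1)$ carries the factor $G_{p+m}$ and so vanishes for $z_{p+m}\notin S_{p+m}$, the hypothesis permits the replacement $M_{p+1,p+m}(z_{p},{\rm d}z_{p+m})\le D_{p}\,\eta_{p+m}({\rm d}z_{p+m})$; the continuation then integrates to the \emph{deterministic} quantity $\eta_{p+m}Q_{p+m,p+k}(1)=\gamma_{p+k-1}(1)/\gamma_{p+m-1}(1)$. Dividing through by $\eta_{p}Q_{p,p+k}(1)$, telescoping the $\gamma_{j}(1)$, bounding $G_{p}(z_{p})\le\bar{G}_{p}$, and finally inserting the harmless factor $\bar{G}_{p+m}/\eta_{p+m}(G_{p+m})\ge1$ recovers exactly the left-hand side of the hypothesis and hence the bound $A$. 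The case $k\le m$ is simpler and needs no peeling: bounding every potential in $Q_{p,p+k}(1)$ by its supremum and integrating out all mutation kernels gives $Q_{p,p+k}(1)(z_{p})\le\prod_{j=p}^{p+k-1}\bar{G}_{j}$ for all $z_{p}$; dividing by $\eta_{p}Q_{p,p+k}(1)$ and then using $k-1<m$ together with $\bar{G}_{j}\ge\eta_{j}(G_{j})$ and $D_{p}\ge1$ shows this is at most the hypothesis bound.

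The one real idea here is the recognition that a single block suffices: the naive instinct would be to iterate the block decomposition down the whole path, but that would accumulate a factor of $A$ per block; instead, replacing the first block's mutation kernel by $\eta_{p+m}$ turns the remainder into a plain ratio of normalizing constants. Everything else is careful bookkeeping: getting the telescoping of $\gamma_{j}(1)$ exactly right, keeping track that the peeled block only consumes the potentials $G_{p},\ldots,G_{p+m-1}$ whereas the hypothesis is allowed $G_{p},\ldots,G_{p+m}$ (the surplus $\bar{G}_{p+m}\ge\eta_{p+m}(G_{p+m})$ supplying the slack), and restricting to $S_{p}$ and $S_{p+m}$ so that the Radon--Nikodym step is legitimate; the supremum over $z\in\mathsf{Z}$ at the end then yields (\ref{eq:condition_icsmcA}).
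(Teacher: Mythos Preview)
Your argument is correct and follows essentially the same route as the paper's proof: factor $Q_{p,p+k}(1)$ at time $p+m$, bound the first block by $\bigl(\prod_{q=p}^{p+m-1}\bar G_q\bigr)M_{p+1,p+m}$, replace $M_{p+1,p+m}(z_p,\cdot)$ by $D_p\,\eta_{p+m}(\cdot)$, and telescope the remaining $\gamma_j(1)$; the paper treats the case $k\le m$ as ``trivial'' and leaves implicit both the slack factor $\bar G_{p+m}/\eta_{p+m}(G_{p+m})\ge 1$ and the restriction to $S_p\times S_{p+m}$, all of which you have made explicit.
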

\begin{proof}
If $k\leq m$, then the result holds trivially. For any $k\in\{m+1,\ldots,n-p+1\}$,
\begin{eqnarray*}
Q_{p,p+k}(1)(z_{p}) & = & \int_{\mathsf{Z}^{k}}\prod_{q=p+1}^{p+k}Q_{q}(z_{q-1},{\rm d}z_{q})\\
 & = & \int_{\mathsf{Z}}Q_{p,p+m}(z_{p},{\rm d}z_{p+m})Q_{p+m,p+k}(1)(z_{p+m})\\
 & = & \int_{\mathsf{Z}}\frac{Q_{p,p+m}(z_{p},{\rm d}z_{p+m})}{\left[\prod_{q=p}^{p+m-1}\eta_{q}(G_{q})\right]\eta_{p+m}({\rm d}z_{p+m})}\left[\prod_{q=p}^{p+m-1}\eta_{q}(G_{q})\right]\eta_{p+m}({\rm d}z_{p+m})Q_{p+m,p+k}(1)(z_{p+m})\\
 & \leq & \left[\frac{\prod_{q=p}^{p+m-1}\bar{G}_{q}}{\prod_{q=p}^{p+m-1}\eta_{q}(G_{q})}\sup_{z_{p}\in S_{p},z_{p+m}\in S_{p+m}}\frac{M_{p+1,p+m}(z_{p},{\rm d}z_{p+m})}{\eta_{p+m}({\rm d}z_{p+m})}\right]\eta_{p}Q_{p,p+k}(1).
\end{eqnarray*}
\end{proof}
\begin{cor}
\label{cor:bound_on_A}If
\[
\sup_{z_{p-1}\in S_{p-1},z_{p}\in S_{p}}\frac{G_{p}(z_{p})}{M_{p}(G_{p})(z_{p-1})}\cdot\sup_{z_{p},z'_{p}\in S_{p},z_{p+1}\in S_{p+1}}\frac{M_{p+1}(z_{p},{\rm d}z_{p+1})}{M_{p+1}(z'_{p},{\rm d}z_{p+1})}\leq A,
\]
then (\ref{eq:condition_icsmcA}) holds with this choice of $A$.\end{cor}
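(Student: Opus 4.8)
The plan is to obtain Corollary~\ref{cor:bound_on_A} from Lemma~\ref{lemma:bound_on_A_alternative} by taking $m=1$ at each index $p\in\{2,\ldots,n\}$. Running the argument in the proof of that lemma with $m=1$ (so that $M_{p+1,p+m}$ there is just $M_{p+1}$), one decomposes, for $z_p\in S_p$ and $k\ge 2$,
\[
Q_{p,p+k}(1)(z_p)=G_p(z_p)\int_{S_{p+1}}\frac{M_{p+1}(z_p,{\rm d}z_{p+1})}{\eta_{p+1}({\rm d}z_{p+1})}\,\eta_{p+1}({\rm d}z_{p+1})\,Q_{p+1,p+k}(1)(z_{p+1}),
\]
where the integral may be restricted to $S_{p+1}$ since $Q_{p+1,p+k}(1)$ carries a factor $G_{p+1}$ and hence vanishes off $S_{p+1}$; combining this with the identity $\eta_{p+1}Q_{p+1,p+k}(1)=\eta_pQ_{p,p+k}(1)/\eta_p(G_p)$, and using that $Q_{p,p+k}(1)$ vanishes off $S_p$ (as does the trivial case $k=1$, for which the ratio is simply $G_p(z)/\eta_p(G_p)$), one gets
\[
\sup_{z\in\mathsf{Z}}\frac{Q_{p,p+k}(1)(z)}{\eta_pQ_{p,p+k}(1)}\;\le\;\frac{\bar{G}_p}{\eta_p(G_p)}\;\sup_{z_p\in S_p,\,z_{p+1}\in S_{p+1}}\frac{M_{p+1}(z_p,{\rm d}z_{p+1})}{\eta_{p+1}({\rm d}z_{p+1})}.
\]
It then remains to bound the two factors on the right by the two suprema in the hypothesis of Corollary~\ref{cor:bound_on_A} and multiply.

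For the first factor I would use the identity recalled just before the lemma, $\eta_p(A)=\pi_{p-1}(f)$ with $f(z_{1:p-1})=M_p(z_{p-1},A)$, so that $\eta_p(G_p)=\int\pi_{p-1}({\rm d}z_{1:p-1})\,M_p(G_p)(z_{p-1})$. Since $\pi_{p-1}$ has density proportional to $\prod_{q=1}^{p-1}G_q(z_q)$ with respect to $\mu({\rm d}z_1)\prod_{q=2}^{p-1}M_q(z_{q-1},{\rm d}z_q)$ (Section~\ref{sub:Feynman--Kac-path-measure}), the $z_{p-1}$-marginal of $\pi_{p-1}$ is supported on $S_{p-1}$, whence $\eta_p(G_p)\ge\inf_{z_{p-1}\in S_{p-1}}M_p(G_p)(z_{p-1})$ and therefore $\bar{G}_p/\eta_p(G_p)\le\sup_{z_{p-1}\in S_{p-1},\,z_p\in S_p}G_p(z_p)/M_p(G_p)(z_{p-1})$. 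For the second factor I would argue the same way one step later, at the level of densities w.r.t.\ the common dominating measure implicit in the Radon--Nikodym notation: $\eta_{p+1}=\int\pi_p({\rm d}z_{1:p})\,M_{p+1}(z_p,\cdot)$ with the $z_p$-marginal of $\pi_p$ supported on $S_p$, so the density of $\eta_{p+1}$ at any $z_{p+1}$ is at least $\inf_{z_p'\in S_p}$ of the density of $M_{p+1}(z_p',\cdot)$; dividing gives $\sup_{z_p\in S_p,z_{p+1}\in S_{p+1}}M_{p+1}(z_p,{\rm d}z_{p+1})/\eta_{p+1}({\rm d}z_{p+1})\le\sup_{z_p,z_p'\in S_p,z_{p+1}\in S_{p+1}}M_{p+1}(z_p,{\rm d}z_{p+1})/M_{p+1}(z_p',{\rm d}z_{p+1})$. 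Multiplying the two inequalities shows the hypothesis of Lemma~\ref{lemma:bound_on_A_alternative} holds with $m=1$ and the stated $A$, so (\ref{eq:condition_icsmcA}) follows.

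I do not expect a genuine obstacle: the entire content is the single observation that $\eta_p(G_p)$ and the measure $\eta_{p+1}$ are $\pi_{p-1}$- and $\pi_p$-mixtures of the one-step quantities $M_p(G_p)(\cdot)$ and $M_{p+1}(\cdot,\cdot)$, and that the mixing variables live in $S_{p-1}$ and $S_p$, so that the support-restricted infima appearing in the hypothesis are legitimate lower bounds. The only points needing care are bookkeeping ones: restricting the various suprema to the supports $S_p$ and $S_{p+1}$ (justified because $Q_{p,p+k}(1)$ and $Q_{p+1,p+k}(1)$ vanish off these sets), handling the case $k=1$ separately, and noting that for $m=1$ the bound delivered by the proof of Lemma~\ref{lemma:bound_on_A_alternative} carries only the single potential ratio $\bar{G}_p/\eta_p(G_p)$.
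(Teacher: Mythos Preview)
Your proposal is correct and follows essentially the same route as the paper: take $m=1$ in Lemma~\ref{lemma:bound_on_A_alternative} and then bound $\eta_{p}(G_{p})\geq\inf_{z_{p-1}\in S_{p-1}}M_{p}(G_{p})(z_{p-1})$ and $\eta_{p+1}({\rm d}z_{p+1})\geq\inf_{z'_{p}\in S_{p}}M_{p+1}(z'_{p},{\rm d}z_{p+1})$ using that the relevant marginals of $\pi_{p-1}$ and $\pi_{p}$ are supported on $S_{p-1}$ and $S_{p}$. Your write-up is in fact more careful than the paper's, which states these two inequalities without justification; you also correctly note that the product in the proof of Lemma~\ref{lemma:bound_on_A_alternative} runs only up to $p+m-1$, so that for $m=1$ only the single ratio $\bar{G}_{p}/\eta_{p}(G_{p})$ appears.
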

\begin{proof}
The result follows from taking $m=1$ in Lemma~\ref{lemma:bound_on_A_alternative}
and noting that\linebreak{}
 $\eta_{p+1}({\rm d}z_{p+1})\geq\inf_{z_{p}\in S_{p}}M_{p+1}(z_{p},{\rm d}z_{p+1})$
and $\eta_{p}(G_{p})\geq\inf_{z_{p-1}\in S_{p-1}}M_{p}(G_{p})(z_{p-1})$.
\end{proof}
When considering the i-cSMC kernel associated with the extended Feynman--Kac
model of Section~\ref{sub:Atomic-extensions-of}, it is possible
that the choice of $A$ satisfying (\ref{eq:condition_icsmcA}) for
the original model does not simultaneously satisfy the associated
condition for the extended model. Indeed, one now requires a constant
$\check{A}<\infty$ such that 
\begin{equation}
\sup_{z\in\mathsf{Z}}\frac{\check{Q}_{p,p+k}(1)(z)}{\check{\eta}_{p}\check{Q}_{p,p+k}(1)}\leq\check{A},\qquad p\in\{2,\ldots,n\},k\in\{1,\ldots,n-p+1\},\label{eq:condition_icsmcA_check}
\end{equation}
where it can be seen that $\check{A}$ is equal to $\check{F}$ defined
in Section~\ref{sub:Atomic-extensions-of}.
\begin{lem}
\label{lem:Acheck}Let $E\geq1$ be such that for each $p\in\{1,\ldots,n\}$,
$E^{-1}\leq\psi_{p}/\eta_{p}(G_{p})\leq E$. Then (\ref{eq:condition_icsmcA_check})
holds with $\check{A}=AE^{n}$, where $A$ satisfies (\ref{eq:condition_icsmcA}).\end{lem}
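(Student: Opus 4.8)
The plan is to reduce (\ref{eq:condition_icsmcA_check}) to (\ref{eq:condition_icsmcA}) by a factor-by-factor comparison of the extended and unextended quantities. The first and easiest observation is that, seen from a state $z\in\mathsf{Z}$, the extended model is indistinguishable from the original one: $\check{G}_{q}(z)=G_{q}(z)$ and $\check{M}_{q}(z,\cdot)=M_{q}(z,\cdot)$ for $z\in\mathsf{Z}$, and $M_{q}(z,\cdot)$ is supported on $\mathsf{Z}$, so a straightforward induction on the number of transition steps gives $\check{Q}_{p,p+k}(1)(z)=Q_{p,p+k}(1)(z)$ for every $z\in\mathsf{Z}$. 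The task therefore reduces to comparing the normalizing constants $\check{\eta}_{p}\check{Q}_{p,p+k}(1)$ and $\eta_{p}Q_{p,p+k}(1)$.

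For the normalizing constants I would use that $(\check{\mu},\check{\mathcal{M}},\check{\mathcal{G}})$ is itself a Feynman--Kac model, so the identities $\check{\eta}_{p}(\check{G}_{p})=\check{\gamma}_{p}(1)/\check{\gamma}_{p-1}(1)$ and $\check{\eta}_{p}\check{Q}_{p,p+k}(1)=\check{\gamma}_{p+k-1}(1)/\check{\gamma}_{p-1}(1)$ recorded at the start of this appendix apply to it; telescoping, $\check{\eta}_{p}\check{Q}_{p,p+k}(1)=\prod_{q=p}^{p+k-1}\check{\eta}_{q}(\check{G}_{q})$ and likewise $\eta_{p}Q_{p,p+k}(1)=\prod_{q=p}^{p+k-1}\eta_{q}(G_{q})$. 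Splitting any path of the extended model according to whether it starts in $\mathsf{Z}$ or at the absorbing point $a$ yields $\check{\gamma}_{m}(1)=(1-b)\gamma_{m}(1)+b\prod_{i=1}^{m}\psi_{i}$, so $\check{\eta}_{q}(\check{G}_{q})$ is a convex combination of $\eta_{q}(G_{q})=\gamma_{q}(1)/\gamma_{q-1}(1)$ and $\psi_{q}$, with weights proportional to $(1-b)\gamma_{q-1}(1)$ and $b\prod_{i=1}^{q-1}\psi_{i}$. Since the hypothesis $E^{-1}\le\psi_{q}/\eta_{q}(G_{q})\le E$ places both $\eta_{q}(G_{q})$ and $\psi_{q}$ in $[E^{-1}\eta_{q}(G_{q}),E\eta_{q}(G_{q})]$, so is their convex combination; in particular $\check{\eta}_{q}(\check{G}_{q})\ge E^{-1}\eta_{q}(G_{q})$, i.e.\ $\eta_{q}(G_{q})/\check{\eta}_{q}(\check{G}_{q})\le E$.

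Assembling the pieces, for $z\in\mathsf{Z}$,
\[
\frac{\check{Q}_{p,p+k}(1)(z)}{\check{\eta}_{p}\check{Q}_{p,p+k}(1)}=\frac{Q_{p,p+k}(1)(z)}{\eta_{p}Q_{p,p+k}(1)}\prod_{q=p}^{p+k-1}\frac{\eta_{q}(G_{q})}{\check{\eta}_{q}(\check{G}_{q})}\le A\,E^{k}\le A\,E^{n},
\]
using (\ref{eq:condition_icsmcA}) for the first factor, the bound above for the product, and $E\ge1$ with $k\le n$ for the last inequality; taking the supremum over $z\in\mathsf{Z}$ gives (\ref{eq:condition_icsmcA_check}) with $\check{A}=AE^{n}$. (The same chain of inequalities also covers $z=a$, where the numerator is $\prod_{q=p}^{p+k-1}\psi_{q}$ and $\check{\eta}_{q}(\check{G}_{q})\ge E^{-1}\psi_{q}$.)

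The one place where a naive approach fails — and the step I would flag as the crux — is the treatment of the normalizing constants: bounding $\check{\gamma}_{m}(1)/\gamma_{m}(1)$ directly from $E^{-m}\gamma_{m}(1)\le\prod_{i=1}^{m}\psi_{i}\le E^{m}\gamma_{m}(1)$ would leave an accumulated factor $E^{2p+k-2}$, which is \emph{not} dominated by $E^{n}$ over the admissible range of $(p,k)$. Telescoping \emph{both} normalizing constants into products of one-step factors \emph{before} estimating is precisely what keeps the exponent at $k\le n$. Specialised to $A=F$ this yields Proposition~\ref{prop:Fcheck_F}, since $\check{A}=\check{F}$ and $A=F$ are exactly the constants in (\ref{eq:condition_icsmcA_check}) and (\ref{eq:condition_icsmcA}).
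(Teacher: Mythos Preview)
Your proof is correct and reaches the same bound $\check{A}=AE^{n}$ (in fact $AE^{k}$) as the paper, but by a genuinely different route. The paper does not telescope $\check{\eta}_{p}\check{Q}_{p,p+k}(1)$ into one-step factors; instead it splits $\check{\eta}_{p}$ at the single time $p$ into its atomic and non-atomic parts,
\[
\check{\eta}_{p}\check{Q}_{p,p+k}(1)=\check{\eta}_{p}(\{a\})\prod_{q=p}^{p+k-1}\psi_{q}+\check{\eta}_{p}(\mathsf{Z})\,\eta_{p}Q_{p,p+k}(1),
\]
and then bounds the ratio of interest by $\max\bigl\{1,\prod_{q}\psi_{q}\big/\eta_{p}Q_{p,p+k}(1)\bigr\}$ (or its reciprocal, depending on whether $z=a$ or $z\in\mathsf{Z}$), which is at most $E^{k}\le E^{n}$ after telescoping only the \emph{original} normalizing constant $\eta_{p}Q_{p,p+k}(1)=\prod_{q}\eta_{q}(G_{q})$. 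Your argument instead telescopes \emph{both} normalizing constants and then uses the neat observation that each $\check{\eta}_{q}(\check{G}_{q})=\check{\gamma}_{q}(1)/\check{\gamma}_{q-1}(1)$ is a genuine convex combination of $\eta_{q}(G_{q})$ and $\psi_{q}$, so the factor $E$ appears once per step without any intermediate $\max$ trick. Both proofs are of comparable length; yours makes the ``one $E$ per time step'' mechanism more transparent, while the paper's keeps the atomic/non-atomic decomposition of $\check{\eta}_{p}$ explicitly in view.
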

\begin{proof}
We can write
\begin{eqnarray*}
\check{\eta}_{p}\check{Q}_{p,p+k}(1) & = & \check{\eta}_{p}(\{a\})\prod_{q=p}^{p+k-1}\psi_{q}+\check{\eta}_{p}(\mathsf{Z})\eta_{p}Q_{p,p+k}(1)\\
 & = & \eta_{p}Q_{p,p+k}(1)\left(\check{\eta}_{p}(\{a\})\frac{\prod_{q=p}^{p+k-1}\psi_{q}}{\eta_{p}Q_{p,p+k}(1)}+\check{\eta}_{p}(\mathsf{Z})\right).
\end{eqnarray*}
Now we have for $z=a$ 
\[
\frac{\check{Q}_{p,p+k}(1)(a)}{\check{\eta}_{p}\check{Q}_{p,p+k}(1)}=\frac{\prod_{q=p}^{p+k-1}\psi_{q}}{\check{\eta}_{p}(\{a\})\prod_{q=p}^{p+k-1}\psi_{q}+\check{\eta}_{p}(\mathsf{Z})\eta_{p}Q_{p,p+k}(1)}\leq\max\left\{ 1,\frac{\prod_{q=p}^{p+k-1}\psi_{q}}{\eta_{p}Q_{p,p+k}(1)}\right\} ,
\]
whereas for $z\in\mathsf{Z}$ 
\begin{eqnarray*}
\frac{\check{Q}_{p,p+k}(1)(z)}{\check{\eta}_{p}\check{Q}_{p,p+k}(1)} & = & \frac{Q_{p,p+k}(1)(z)}{\eta_{p}Q_{p,p+k}(1)}\left(\check{\eta}_{p}(\{a\})\frac{\prod_{q=p}^{p+k-1}\psi_{q}}{\eta_{p}Q_{p,p+k}(1)}+\check{\eta}_{p}(\mathsf{Z})\right)^{-1}\\
 & \leq & A\max\left\{ 1,\frac{\eta_{p}Q_{p,p+k}(1)}{\prod_{q=p}^{p+k-1}\psi_{q}}\right\} .
\end{eqnarray*}
Since $\max\left\{ 1,\frac{\eta_{p}Q_{p,p+k}(1)}{\prod_{q=p}^{p+k-1}\psi_{q}}\right\} \leq E^{n}$
and $\max\left\{ 1,\frac{\prod_{q=p}^{p+k-1}\psi_{q}}{\eta_{p}Q_{p,p+k}(1)}\right\} \leq E^{n}$
the result is immediate.
\end{proof}

\section{An alternative solution to Problem~\ref{prob:bf2} via a generic
solution to the sign problem\label{sec:An-alternative-solution}}

We have $(1-p)/(1-\epsilon)=1-(p-\epsilon)/(1-\epsilon)$ so simulation
of a $(1-p)/(1-\epsilon)$-coin is possible whenever one can simulate
a $(p-\epsilon)/(1-\epsilon)$-coin. This in turn is feasible when
one can simulate a random variable $V$ satisfying $\mathsf{E}(V)=p$
and $\mathsf{P}(\epsilon\leq V\leq1)=1$, since the random variable
$\mathbb{I}(U<\frac{V-\epsilon}{1-\epsilon})$ is a $(p-\epsilon)/(1-\epsilon)$-coin
flip, where $U$ is an independent uniform random variable on $[0,1]$.
We now show that unbiased constrained estimation is possible in a
more general sense, and a solution to Problem~\ref{prob:bf2} follows
naturally whenever $\epsilon\leq1/2$.

The first step is a generic solution to what is often referred to
as a ``sign problem''. In particular, given a probability measure
$\mu:\mathcal{B}(\mathsf{X})\rightarrow[0,1]$ whose distribution
one can obtain any number of i.i.d.\ samples from, and a function
$\varphi:\mathsf{X}\rightarrow\mathbb{R}$ such that $\mu(\varphi)>0$,
one wishes to simulate a random variable with expectation $\mu(\varphi)$
that is almost surely non-negative. The non-existence of a general
procedure for solving this problem without any assumptions on $\mu$
and $\varphi$ has been shown in \citet{jacob2013non} following other
non-existence results such as \citet{keane1994bernoulli} and \citet{henderson2003nonexistence}.
We provide a positive result in the case where $\varphi$ is a bounded
function and $\mu(\varphi)\geq\delta>0$ for a known constant $\delta$.
Our scheme relies on the probability measure
\[
\mu_{\left|\varphi\right|}({\rm d}x):=\frac{\mu({\rm d}x)\left|\varphi(x)\right|}{\mu(\left|\varphi\right|)},
\]
which can be sampled from by rejection whenever $\left\Vert \varphi\right\Vert :=\sup_{x\in\mathsf{X}}\left|\varphi(x)\right|<\infty$.
A general algorithm for solving the sign problem is provided in Algorithm~\ref{alg:generic_sign_problem},
whose validity is established in Proposition~\ref{prop:general_bounded_signed_lowerbound}.

\begin{algorithm}

\protect\caption{Simulate an unbiased and almost surely positive estimate of $\mu(\varphi)\geq\delta>0$\label{alg:generic_sign_problem}}

\begin{enumerate}
\item Sample $\xi\sim\mu$.
\item Flip a $2q$-coin $Y$, where $2q\leq1-\delta/\left\Vert \varphi\right\Vert $,
using a Bernoulli factory using flips of a $q$-coin, each such flip
being simulated by sampling $\zeta\sim\mu_{\left|\varphi\right|}$
and outputting $\mathbb{I}({\rm sign}(\varphi(\zeta))=-1)$.
\item Output $\left|\varphi(\xi)\right|(1-Y)$.\end{enumerate}
\end{algorithm}

\begin{prop}
\label{prop:general_bounded_signed_lowerbound}Let $\varphi$ satisfy
$\left\Vert \varphi\right\Vert <\infty$ and $\mu(\varphi)\geq\delta>0$.
Then the output of Algorithm~\ref{alg:generic_sign_problem} is a
random variable $W$ such that $\mathsf{E}(W)=\mu(\varphi)$ and $\mathsf{P}(0\leq W\leq\left\Vert \varphi\right\Vert )=1$.\end{prop}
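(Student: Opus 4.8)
The plan is to verify directly that the random variable $W$ produced by Algorithm~\ref{alg:generic_sign_problem} has the two claimed properties: $\mathsf{E}(W)=\mu(\varphi)$ and $\mathsf{P}(0\le W\le\left\Vert\varphi\right\Vert)=1$. The almost-sure bound is immediate and I would dispatch it first: $W=\left|\varphi(\xi)\right|(1-Y)$ where $Y\in\{0,1\}$, so $W$ is either $0$ or $\left|\varphi(\xi)\right|$, and the latter is bounded above by $\left\Vert\varphi\right\Vert$ by definition of the sup norm; hence $0\le W\le\left\Vert\varphi\right\Vert$ always. I should also note that the algorithm is well-defined: $\mu_{|\varphi|}$ can be sampled by rejection since $\left\Vert\varphi\right\Vert<\infty$, and the $2q$-coin is realizable by a Bernoulli factory because $2q\le 1-\delta/\left\Vert\varphi\right\Vert<1$ — this requires checking that $2q$ lies in $[0,1)$, which follows once we identify $q$ below and use $\mu(\varphi)\ge\delta>0$.

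The core computation is the expectation. The key identity is that the probability a sample from $\mu_{|\varphi|}$ has negative sign is
\[
q=\mathsf{P}\left({\rm sign}(\varphi(\zeta))=-1\right)=\frac{\mu\left(\mathbb{I}(\varphi<0)|\varphi|\right)}{\mu(|\varphi|)}=\frac{\mu(\varphi^{-})}{\mu(|\varphi|)},
\]
where $\varphi^{-}=\max\{-\varphi,0\}$ denotes the negative part. Since $\varphi=\varphi^{+}-\varphi^{-}$ and $|\varphi|=\varphi^{+}+\varphi^{-}$, we get $\mu(|\varphi|)(1-2q)=\mu(|\varphi|)-2\mu(\varphi^{-})=\mu(\varphi^{+})-\mu(\varphi^{-})=\mu(\varphi)$. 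Now condition on $\xi$ and $Y$ being independent given the sign-coin flips (they are: $\xi$ is drawn in Step~1 and $Y$ in Step~2 uses fresh samples from $\mu_{|\varphi|}$), so $\mathsf{E}(1-Y)=1-2q$ and
\[
\mathsf{E}(W)=\mathsf{E}\left(\left|\varphi(\xi)\right|\right)\cdot\mathsf{E}(1-Y)=\mu(|\varphi|)\,(1-2q)=\mu(\varphi).
\]
This also confirms $\mu(\varphi)\ge\delta$ forces $1-2q=\mu(\varphi)/\mu(|\varphi|)\ge\delta/\left\Vert\varphi\right\Vert$, i.e. $2q\le 1-\delta/\left\Vert\varphi\right\Vert$, so the bound imposed in Step~2 on the $2q$-coin is consistent and such a coin exists (the function $q\mapsto 2q$ is constant or continuous and satisfies the Keane--O'Brien condition on $[0,(1-\delta/\left\Vert\varphi\right\Vert)/2]$).

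I anticipate the main obstacle to be handling the degenerate cases and the independence bookkeeping cleanly rather than any deep difficulty. Specifically, if $\mu(|\varphi|)=0$ then $\varphi=0$ $\mu$-a.e., contradicting $\mu(\varphi)\ge\delta>0$, so $\mu(|\varphi|)>0$ and $\mu_{|\varphi|}$ is well-defined; one should state this at the outset. Also, if $q=0$ (i.e. $\varphi\ge0$ $\mu$-a.e.) the Bernoulli factory trivially outputs $0$ and $W=\left|\varphi(\xi)\right|=\varphi(\xi)$, consistent with the formula. The independence point to be careful about: $\xi$ and $Y$ must be generated using independent randomness — Step~1 and Step~2 draw from $\mu$ and from $\mu_{|\varphi|}$ respectively using separate sample streams — so that the expectation factorizes as above. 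I would close the proof by remarking that to solve Problem~\ref{prob:bf2} one takes $\mathsf{X}$ trivial, $\varphi\equiv p$ sampled as a $p$-coin average shifted appropriately, or more directly notes that the construction in the surrounding text (simulating $V$ with $\mathsf{E}(V)=p$, $\mathsf{P}(\epsilon\le V\le1)=1$, then outputting $\mathbb{I}(U<(V-\epsilon)/(1-\epsilon))$) combined with Algorithm~\ref{alg:generic_sign_problem} yields the $(1-p)/(1-\epsilon)$-coin whenever $\epsilon\le 1/2$.
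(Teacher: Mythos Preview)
Your argument is correct and is essentially the same as the paper's: both factor $W=\left|\varphi(\xi)\right|(1-Y)$ into independent pieces, identify $\mathsf{E}\left|\varphi(\xi)\right|=\mu(|\varphi|)$ and $\mathsf{E}(1-Y)=1-2q=\mu(\varphi)/\mu(|\varphi|)$, and verify the Bernoulli-factory constraint $2q\le 1-\delta/\|\varphi\|$. Your use of the positive/negative part decomposition $\mu(|\varphi|)(1-2q)=\mu(\varphi^{+})-\mu(\varphi^{-})$ is just a notational variant of the paper's ${\rm sign}(\varphi)$ computation, and you are in fact slightly more careful than the paper about the degenerate cases $\mu(|\varphi|)=0$ and $q=0$ and about stating the independence of $\xi$ and $Y$ explicitly; your closing remark about Problem~\ref{prob:bf2} is extraneous to this proposition but harmless.
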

\begin{proof}
We denote the output of Algorithm~\ref{alg:generic_sign_problem}
as $W=\left|\varphi(\xi)\right|(1-Y)$. Since $\mu(\varphi)>0$ implies
that $\mu(\left|\varphi\right|)>0$, we can express $\mu(\varphi)$
as
\[
\mu(\varphi)=\mu(\left|\varphi\right|)\frac{\mu(\varphi)}{\mu(\left|\varphi\right|)}=\mu(\left|\varphi\right|)\int_{\mathsf{X}}{\rm sign}(\varphi(x))\mu_{\left|\varphi\right|}({\rm d}x).
\]
The random variable $W$ is the product of independent unbiased estimates
of $\mu(\left|\varphi\right|)$ and $\int_{\mathsf{X}}{\rm sign}(\varphi(x))\mu_{\left|\varphi\right|}({\rm d}x)$
respectively. That $\mathsf{E}\left|\varphi(\xi)\right|=\mu(\left|\varphi\right|)$
is immediate, and it is clear that $\mathsf{P}(0\leq\left|\varphi(\xi)\right|\leq\left\Vert \varphi\right\Vert )=1$.
We now show that $\mathsf{E}(1-Y)=\int_{\mathsf{X}}{\rm sign}(\varphi(x))\mu_{\left|\varphi\right|}({\rm d}x)$.
By construction,
\[
\int_{\mathsf{X}}{\rm sign}(\varphi(x))\mu_{\left|\varphi\right|}({\rm d}x)=\mathsf{E}\left[{\rm sign}(\varphi(\zeta))\right]=\mu(\varphi)/\mu(\left|\varphi\right|)>0,
\]
since $\mu(\varphi)\geq\delta>0$. Furthermore, ${\rm sign}(\varphi(\zeta))$
is almost surely valued in $\{-1,+1\}$ since $\zeta\sim\mu_{\left|\varphi\right|}$,
and so
\[
\mathsf{E}\left[{\rm sign}(\varphi(\zeta))\right]=p-(1-p)=2p-1,
\]
where $p:=\mathsf{P}({\rm sign}(\varphi(\zeta))=1)>\frac{1}{2}$.
From the bound 
\[
2p-1=\mu(\varphi)/\mu(\left|\varphi\right|)\geq\delta/\left\Vert \varphi\right\Vert ,
\]
it follows that $p\geq\frac{1}{2}+\frac{\delta}{2\left\Vert \varphi\right\Vert }$.
Since $q=\mathsf{P}({\rm sign}(\varphi(\zeta))=-1)=(1-p)$, simulating
a $(2p-1)$-coin is equivalent to simulating a $(1-2q)$-coin with
$q\leq\frac{1}{2}-\frac{\delta}{2\left\Vert \varphi\right\Vert }$,
and this is exactly what $(1-Y)$ is. It follows that $\mathsf{E}(W)=\mu(\varphi)$
and $\mathsf{P}(0\leq W\leq\left\Vert \varphi\right\Vert )=1$.\end{proof}
\begin{rem*}
It is clear that a number of variance reduction techniques can be
used to modify Algorithm~\ref{alg:generic_sign_problem}. For example,
one could simply average a number of outputs of the algorithm. In
addition, one could average different values of $\left|\varphi(\xi)\right|$
and/or the $(1-2q)$-coins.
\end{rem*}
An immediate Corollary of Proposition~\ref{prop:general_bounded_signed_lowerbound}
is that more general constrained unbiased estimation problems can
be solved.
\begin{cor}
\label{cor:constrained_unbiased}Let $\varphi$ satisfy $a\leq\varphi(x)\leq c$
and assume we can simulate random variables according to $\mu$. If
$\mu(\varphi)\geq\delta>b$ for some known $\delta$ then we can simulate
a random variable $W$ such that $\mathsf{E}(W)=\mu(\varphi)$ and
$\mathsf{P}(b\leq W\leq b+\max\{b-a,c-b\})=1$.\end{cor}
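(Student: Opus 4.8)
The plan is to reduce Corollary~\ref{cor:constrained_unbiased} directly to Proposition~\ref{prop:general_bounded_signed_lowerbound} by a simple translation. First I would introduce the shifted function $\varphi_b := \varphi - b$, so that $a - b \le \varphi_b(x) \le c - b$ for all $x \in \mathsf{X}$, and $\mu(\varphi_b) = \mu(\varphi) - b \ge \delta - b =: \delta' > 0$ with $\delta'$ known. Since $\left\Vert \varphi_b\right\Vert = \sup_{x \in \mathsf{X}} |\varphi(x) - b| < \infty$, the hypotheses of Proposition~\ref{prop:general_bounded_signed_lowerbound} are satisfied for the pair $(\varphi_b, \delta')$, and we can sample from $\mu_{|\varphi_b|}$ by rejection exactly as required there.

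Second, I would apply Proposition~\ref{prop:general_bounded_signed_lowerbound} (equivalently, run Algorithm~\ref{alg:generic_sign_problem} with $\varphi_b$ and $\delta'$ in place of $\varphi$ and $\delta$) to obtain a random variable $W'$ with $\mathsf{E}(W') = \mu(\varphi_b)$ and $\mathsf{P}(0 \le W' \le \left\Vert \varphi_b\right\Vert) = 1$. Setting $W := W' + b$, linearity of expectation gives $\mathsf{E}(W) = \mu(\varphi_b) + b = \mu(\varphi)$, and almost surely $b \le W \le b + \left\Vert \varphi_b\right\Vert$.

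Third, it remains only to match $\left\Vert \varphi_b\right\Vert$ to the envelope claimed in the statement. From $a \le \varphi(x) \le c$ we have $|\varphi(x) - b| \le \max\{|a-b|, |c-b|\}$, and since $b < \mu(\varphi) \le c$ we always have $b < c$. In the substantive regime $a \le b \le c$ this yields $\left\Vert \varphi_b\right\Vert \le \max\{b - a, c - b\}$, giving $\mathsf{P}(b \le W \le b + \max\{b-a, c-b\}) = 1$ as required; in the degenerate case $b < a$ the function $\varphi$ itself already satisfies $b \le \varphi(x) \le c \le b + (c-b) \le b + \max\{b-a, c-b\}$, so one may simply output $\varphi(\xi)$ with $\xi \sim \mu$.

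The only point requiring care is this last bookkeeping step — identifying $\sup_{x}|\varphi(x) - b|$ with the quantity $\max\{b - a, c - b\}$ appearing in the statement and disposing of the case $b \le a$ — but no genuine obstacle arises, since all the substantive content (unbiased, sign-constrained estimation with an almost-sure lower bound) is already delivered by Proposition~\ref{prop:general_bounded_signed_lowerbound}.
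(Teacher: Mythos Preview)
Your proposal is correct and follows exactly the paper's approach: define the shifted function $\tilde\varphi=\varphi-b$, apply Proposition~\ref{prop:general_bounded_signed_lowerbound} to obtain a nonnegative unbiased estimate $\tilde W$ of $\mu(\tilde\varphi)$, and set $W=\tilde W+b$. Your extra bookkeeping for the case $b<a$ is harmless but unnecessary, since in that regime one still has $|\varphi(x)-b|\le c-b=\max\{b-a,c-b\}$, so the same translation argument covers it.
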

\begin{proof}
Let $\tilde{\varphi}=\varphi-b$. Then $\left\Vert \tilde{\varphi}\right\Vert =\max\{b-a,c-b\}$
and by Proposition~\ref{prop:general_bounded_signed_lowerbound}
we can simulate a random variable $\tilde{W}$ with $\mathsf{E}\tilde{W}=\mu(\tilde{\varphi})$
and $\mathsf{P}(0\leq\tilde{W}\leq\left\Vert \tilde{\varphi}\right\Vert )=1$.
We conclude by letting $W=\tilde{W}+b$.
\end{proof}
It follows from Corollary~\ref{cor:constrained_unbiased} that if
we can simulate a $p$-coin with $p\geq\delta>\epsilon$, then we
can simulate a random variable $V$ satisfying $\mathsf{E}(V)=p$
and $\mathsf{P}(\epsilon\leq V\leq1)=1$ when $\epsilon\leq1/2$,
by taking $\mu$ to be ${\rm Bernoulli}(p)$, $\varphi(x)=x$, $a=0$,
$b=\epsilon$ and $c=1$. While we do not recommend the use of Algorithm~\ref{alg:generic_sign_problem}
to obtain a solution to Problem~\ref{prob:bf2}, it may be useful
more generally. As a simple example, consider the case where $\mu$
is the uniform distribution on $[0,3\pi]$ and $\varphi=\sin$. Then
$\mu(\varphi)=2/(3\pi)>0$ and one use Algorithm~\ref{alg:generic_sign_problem}
to produce a random variable with expectation $\mu(\varphi)$ and
which is almost surely in $[0,1]$.

{\small{}\bibliographystyle{Chicago}
\bibliography{atomicsmc}
}
\end{document}